\documentclass[11pt]{article}

\usepackage[margin=0.88in,headheight=24pt]{geometry}
\usepackage[T1]{fontenc}
\usepackage{lmodern}
\usepackage{microtype}
\usepackage{graphicx}
\usepackage{multirow}
\usepackage{amsmath,amssymb,amsfonts}
\usepackage{amsthm}
\usepackage{mathrsfs}
\usepackage[title]{appendix}
\usepackage[table]{xcolor}
\usepackage{textcomp}
\usepackage{manyfoot}
\usepackage{booktabs,longtable,array,tabularx}
\usepackage{algorithm}
\usepackage{algorithmicx}
\usepackage{algpseudocode}
\usepackage{listings}
\usepackage{hyperref}
\usepackage{enumitem}
\usepackage{caption}
\usepackage{float}
\usepackage{subcaption}
\usepackage[numbers]{natbib}
\usepackage{titlesec}
\usepackage{fancyhdr}
\usepackage[most]{tcolorbox}
\newtcolorbox{revblock}{enhanced,breakable,colback=yellow!24,colframe=yellow!24,boxrule=0pt,arc=0pt,left=1mm,right=1mm,top=0.6mm,bottom=0.6mm,before skip=2pt,after skip=2pt}

\usepackage{tikz}
\usetikzlibrary{arrows.meta,positioning,calc}
\usepackage{lastpage}
\usepackage{etoolbox}
\usetikzlibrary{fit}

\definecolor{REEBLRAMaroon}{HTML}{6F1726}
\definecolor{REEBLRAMaroonDark}{HTML}{4F0F1A}
\definecolor{REEBLRAGreen}{HTML}{0F6B3D}
\definecolor{REEBLRAGreenDark}{HTML}{084B2B}
\definecolor{REEBLRAGold}{HTML}{D6A326}
\definecolor{REEBLRAGoldSoft}{HTML}{FFF3C7}
\definecolor{REEBLRABlue}{HTML}{1F5F99}
\definecolor{REEBLRABlueSoft}{HTML}{EAF4FF}
\definecolor{REEBLRACream}{HTML}{FFFAF0}
\definecolor{REEBLRAInk}{HTML}{24302A}
\definecolor{REEBLRAMuted}{HTML}{64706A}
\definecolor{REEBLRALightGreen}{HTML}{EDF8F1}
\definecolor{REEBLRALightMaroon}{HTML}{FAEEF1}

\hypersetup{
  colorlinks=true,
  linkcolor=REEBLRAMaroon,
  urlcolor=REEBLRABlue,
  citecolor=REEBLRAGreenDark,
  pdfauthor={Jomar F. Rabajante et al.},
  pdftitle={Learning Epidemic Threshold Equations from Simulation: REEBLRA, an Interpretable Bayesian Classification Framework for Reproduction Numbers.},
  pdfsubject={Methodological article introducing REEBLRA; the accompanying Shiny application is an implementation and demonstration}
}

\color{REEBLRAInk}
\setlist[itemize]{topsep=0.25em,itemsep=0.15em,leftmargin=1.55em}
\setlist[enumerate]{topsep=0.25em,itemsep=0.18em,leftmargin=1.75em}

\titleformat{\section}
  {\Large\bfseries\color{REEBLRAMaroon}}
  {\thesection}{0.65em}{}
  [\vspace{-0.55em}\color{REEBLRAGold}\rule{\textwidth}{1.1pt}]
\titlespacing*{\section}{0pt}{1.55em}{0.70em}
\titleformat{name=\section,numberless}
  {\Large\bfseries\color{REEBLRAMaroon}}{}{0pt}{}
  [\vspace{-0.55em}\color{REEBLRAGold}\rule{\textwidth}{1.1pt}]
\titleformat{\subsection}
  {\large\bfseries\color{REEBLRAGreenDark}}
  {\thesubsection}{0.60em}{}
\titleformat{\subsubsection}
  {\normalsize\bfseries\color{REEBLRAMaroonDark}}
  {\thesubsubsection}{0.55em}{}

\renewcommand{\headrulewidth}{0.7pt}
\renewcommand{\headrule}{\hbox to\headwidth{\color{REEBLRAMaroon}\leaders\hrule height \headrulewidth\hfill}}
\renewcommand{\footrulewidth}{0.35pt}
\renewcommand{\footrule}{\hbox to\headwidth{\color{REEBLRAGreen}\leaders\hrule height \footrulewidth\hfill}}

\providecommand{\backmatter}{}

\newtcolorbox{abstractbox}{
  enhanced,breakable,arc=2mm,boxrule=0.8pt,
  colback=white,colframe=white,
  borderline west={3pt}{0pt}{white},
  title=Abstract,fonttitle=\bfseries\large,
  colbacktitle=white,coltitle=black
}
\renewenvironment{abstract}{\begin{abstractbox}\small}{\end{abstractbox}}

\newtcolorbox{methodbox}[1][]{
  enhanced,arc=2mm,boxrule=0.7pt,
  colback=REEBLRALightGreen,colframe=REEBLRAGreen,
  borderline west={3pt}{0pt}{REEBLRAGreen},
  title=#1,fonttitle=\bfseries,coltitle=white,colbacktitle=REEBLRAGreen
}
\newtcolorbox{cautionbox}[1][]{
  enhanced,breakable,arc=2mm,boxrule=0.7pt,
  colback=REEBLRALightMaroon,colframe=REEBLRAMaroon,
  borderline west={3pt}{0pt}{REEBLRAMaroon},
  title=#1,fonttitle=\bfseries,coltitle=white,colbacktitle=REEBLRAMaroon
}

\lstdefinestyle{rcode}{
  language=R,
  basicstyle=\ttfamily\small,
  breaklines=true,
  columns=fullflexible,
  frame=single,
  rulecolor=\color{REEBLRAGreen},
  backgroundcolor=\color{REEBLRACream},
  showstringspaces=false,
  commentstyle=\color{REEBLRAGreenDark},
  keywordstyle=\color{REEBLRAMaroon}\bfseries,
  stringstyle=\color{REEBLRABlue}
}

\newcommand{\maybeincludegraphics}[2][]{%
\IfFileExists{#2}{\includegraphics[#1]{#2}}{%
\fbox{\begin{minipage}[c][0.25\textheight][c]{0.88\linewidth}\centering\textbf{Figure file not found:} #2\\[0.5em]
The figure environment, caption, and label are retained. Place the image file in the LaTeX working directory to replace this placeholder.\end{minipage}}}}

\newcommand{\tablehead}{\rowcolor{REEBLRAMaroon!92}}
\newcommand{\HeaderCell}[1]{\textcolor{white}{\bfseries #1}}

\tikzset{
  compartment/.style={draw=REEBLRAMaroon,very thick,rounded corners=2pt,minimum width=15mm,minimum height=9mm,fill=REEBLRACream,font=\bfseries},
  flow/.style={-{Latex[length=2.5mm]},very thick,draw=REEBLRAGreenDark},
  loss/.style={-{Latex[length=2.2mm]},thick,draw=REEBLRAMuted},
  flowlabel/.style={font=\small,fill=white,inner sep=1pt,text=REEBLRAInk}
}

\theoremstyle{definition}

\newtheorem{remark}{Remark}
\newtheorem{proposition}{Proposition}
\newtheorem{corollary}{Corollary}

\begin{document}

\begin{center}
\begin{tcolorbox}[
  enhanced,arc=3mm,boxrule=0pt,
  colback=white,colframe=white,
  borderline south={1.1 pt}{0pt}{REEBLRAGold},
  left=5mm,right=5mm,top=4mm,bottom=4mm
]
\centering
{\color{REEBLRAMaroonDark}\fontsize{17.5}{21}\selectfont\bfseries
Learning Epidemic Threshold Equations\\ from Simulation: REEBLRA, an Interpretable Bayesian Classification Framework for Reproduction Numbers\par}
\end{tcolorbox}

\vspace{0.45em}
{\large\bfseries Jomar F. Rabajante$^{1,*}$, Norvin P. Bansilan$^{1}$,\\
Maria Czarina T. Lagura-Arreza$^{1}$, Destiny SM. Lutero$^{1}$, and Emerson R. Rico$^{1}$\par}
\vspace{0.35em}
{$^{1}$Institute of Mathematical Sciences, University of the Philippines Los Ba\~nos, Laguna 4031, Philippines\par}
{\small $^{*}$Corresponding author: \href{mailto:jfrabajante@up.edu.ph}{jfrabajante@up.edu.ph}\par}
\end{center}

\vspace{0.35em}

\begin{abstract}
The basic reproduction number, \(R_0\), is central to infectious-disease modelling, but deriving an analytic expression can become difficult for high-dimensional, nonlinear, or structurally complex models. We introduce REEBLRA (\(R_0\) Equation Estimation using Bayesian Logistic Regression Algorithm), an interpretable simulation-to-threshold framework for learning explicit symbolic surrogate equations that identify parameter combinations associated with epidemic invasion or non-invasion. Given $\theta$ as vector of model parameters, rather than globally regressing the magnitude of \(R_0\), REEBLRA learns \(\Pr\{R_0(\boldsymbol\theta)>c\}\), directly targeting epidemiologically relevant thresholds such as \(R_0=1\). REEBLRA yields two complementary outputs: a binary threshold equation that distinguishes \(R_0>1\) from \(R_0<1\), and a multiclass bracket model that uses successive threshold equations to approximate the numerical magnitude of \(R_0\), interpreted as the expected number of secondary infections generated by a typical infectious individual in a fully susceptible population. This makes the framework useful for rapid parameter screening, sensitivity interpretation, threshold classification, exploratory analysis of complex simulators, and approximate reproduction-number estimation when direct analytic derivation is cumbersome. Bayesian logistic regression with standardized polynomial-like feature bases and proper weakly informative priors provides regularization, posterior uncertainty, and finite estimates under complete or quasi-complete separation. We further establish a mechanistic--surrogate concordance principle relating the {full fitted-score gradient} to the local gradient of \(\log R_0\) {under explicit regularity and tangency assumptions}, providing a basis for conditional interpretation of local threshold directions. Across seven {analytic-label benchmarks associated with ODE models}, including a 15-compartment five-group SIR system, five-fold binary AUC was 0.999--1.000 and ordered-bracket accuracy was 0.951--0.998. REEBLRA complements, rather than replaces, next-generation-matrix and other mechanistic derivations, providing an interpretable bridge between simulation, machine learning, threshold analysis, and reproduction-number modelling. An accompanying Shiny application supports practical implementation.
\end{abstract}

\noindent\textbf{Keywords:} epidemic threshold; reproduction number; Bayesian logistic regression; interpretable machine learning; surrogate modelling; next-generation matrix

\medskip
\noindent\textbf{MSC Classification:} 92D30 (primary); 62F15; 62J12; 37N25; 62P10

\begin{methodbox}[Accompanying Shiny App]
An interactive implementation of the REEBLRA workflow, including data generation, model fitting, diagnostics, visualization, and downloadable results, is available at:
\url{https://jfrabajanteuplb.shinyapps.io/REEBLRA}
\end{methodbox}

\newpage
\section{Introduction}\label{sec:introduction}

In mathematical epidemiology, threshold quantities are used to summarize whether an infectious disease can invade or persist in a population. The basic reproduction number, denoted by $R_0$, is commonly defined as the expected number of secondary infections produced by a typical infectious individual in a wholly susceptible population \citep{diekmann1990,vandendriessche2002,heesterbeek1996}. Under standard assumptions for many deterministic compartmental models, the condition $R_0<1$ is associated with local disease extinction, whereas $R_0>1$ is associated with initial epidemic growth or disease invasion. Although the interpretation of $R_0$ depends on model structure and assumptions, it remains one of the most widely used summary quantities in infectious disease modelling, from classical compartmental theory to modern structured epidemic models \citep{kermack1927,hethcote2000}.

The next generation matrix (NGM) method is a standard approach for deriving $R_0$ from compartmental models \citep{diekmann1990,vandendriessche2002,diekmann2010}. It provides a mathematically principled way to identify new-infection and transition terms, construct next-generation operators, and interpret the reproduction number as a spectral radius. However, for large, nonlinear, multi-host, age-structured, vector-borne, intervention-rich, or highly parameterized models, symbolic derivation may become cumbersome. Alternative approaches, including graph-theoretic methods, early-growth estimation, plug-and-play inference, and computer-simulation-based threshold calculations, have therefore been explored as complements to analytic derivation \citep{dejong1992,heffernan2005,breban2007,king2008,tennant2018}. Simulation-based analyses are especially useful when latent stages, heterogeneous mixing, or distributed infectious periods alter the relationship between parameters and epidemic growth \citep{wearing2005,lloyd2001}.

The interpretation and numerical estimation of $R_0$ require careful attention to model assumptions and data-generating mechanisms. $R_0$ is a context-dependent quantity whose value is shaped by population structure, contact patterns, model formulation, and epidemiological assumptions, rather than an invariant biological property of a pathogen \citep{Delamater2019}. When $R_0$ is inferred from early epidemic growth, the relationship between the observed growth rate and the reproduction number depends strongly on the generation-interval distribution \citep{WallingaLipsitch2007}. Epidemic trajectories including early epidemic phase may deviate from simple exponential growth because of spatial structure, behavioral changes, population heterogeneity, and other mechanisms \citep{Chowell2016}. From a computational perspective, Obadia et al.~\citep{Obadia2012} developed the $R_0$ package as a practical toolbox for estimating reproduction numbers from epidemic data using several established approaches. Together, these studies support the need to distinguish mechanistically derived reproduction numbers from growth-based or data-driven estimates and to report explicitly the assumptions, fitting windows, conversion formulas, and parameter domains underlying any numerical $R_0$ approximation.

We propose REEBLRA, a simulation-to-threshold method for learning interpretable epidemic threshold equations. Its central design choice is to estimate the geometry of $R_0(\boldsymbol\theta)=c$ rather than to reconstruct the full-valued response surface $R_0(\boldsymbol\theta)$. Parameter sets are sampled, the dynamical model is simulated or otherwise evaluated, and each run is labelled according to whether a mechanistic, effective, or numerical reproduction-number summary crosses a scientifically chosen cutoff. Bayesian logistic regression then learns the corresponding threshold-exceedance surface. Ordered successive thresholds retain coarse information about reproduction-number magnitude without requiring a single global direct-regression equation.

The accompanying REEBLRA Shiny application is an implementation and demonstration of the method. It supports built-in and custom ODE simulation, uploaded parameter--time series tables, early-growth targets, Bayesian and ordinary binary logistic equations, successive ordered-bracket equations, coefficient heatmaps, diagnostics, sensitivity summaries, and publication-oriented exports. Numerical integration and model evaluation follow established R workflows based on \texttt{deSolve}, classification diagnostics, ROC analysis, and Bayesian computation \citep{soetaert2010,kuhn2008,robin2011,carpenter2017,gabry2019}.

Machine learning has already been used to construct fast epidemic surrogates that predict outputs including $R_0$, and recent work has used machine learning to infer reproductive-number summaries from simulated or laboratory-derived signals \citep{pereira2021,kurul2025,khan2026}. Most directly relevant, the recently published Math-model Informed Neural Network (MINN) framework represents the linearized infection subsystem with a neural network to estimate $R_0$ across homogeneous and heterogeneous infectious-disease models \citep{lenz2027}. Symbolic-regression studies have also recovered governing epidemic ODEs from trajectory data \citep{brum2026}. These approaches establish that machine learning can accelerate epidemic-model analysis, but they learn a numerical reproduction number, simulator response, or governing dynamics. REEBLRA targets a different mathematical object: the {parameter-space threshold surface} $R_0(\boldsymbol\theta)=c$, represented by a directly inspectable logit equation and an ordered family of successive threshold equations. Its additional mechanistic--surrogate concordance test asks whether that learned boundary points in the same local direction as the analytic reproduction-number surface.

Threshold/level-set learning has precedents in adaptive computer experiments and excursion-set estimation \citep{picheny2010,bect2012}; likewise, cumulative-threshold and partially non-proportional ordinal regressions are established approaches \citep{mccullagh1980,peterson1990}. The application-specific contribution here is the combination of explicit epidemic-threshold equations, an ordered-bracket representation, and a mechanistic comparison of the {full fitted-score gradient} with the reproduction-number gradient. 

The distinctive contribution is therefore twofold (Table~\ref{tab:novelty-positioning}). First, REEBLRA provides a transparent simulation-calibrated bridge between mechanistic ODE modelling and interpretable threshold learning. Second, we introduce a local mechanistic--surrogate concordance analysis that connects the directions of full fitted-score derivatives to the gradient geometry of the original reproduction-number surface, with direct coefficient interpretation reserved for the raw linear feature basis. This converts coefficient interpretation from an informal ``positive parameter/negative parameter'' reading into a falsifiable model-checking principle. When a rigorous analytic reproduction number is available, REEBLRA can be tested against it; when derivation is difficult or unavailable, the same workflow can be used exploratorily to identify candidate threshold directions that require subsequent mathematical verification. The learned equation is never presented as a substitute for a next-generation-matrix derivation.

\begin{table*}[!htbp]
\centering
\caption{Positioning of REEBLRA relative to representative analytic and machine-learning approaches. The table distinguishes the mathematical object being learned or derived; it is not a performance ranking.}
\label{tab:novelty-positioning}
\small
\begin{tabularx}{\textwidth}{p{0.19\textwidth}p{0.24\textwidth}Xp{0.25\textwidth}}
\toprule
Approach & Primary object & Typical output & Relation to REEBLRA \\
\midrule
Next-generation matrix \citep{diekmann1990,vandendriessche2002,diekmann2010} & Mechanistic infection/transition operator & Exact or analytic $R_0$ under stated assumptions & Gold-standard mechanistic comparator when derivation is tractable. \\
Neural epidemic surrogates \citep{pereira2021,kurul2025} & Simulator response or trajectory map & Fast predictions of states, epidemiological outputs, or $R_0$ & REEBLRA instead targets an inspectable parameter-space threshold boundary rather than a general forward emulator. \\
Math-model-informed $R_0$ neural network (MINN) \citep{lenz2027} & Linearized infection subsystem represented by a neural network & Numerical $R_0$ across homogeneous and heterogeneous mechanistic models & Closest recent ML comparator: REEBLRA instead returns explicit threshold equations and tests their coefficient geometry against the mechanistic $R_0/R_e$ surface. \\
Symbolic regression \citep{brum2026} & Governing dynamical equations & Recovered ODE expressions from trajectories & REEBLRA assumes the simulator/mechanism is supplied and learns the threshold surface induced by that mechanism. \\
Data-to-$R$ machine learning \citep{khan2026} & Reproduction-number summary from empirical signals & Estimated transmission summary from polymerase chain reaction (PCR) cycle threshold (Ct) or related data & REEBLRA maps model parameters to threshold exceedance rather than mapping surveillance measurements directly to $R$. \\
Ordinal regression and level-set surrogates \citep{mccullagh1980,peterson1990,picheny2010,bect2012} & Ordered exceedance probabilities or simulator level sets & Ordinal class probabilities or targeted level-set estimates & Pre-existing statistical foundations; REEBLRA specializes these ideas to epidemic thresholds with explicit mechanistic-gradient checks. \\
REEBLRA & Parameter-space surface $R_0(\boldsymbol\theta)=c$ & Binary Bayesian logit boundary, ordered successive-threshold equations, validation, and mechanistic-gradient concordance & Adds a directly testable link between surrogate coefficient geometry and the analytic $R_0/R_e$ surface. Can be extended to heterogeneous, stochastic, and large systems.\\
\bottomrule
\end{tabularx}
\end{table*}

\section{Conceptual framing}\label{sec:framing}

A standard model-analysis workflow in mathematical epidemiology may include proving non-negativity and boundedness of solutions, determining equilibria, deriving $R_0$, analyzing stability, fitting parameters, and performing sensitivity analysis. In practice, sensitivity analysis often requires many numerical simulations across a parameter space. REEBLRA uses these simulation runs for an additional purpose: to estimate an interpretable probability equation for threshold exceedance.

REEBLRA is intentionally designed to be an interpretable Bayesian classification framework that learns reproduction-number relationships directly from simulated epidemic models and produces two complementary outputs:

\begin{enumerate}
    \item \textit{Binary threshold model.} REEBLRA learns an explicit threshold equation that classifies parameter combinations according to whether $R_0>1$ or $R_0<1$, thereby identifying parameter combinations associated with epidemic growth or decline.

    \item \textit{Sequential ordered multiclass bracket model.} REEBLRA learns a sequence of threshold equations that places $R_0$ into ordered numerical ranges, providing an interpretable approximation of the magnitude of $R_0$---that is, the expected number of secondary infections generated by a typical infectious individual in an otherwise fully susceptible population.
\end{enumerate}

Thus, REEBLRA can be used both to recover the epidemiologically important $R_0=1$ threshold and to approximate the numerical scale of $R_0$ through ordered reproduction-number brackets.

Let $\boldsymbol{\theta}=(\theta_1,\ldots,\theta_p)$ denote the vector of model parameters. Suppose that for each sampled parameter set $\boldsymbol{\theta}^{(i)}$, a model is simulated and assigned an outcome $y^{(i)}$. The outcome may be based on an analytic reproduction number, an effective reproduction number, a numerical growth-rate estimate, or an observed case-increase criterion. For a threshold $c$, the binary target is
\[
y_i(c)=\begin{cases}
1, & R(\boldsymbol{\theta}^{(i)})>c,\\
0, & R(\boldsymbol{\theta}^{(i)})\le c,
\end{cases}
\]
where $R$ may refer to analytic $R_0$, effective $R_e(0)$, or numerical \texttt{R0estimate}, depending on the selected threshold source. REEBLRA then uses logistic regression to estimate the probability ($p$) that a reproduction-number summary exceeds a specified threshold. Although the fitted probability is not itself \(R_0\), it provides information about the relationship between model parameters and \(R_0\), particularly the likelihood of exceeding the chosen cutoff. The resulting equation serves as a smooth, interpretable classification surrogate for identifying parameter combinations associated with reproduction numbers above or below the threshold.

For a fixed parameter vector in a deterministic model with a well-defined analytic $R_0$, the true event $\mathbf 1\{R_0(\boldsymbol\theta)>c\}$ is binary; a fitted logistic probability describes the classifier's model-based uncertainty or soft classification, not by itself the physical probability of an outbreak. A Bayesian posterior for the logistic coefficients quantifies uncertainty conditional on the classifier, its prior, and the training design; it does not automatically propagate uncertainty in the epidemic mechanism or the estimated target $R$.

The distinction is important. A logistic regression equation should not be described as the exact $R_0$ equation unless it is explicitly derived to match a known mechanistic or analytic framework. More precisely, the equation estimates a threshold-exceedance probability, and the decision boundary $\hat{\eta}(\boldsymbol{\theta})=0$ approximates the threshold surface $R_0=c$ within the simulated, custom-simulated, generated, or uploaded parameter domain.

\begin{remark}[Interpretation of the symbolic surrogate]
The symbolic equation produced by REEBLRA is a fitted logit-score equation for threshold exceedance rather than the analytic reproduction-number formula itself. Nevertheless, the structure, signs, and relative magnitudes of its coefficients can provide insights into the underlying $R_0$ relationship, and, when parameter values are specified, the binary and multiclass models can be used to assess threshold exceedance and approximate the corresponding $R_0$ value range. In the Shiny app implementation, the default predictor is a first-order logit polynomial in the selected parameters after standardization. If users supply transformed variables or future versions add interaction or polynomial terms, the same framework can represent richer surrogate threshold surfaces. In all cases, the learned equation should be described as a simulation-calibrated symbolic surrogate rather than a replacement for a next-generation-matrix derivation.
\end{remark}

\subsection{Why threshold classification rather than direct reproduction-number regression?}\label{subsec:why-classification}

REEBLRA treats the problem as classification because the primary scientific question is often threshold-based: is $R_0$ above or below a value such as one, and how does the parameter combination move the system across that boundary? Direct regression of $R_0(\boldsymbol\theta)$ is a valid alternative when accurate reproduction-number magnitude is the principal target, but it solves a harder and different approximation problem. A direct surrogate must represent the response over the full sampled range, including tails, strong curvature, multiplicative effects, and unknown parameter interactions. If the regression family is misspecified, magnitude errors can therefore become large in regions that may be irrelevant to the threshold decision.

For a threshold $c$, by contrast, classification only requires the sign of $R_0(\boldsymbol\theta)-c$. A direct estimate $\widehat R_0$ can have substantial absolute error and still give the correct threshold decision whenever
\[
\operatorname{sign}\{\widehat R_0(\boldsymbol\theta)-c\}
=
\operatorname{sign}\{R_0(\boldsymbol\theta)-c\}.
\]
Conversely, even a small magnitude error close to $c$ can flip the class. Thus the statistically relevant difficulty is concentrated near the decision boundary. This distinction is consistent with statistical learning theory for plug-in classifiers: classification risk depends on how accurately the decision boundary and its margin are recovered, rather than on uniformly accurate estimation of the entire regression function \citep{audibert2007}. REEBLRA exploits this lower-resolution target deliberately; it does {not} claim that classification universally dominates direct regression.

This design has three practical consequences. First, a nonlinear reproduction-number surface may still admit a simple, accurate local threshold boundary. Second, threshold classification is naturally aligned with invasion, persistence, intervention, and screening decisions. Third, information about magnitude is not discarded completely: the successive-threshold construction estimates ordered brackets such as $R_0\le1$, $1<R_0\le1.5$, and higher ranges. When exact numerical $R_0$ is required, direct regression or mechanistic calculation should be used alongside REEBLRA rather than replaced by it.

\subsection{Mechanistic--surrogate concordance principle}\label{subsec:concordance-principle}

Let $\boldsymbol\theta\in\Theta\subset(0,\infty)^p$ denote the original parameter vector (the positivity restriction is only needed when logarithmic derivatives are used). Let $R(\boldsymbol\theta)>0$ be a continuously differentiable mechanistic reproduction-number summary and define
\[
 g_c(\boldsymbol\theta)=\log R(\boldsymbol\theta)-\log c,
 \qquad c>0.
\]
For a differentiable declared feature map $\boldsymbol\phi:\Theta\to\mathbb R^q$, let $\boldsymbol\mu_x$ and $S_x=\operatorname{diag}(s_{x1},\ldots,s_{xq})$ be the feature means and positive standard deviations from the training set. The fitted score is
\[
 \eta_c(\boldsymbol\theta)=a_c+\mathbf b_c^{\mathsf T}S_x^{-1}
 [\boldsymbol\phi(\boldsymbol\theta)-\boldsymbol\mu_x].
\]
Write $J_{\phi}(\boldsymbol\theta)=D_{\boldsymbol\theta}\boldsymbol\phi(\boldsymbol\theta)$ for its $q\times p$ Jacobian. By the chain rule,
\[
 \nabla_{\boldsymbol\theta}\eta_c(\boldsymbol\theta)=
 J_{\phi}(\boldsymbol\theta)^{\mathsf T}S_x^{-1}\mathbf b_c.
\]

\begin{proposition}[Local mechanistic--surrogate concordance]\label{prop:concordance}
Suppose $R$ and $\boldsymbol\phi$ are continuously differentiable near $\boldsymbol\theta_*$, $g_c(\boldsymbol\theta_*)=\eta_c(\boldsymbol\theta_*)=0$, and both gradients are nonzero there. If the two regular level surfaces share the same tangent hyperplane at $\boldsymbol\theta_*$ and their normals point toward increasing threshold exceedance, then
\[
\frac{J_{\phi}(\boldsymbol\theta_*)^{\mathsf T}S_x^{-1}\mathbf b_c}
{\|J_{\phi}(\boldsymbol\theta_*)^{\mathsf T}S_x^{-1}\mathbf b_c\|}
=
\frac{\nabla_{\boldsymbol\theta}\log R(\boldsymbol\theta_*)}
{\|\nabla_{\boldsymbol\theta}\log R(\boldsymbol\theta_*)\|}.
\]
The equality is {conditional} on exact local tangency and matching orientation; its empirical analogue is only approximate when fitted and mechanistic boundaries differ. No coefficient-gradient equality is asserted for arbitrary transformed or interacting features.
\end{proposition}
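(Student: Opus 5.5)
The argument is elementary differential geometry of regular level sets. First, use the nonvanishing gradients and the $C^1$ hypotheses to invoke the regular value (implicit function) theorem. This shows that near $\boldsymbol\theta_*$ the sets $\{g_c=0\}$ and $\{\eta_c=0\}$ are $C^1$ hypersurfaces of dimension $p-1$. Their tangent hyperplanes at $\boldsymbol\theta_*$ are the orthogonal complements
\[
T_g=\nabla_{\boldsymbol\theta} g_c(\boldsymbol\theta_*)^{\perp},\qquad
T_\eta=\nabla_{\boldsymbol\theta}\eta_c(\boldsymbol\theta_*)^{\perp}.
\]
Since $\log c$ is constant, $\nabla g_c=\nabla\log R=\nabla R/R$. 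This is well defined and nonzero because $R>0$ and $\nabla R\neq 0$. By the chain rule computation preceding the proposition, $\nabla\eta_c=J_\phi^{\mathsf T}S_x^{-1}\mathbf b_c$.

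Second, translate the tangency hypothesis $T_g=T_\eta$ into collinearity of the normals. A hyperplane through the origin in $\mathbb R^p$ has a one-dimensional orthogonal complement. Hence $T_g^\perp=T_\eta^\perp$ forces $\nabla\eta_c(\boldsymbol\theta_*)=\lambda\,\nabla\log R(\boldsymbol\theta_*)$ for some real $\lambda$, and $\lambda\neq0$ because both vectors are nonzero.

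Third, fix the sign of $\lambda$ using the orientation hypothesis. The condition that both normals point toward increasing threshold exceedance means that the gradients point into $\{R>c\}$ and $\{\eta_c>0\}$ respectively. I would formalize this as follows: for a common direction $v$ with $\langle v,\nabla g_c\rangle>0$, a first-order Taylor expansion gives $g_c(\boldsymbol\theta_*+tv)>0$ for small $t>0$. Matching orientation then requires $\langle v,\nabla\eta_c\rangle>0$ as well. This yields $\lambda\langle v,\nabla\log R\rangle>0$, so $\lambda>0$.

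Finally, divide $\nabla\eta_c=\lambda\nabla\log R$ by its norm. Since $\|\lambda\nabla\log R\|=\lambda\|\nabla\log R\|$ when $\lambda>0$, this gives the stated equality of unit vectors.

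The only delicate step is making the orientation hypothesis precise so that it genuinely excludes $\lambda<0$. Tangency alone permits antiparallel normals, so the hypothesis must be read as a statement that the two gradients lie in the same half-space. I would state this interpretation explicitly before using it.

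The closing disclaimers in the proposition require no proof. They note that the equality concerns the full score gradient $J_\phi^{\mathsf T}S_x^{-1}\mathbf b_c$, not $\mathbf b_c$ itself. The two coincide up to the positive diagonal scaling $S_x^{-1}$ only when $\boldsymbol\phi$ is the identity, that is, for the raw linear basis.
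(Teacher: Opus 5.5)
Your proposal is correct and follows the paper's argument. The shared tangent hyperplane forces the nonzero gradients $\nabla_{\boldsymbol\theta}\eta_c(\boldsymbol\theta_*)=J_\phi(\boldsymbol\theta_*)^{\mathsf T}S_x^{-1}\mathbf b_c$ and $\nabla_{\boldsymbol\theta}\log R(\boldsymbol\theta_*)$ to be collinear, the orientation hypothesis makes the scalar positive, and normalization gives the identity; your version only makes explicit what the paper states tersely: the regular-value justification of the tangent hyperplanes, and the reading of ``matching orientation'' as the two gradients lying in a common open half-space, which is what excludes $\lambda<0$.
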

\begin{proof}
At a regular level set, the gradient is normal to the tangent hyperplane. The two nonzero gradients must therefore be scalar multiples of each other, and the shared threshold orientation makes this scalar positive. Applying the chain rule to $\eta_c$ yields the stated equality after normalization. The argument is a level-set geometry identity, not a proof that a fitted classifier must recover the mechanistic normal.
\end{proof}

\begin{corollary}[Multiplicative effects for raw linear features]\label{cor:multiplicative-signs}
If $R(\boldsymbol\theta)=K\prod_{j=1}^p\theta_j^{\alpha_j}$ with $\theta_j>0$, then
\[
\frac{\partial\log R}{\partial\theta_j}=\frac{\alpha_j}{\theta_j},
\qquad
\frac{\partial\log R}{\partial\log\theta_j}=\alpha_j.
\]
For the raw linear feature map $\boldsymbol\phi(\boldsymbol\theta)=\boldsymbol\theta$, the proposition implies, under its tangency hypotheses, $\operatorname{sign}(b_{cj})=\operatorname{sign}(\alpha_j)$ for nonzero components and relative standardized coefficient magnitudes proportional to $|\alpha_j|s_{xj}/\theta_j^*$. For a nonlinear feature map, the same interpretation applies to the {complete score derivative}, not to a single main-effect coefficient. The signs and local magnitudes may differ across thresholds as the boundary points and fitting domains change.
\end{corollary}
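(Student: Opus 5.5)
The plan is to obtain the corollary as a direct specialization of Proposition~\ref{prop:concordance}. The first step is the elementary computation for the power law. Since $\log R(\boldsymbol\theta)=\log K+\sum_j\alpha_j\log\theta_j$ on $(0,\infty)^p$, differentiation gives $\partial\log R/\partial\theta_j=\alpha_j/\theta_j$. The chain rule with $u_j=\log\theta_j$ gives $\partial\log R/\partial\log\theta_j=\theta_j\cdot\alpha_j/\theta_j=\alpha_j$. I would also record that $R$ is $C^1$ and positive there, and that $\nabla\log R(\boldsymbol\theta_*)\neq\mathbf 0$ whenever some $\alpha_j\neq0$. This is exactly the regularity the proposition requires.

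Next, I specialize the feature map to $\boldsymbol\phi(\boldsymbol\theta)=\boldsymbol\theta$, so $q=p$ and $J_\phi=I_p$. The fitted-score gradient then reduces to $S_x^{-1}\mathbf b_c$, whose $j$-th component is $b_{cj}/s_{xj}$. Under the tangency and orientation hypotheses, the proposition gives a scalar $\lambda>0$ with
\[
S_x^{-1}\mathbf b_c=\lambda\,\nabla\log R(\boldsymbol\theta_*),
\qquad\text{that is,}\qquad
b_{cj}=\lambda\,s_{xj}\,\alpha_j/\theta_j^*.
\]
I would extract this unnormalized form from the proof of the proposition rather than from its normalized statement; the two are equivalent because both vectors are nonzero. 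Since $\lambda>0$, $s_{xj}>0$ and $\theta_j^*>0$, the sign claim $\operatorname{sign}(b_{cj})=\operatorname{sign}(\alpha_j)$ follows for every $j$. The magnitude claim follows too: for any $j,k$ with $\alpha_k\ne0$,
\[
\frac{|b_{cj}|}{|b_{ck}|}=\frac{|\alpha_j|s_{xj}/\theta_j^*}{|\alpha_k|s_{xk}/\theta_k^*}.
\]

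The remaining sentences of the corollary are interpretive, and I would justify them briefly. For a general $\boldsymbol\phi$, the proposition only identifies the direction of $J_\phi^{\mathsf T}S_x^{-1}\mathbf b_c$. When $J_\phi$ is not the identity, that vector mixes several coefficients, so no single $b_{cj}$ is pinned down; only the complete score derivative is. Dependence on the threshold is visible in the formula itself. The boundary point $\boldsymbol\theta_*$ lies on $R=c$ and so moves with $c$, which changes the ratios $\alpha_j/\theta_j^*$. The training-set quantities $s_{xj}$ and $\mathbf b_c$ also change with the domain and the threshold.

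The main obstacle is hypothesis bookkeeping rather than computation. With raw linear features, the fitted boundary $\eta_c=0$ is a hyperplane, while $R=c$ is generally curved. Tangency therefore holds only at an isolated point $\boldsymbol\theta_*$, and the conclusion must be stated strictly at that point. I would also handle components with $\alpha_j=0$ separately: they force $b_{cj}=0$, which is why the sign statement is restricted to nonzero components.
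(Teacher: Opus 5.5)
Your proposal is correct and follows the paper's intended argument, which is simply Proposition~\ref{prop:concordance} specialized to $J_\phi=I_p$, giving $b_{cj}=\lambda\,s_{xj}\alpha_j/\theta_j^*$ with $\lambda>0$; the paper offers no proof beyond that implication. One side remark is overstated: tangency need not be isolated, because when $\sum_j\alpha_j=0$ the level set $R=c$ is a cone whose normal direction is constant along rays (for the closed SIR model, $R_0=\beta/\gamma$, the level set is the line $\beta=c\gamma$ itself), although this does not affect your pointwise conclusion at $\boldsymbol\theta_*$.
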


In the numerical tables below, the reported $D_j$ and $G_j$ compare median {componentwise log-parameter derivatives} near the unit threshold. They provide a sign-concordance check, but are not a test of exact boundary tangency. A stronger {pointwise} orientation diagnostic at an evaluated parameter vector is
\[
\frac{\nabla_{\boldsymbol\theta}\eta_c(\boldsymbol\theta)^{\mathsf T}
\nabla_{\boldsymbol\theta}\log R(\boldsymbol\theta)}
{\|\nabla_{\boldsymbol\theta}\eta_c(\boldsymbol\theta)\|\,
\|\nabla_{\boldsymbol\theta}\log R(\boldsymbol\theta)\|},
\]
provided both gradients are nonzero. We can report its distribution at well-supported points near $R=c$ alongside boundary-distance error. In models without an analytic $R$ or a trusted numerical gradient, concordance can only be proposed as a hypothesis pending independent validation.

\section{REEBLRA workflow and Shiny application}\label{sec:app-workflow}

\subsection{Workflow overview}

REEBLRA is an interpretable Bayesian classification framework that learns reproduction-number relationships directly from simulated epidemic models and produces two complementary outputs. First, its binary model learns an explicit threshold equation that classifies parameter combinations according to whether $R_0>1$ or $R_0<1$, thereby identifying parameter combinations associated with epidemic growth or decline. Second, its multiclass bracket model learns a sequence of threshold equations that places $R_0$ into ordered numerical ranges, providing an interpretable approximation of the magnitude of $R_0$---that is, the expected number of secondary infections generated by a typical infectious individual in an otherwise fully susceptible population. Thus, REEBLRA can be used both to recover the epidemiologically important $R_0=1$ threshold and to approximate the numerical scale of $R_0$ through ordered reproduction-number brackets.

\begin{figure}[!htbp]
    \centering
    \maybeincludegraphics[width=0.95\linewidth]{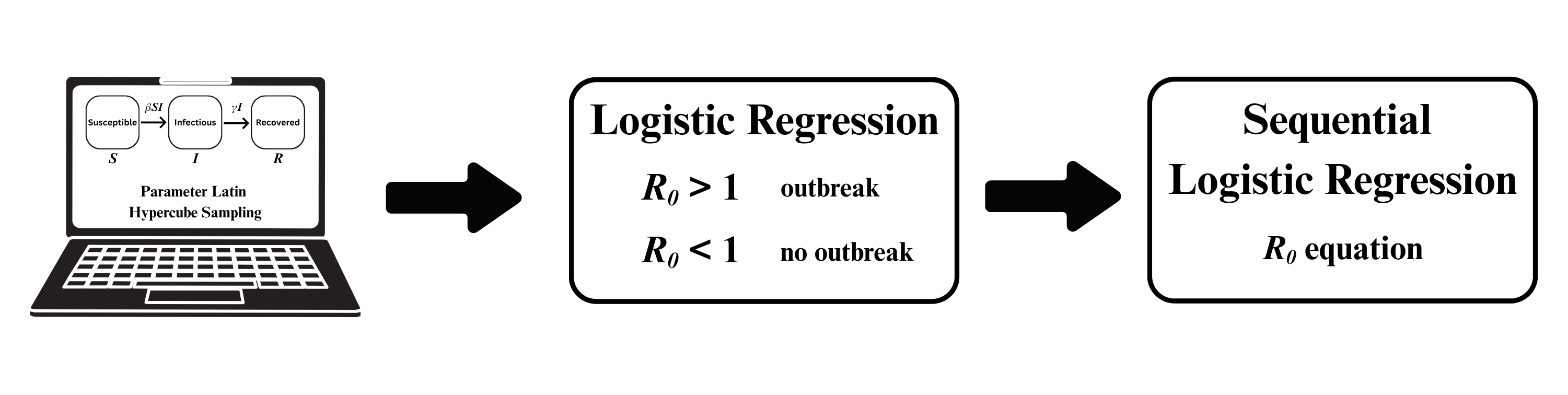}
    \caption{REEBLRA simulation-to-threshold workflow. The implementation supports built-in ODE simulation, custom ODE simulation with optional auxiliary equations, and direct upload of parameter--time series tables. These inputs feed binary Bayesian threshold analysis, successive ordered-bracket models, validation, collinearity checking, sensitivity summaries, and export modules. REEBLRA provides two complementary outputs: a {binary threshold equation} that distinguishes $R_0>1$ from $R_0<1$, and a {multiclass bracket model} that uses successive threshold equations to approximate the numerical magnitude of $R_0$, interpreted as the expected number of secondary infections generated by a typical infectious individual in an otherwise fully susceptible population.}
    \label{fig:flowchart}
\end{figure}

The app follows three operational stages (Figure~\ref{fig:flowchart}). First, the user selects or constructs the data source (Table~\ref{tab:data-modes}). Then reproduction-number target or threshold outcome is defined. Latin hypercube parameter sampling follows the space-filling design principle of McKay et al. \citep{mckay1979}; related simulation sensitivity workflows are reviewed by Marino et al. \citep{marino2008}. Second, the binary Bayesian model is fitted (Section~\ref{sec:binary}). Third, the sequential ordered multiclass bracket model is fitted (Section~\ref{sec:multiclass}). After the operational stages, equations, diagnostics, metrics, plots, trial predictions, and downloadable outputs are presented and can be evaluated. The learned equations are logit-polynomial surrogates in a declared feature basis (linear, log-linear, log-quadratic, or compact log-cubic); they approximate threshold surfaces over the sampled or uploaded domain rather than symbolically deriving a mechanistic $R_0$.

\begin{table}[!htbp]
\centering
\caption{REEBLRA Shiny data modes and their threshold summaries.}
\label{tab:data-modes}
\begin{tabularx}{\textwidth}{>{\bfseries\raggedright\arraybackslash}p{0.18\textwidth}>{\raggedright\arraybackslash}X>{\raggedright\arraybackslash}X}
\toprule
\tablehead \HeaderCell{Mode} & \HeaderCell{Input or simulation} & \HeaderCell{Available threshold summaries} \\
\midrule
Built-in ODE & Closed SIR, open SIR, open SEIR, vaccinated SEIR, Age-structured SIR (two groups), 15-compartment five-group SIR, and vector-borne models & Case increase; analytic $R_0$ or $R_e(0)$; early-growth numerical \texttt{R0estimate} \\
Custom ODE & User equations, optional auxiliary equations, initial values, parameter ranges, selected nonnegative states, and an early-growth target & Case increase and early-growth numerical \texttt{R0estimate}; analytic NGM $R_0$ is not derived automatically \\
Uploaded table & One \texttt{R0estimate} column and up to 50 selected numeric parameter columns & \texttt{R0estimate}>cutoff and ordered \texttt{R0estimate} brackets \\
CSV generator & Time in the first column, one trajectory per remaining column, and one matching parameter row per trajectory & Estimated growth rate, diagnostics, and generated \texttt{R0estimate} table for direct use in REEBLRA \\
\bottomrule
\end{tabularx}
\end{table}

Algorithm~\ref{alg:reeblr} summarizes the basic workflow of REEBLRA. Readers interested in exploring the framework further can use the accompanying interactive Shiny app. In REEBLRA, `trial prediction' refers to the interactive what-if prediction feature of the Shiny app. It allows users to enter a new combination of epidemiological parameter values and evaluate the previously fitted REEBLRA equations.

\begin{algorithm}[!htbp]
\caption{REEBLRA simulation-to-threshold workflow}\label{alg:reeblr}
\begin{algorithmic}[1]
\State Choose built-in ODE, custom ODE, uploaded table, or CSV-generator workflow.
\If{built-in ODE simulation is selected}
  \State Set parameter ranges, initial conditions, time grid, susceptible threshold, and case-increase tolerance.
  \State Generate $n$ parameter sets using LHS and simulate each system with \texttt{deSolve::ode} using \texttt{lsoda}.
  \State Compute analytic $R_0$ or $R_e(0)$ where implemented and estimate early-growth \texttt{R0estimate}.
\ElsIf{custom ODE simulation is selected}
  \State Parse state equations, optional auxiliary equations, initial values, parameter ranges, selected nonnegative states, and target output.
  \State Generate LHS parameter sets, simulate each custom system, fit early growth, and convert $r$ to \texttt{R0estimate}.
\Else
  \State Read or generate a complete parameter--\texttt{R0estimate} table.
\EndIf
\State Define the binary threshold event and/or ordered $R_0$ bracket source.
\State Choose a feature basis before validation; log-based bases require positive predictors.
\State Construct feature terms within each training fold and standardize them using training-data means and standard deviations.
\State Fit the binary Bayesian logistic model; also fit an ordinary logistic model on the same full-data formula, for comparison.
\State Fit successive ordinary binary logistic models for the ordered $R_0$ thresholds.
\State Compute apparent and optional held-out $k$-fold predictions and metrics.
\State Report equations, coefficients, trial predictions, ROC/AUC, class distribution, coefficient heatmap, collinearity, sensitivity summaries, warnings, and exports.
\end{algorithmic}
\end{algorithm}

\section{Binary Bayesian threshold REEBLRA model}\label{sec:binary}

For a selected cutoff $c$, let $y_i(c)=1$ when the selected threshold event is true and $0$ otherwise. REEBLRA first constructs a declared feature map
\[
\boldsymbol{x}_i=\boldsymbol{\phi}(\boldsymbol{\theta}_i),
\]
where $\boldsymbol{\theta}$ contains the raw model parameters. The available bases are linear raw parameters, log-linear, log-quadratic, compact log-cubic, and an automatic model-aware choice. Log-based bases require positive predictors. The fitted feature terms are standardized within the training data,
\[
z_{ij}=\frac{x_{ij}-\bar x_j}{s_j},
\]
and REEBLRA fits
\[
\Pr\{y_i(c)=1\mid\boldsymbol z_i\}
=\operatorname{logit}^{-1}\!\left(\alpha+\sum_{j=1}^{q}b_jz_{ij}\right).
\]
The standardized score and probability are
\[
\eta_z=\alpha+\sum_{j=1}^{q}b_jz_j,
\qquad
p=\frac{1}{1+\exp(-\eta_z)}.
\]
At the default decision cutoff, $p>0.5$ is equivalent to $\eta_z>0$.

Undoing standardization gives the unstandardized fitted-feature score
\[
\eta_x
=\left(\alpha-\sum_{j=1}^{q}\frac{b_j\bar x_j}{s_j}\right)
+\sum_{j=1}^{q}\frac{b_j}{s_j}x_j.
\]
For the linear basis, $x_j=\theta_j$ and this is directly the original-parameter-scale equation. For transformed bases, $x_j$ remains a log, power, or interaction feature. In the Shiny app, trial prediction nevertheless accepts raw ODE or uploaded parameter values because the app applies the declared feature map and training scaling automatically before evaluating the fitted score.

The Bayesian model is implemented using \texttt{rstanarm::stan\_glm}. The supplied app uses autoscaled normal priors with location 0 and scale 2 for both coefficients and the intercept. Users may choose MCMC sampling or mean-field variational Bayes. MCMC supports convergence diagnostics; variational Bayes is faster but approximate and does not provide MCMC $\widehat R_0$ or effective-sample-size diagnostics. The reported equation is fitted to all complete observations. For $k\ge2$, each validation fold reconstructs the selected feature map and standardization from the training data and then applies those training transformations to its held-out observations.

In the Shiny app, when all simulations fall on one side of the selected threshold, the app stops fitting and reports the observed counts, directing the user to inspect threshold positives and negatives, widen parameter ranges, increase simulation size, alter the cutoff, or change the threshold source. Algorithm~\ref{alg:binary-reeblr} shows the flow of the binary model.

\begin{algorithm}[!htbp]
\caption{Binary REEBLRA fitting for a selected threshold}\label{alg:binary-reeblr}
\begin{algorithmic}[1]
\Require Complete parameter table, binary threshold target, cutoff, Bayesian settings, and validation-fold count.
\State Verify number of complete rows and both target classes are present.
\State Construct the declared predictor feature map, standardize its terms, and fit the all-data Bayesian logistic model.
\State Undo standardization to report the unstandardized fitted-feature equation; this is an original-parameter equation only for a linear basis.
\State Fit the ordinary logistic comparator using the same full-data formula.
\State If $k\ge2$, reconstruct the feature map and standardization within each training fold, refit the Bayesian model, and predict its held-out fold.
\State Compute binary metrics, ROC data, posterior coefficient summaries, MCMC diagnostics when available, and coefficient-based sensitivity summaries.
\Ensure Bayesian and ordinary equations, trial prediction, validation metrics, plots, warnings, and exports.
\end{algorithmic}
\end{algorithm}

\subsection*{Ordinary logistic comparison}

The binary output in the Shiny app also displays an ordinary, non-Bayesian logistic regression fitted to the same complete rows, standardized fitted features, and full-data formula. It reports
\[
\eta^{\mathrm{GLM}}=\alpha^{\mathrm{GLM}}+\sum_j b_j^{\mathrm{GLM}}z_j,
\qquad
p^{\mathrm{GLM}}=\{1+\exp(-\eta^{\mathrm{GLM}})\}^{-1},
\]
on the standardized scale and after undoing standardization. Under a linear basis the latter is the original-parameter scale; under transformed bases it remains on the fitted-feature scale. This comparator helps users see whether posterior-mean Bayesian equations and maximum-likelihood equations have similar direction and scale. It is a descriptive comparison, not an independent validation set and not a replacement for posterior diagnostics or cross-validation.

\section{Sequential ordered multiclass bracket REEBLRA model}\label{sec:multiclass}

The multiclass module estimates ordered brackets rather than unrelated nominal categories. In the Shiny app, for step size $h\in\{0.25,0.5,1,2,5\}$, thresholds are generated from 1 through 10 and brackets take the form
\[
<1,\quad 1-(1+h),\quad (1+h)-(1+2h),\quad\ldots,\quad >10.
\]
At each threshold $c_k$, a logistic model of the declared type estimates
\[
q_k(\boldsymbol z)
=\Pr(R_0>c_k\mid\boldsymbol z)
=\operatorname{logit}^{-1}\!\left(\alpha_k+\sum_{j=1}^{q}b_{kj}z_j\right),
\]
using the same declared feature basis and training-fold standardization as the binary module. Note that users can also explore individual cutoffs through the Bayesian binary module, but this does not constitute a jointly fitted Bayesian multiclass model.

If all training observations lie on one side of a threshold, a finite logistic equation is not estimable. The app records the corresponding constant exceedance probability as 0 or 1 and reports the counts above and at-or-below that threshold. Predicted exceedance probabilities are post-processed to be non-increasing:
\[
q_{k+1}\leftarrow\min(q_k,q_{k+1}).
\]
Bracket probabilities are then obtained by subtraction,
\begin{align*}
\Pr(R_0\le c_1)&=1-q_1,\\
\Pr(c_{k-1}<R_0\le c_k)&=q_{k-1}-q_k,\\
\Pr(R_0>c_K)&=q_K,
\end{align*}
negative numerical artifacts are set to zero, and each probability row is renormalized. The predicted bracket is the class with the largest bracket probability.

The operation above projects {predicted cumulative probabilities} into a non-increasing sequence for each parameter vector; it does not constrain the independently fitted coefficient vectors or guarantee that their raw logistic threshold surfaces are nested everywhere. A diagnostic should count pre-correction violations, for example
\[
\sum_{k=1}^{K-1}
\max\{0,q_{k+1}^{\rm raw}(\boldsymbol\theta)-q_k^{\rm raw}(\boldsymbol\theta)\}.
\]
We can implement a comparison with a jointly fitted ordinal model \citep{mccullagh1980,peterson1990}.

Algorithm~\ref{alg:multi-reeblr} shows the flow of the multiclass model. 

\subsection*{Coefficient matrix across thresholds}

For fitted threshold equations, the Shiny app forms a coefficient matrix and heatmap whose rows are successive $R_0$ thresholds and whose columns are model parameters. To visualize relative importance within threshold $k$, coefficients are normalized separately by row:
\[
h_{kj}=\frac{b_{kj}}{\max_{\ell}|b_{k\ell}|},
\qquad -1\le h_{kj}\le1.
\]
The heatmap therefore shows the direction and relative magnitude of parameters within each threshold equation. Because every row has its own denominator, color intensity should not be interpreted as an absolute coefficient comparison between different $R_0$ levels. 

\begin{algorithm}[!htbp]
\caption{Sequential ordered multiclass bracket REEBLRA fitting}\label{alg:multi-reeblr}
\begin{algorithmic}[1]
\Require Complete parameter table, numeric $R_0$-like target, step size $h\in\{0.25,0.5,1,2,5\}$, and validation-fold count.
\State Create thresholds from 1 through 10 and their associated ordered brackets.
\State Construct the declared predictor feature map and standardize its fitted terms.
\For{each threshold $c_k$}
  \If{observations occur on both sides of $c_k$}
    \State Fit ordinary logistic regression for $\Pr(R_0>c_k)$.
  \Else
    \State Record the constant empirical exceedance probability 0 or 1 and mark the finite equation as not estimable.
  \EndIf
\EndFor
\State Enforce non-increasing cumulative exceedance probabilities.
\State Difference and renormalize them to obtain bracket probabilities.
\State Predict the maximum-probability bracket and repeat feature construction, standardization, threshold fitting, and prediction independently within each validation fold when requested.
\State Compute overall, macro, weighted, per-class, and one-vs-rest metrics.
\State Build the raw and row-normalized coefficient matrices across thresholds.
\Ensure Ordered equations, predicted brackets, coefficient heatmap, class distribution, AUC summaries, confusion matrix, trial prediction, and exports.
\end{algorithmic}
\end{algorithm}

\section{Model evaluation metrics}\label{sec:evaluation}

For binary evaluation with 0.5 classification cutoff, the app reports accuracy, balanced accuracy, AUC, sensitivity (recall), specificity, precision, F1 score, negative predictive value (NPV), Cohen's $\kappa$, and Brier score (reported as the mean squared error of the predicted probability). ROC AUC measures ranking discrimination and is threshold independent, whereas F1 and balanced accuracy are sensitive to the decision rule and class imbalance \citep{fawcett2006,saito2015,steyerberg2010}. The Brier score retains probability information and therefore complements classification-only summaries. Apparent metrics are calculated from the final all-data model. When $k\ge2$, held-out predictions from stratified folds are combined to calculate cross-validation metrics; model interpretation should emphasize the held-out estimates when enough observations occur in both classes.

The ordered-bracket output reports overall accuracy, overall balanced accuracy (macro recall), and Cohen's $\kappa$. For each bracket, it reports sensitivity/recall, specificity, precision, F1 score, NPV, one-vs-rest balanced accuracy, one-vs-rest AUC, and observed support, together with macro and support-weighted summaries. One-vs-rest AUC is appropriate as a class-specific discrimination diagnostic, while multiclass summaries should be interpreted with the observed bracket distribution and confusion matrix \citep{hand2001}. Sparse brackets can yield unstable coefficients or apparently strong aggregate scores; therefore class support, per-class F1/recall, and macro summaries are retained rather than relying on weighted averages alone. Calibration and discrimination are distinct properties of predictive models, so high AUC should not be read as evidence that a fixed probability cutoff is well calibrated for every class distribution \citep{vancalster2019}.

The Shiny app also reports variance inflation factors (VIFs), a parameter correlation matrix, and a correlation heatmap. Collinearity may leave prediction useful while making individual coefficient signs and magnitudes unstable, a well-known distinction between predictive adequacy and coefficient interpretability \citep{dormann2013}. 

Figure~\ref{fig:cross-model-metrics} presents selected metrics for the ODE model demonstrations.

\section{ODE model demonstrations}\label{sec:models}

The REEBLRA application includes seven built-in ODE models, including a 15-compartment five-group SIR benchmark. The results in this section are {analytic-label parameter-space classification benchmarks}: their labels are generated by evaluating each model's known analytic $R_0$ or $R_e(0)$ expression. The examples verify recovery of known threshold surfaces on specified sampling domains. Note that the Shiny app also supports simulation-derived targets.

REEBLRA allows the analyst to choose and declare the predictor feature transformation before fitting. For example, original parameter values can enter linearly after standardization, or positive parameters can be represented by a log-linear, log-quadratic, or compact log-cubic basis with powers and interactions. The choice affects the complexity, interpretability, and predictive performance of the learned threshold surfaces. The main-text ODE results below use the stated model-aware binary bases and a compact log-cubic basis for sequential brackets with increment $h=5$. To illustrate alternative settings across {all seven} ODE examples, Supplementary Information Section~\ref{app:alternate-equations} additionally gives (i) the complete binary and sequential REEBLRA equations and validation metrics for standardized {linear} original parameters, retaining $h=5$ (Section~\ref{app:linear_h5}); and (ii) the complete sequential equations and metrics for the same compact log-cubic transformations as the main text, but with a finer $h=0.5$ (Section~\ref{app:log_cubic_h0p5}). The finer $h$ creates more classes and a stricter exact-bracket criterion, so its classification accuracy should not be compared directly with the $h=5$ accuracy without accounting for bracket resolution.

For the simple linear-basis SIR examples, an individual binary coefficient can be read directly as the fitted threshold direction of its raw parameter, holding the others fixed. For transformed bases, however, a parameter can enter through a main term, powers, and interactions. We therefore report both the main fitted coefficient and the local full-score quantity
\[
D_j=\operatorname{median}_{0.8\le R_0\le1.2}
\left\{\frac{\partial\eta}{\partial\log\theta_j}\right\},
\]
computed from the complete binary REEBLRA score near the unit threshold. $D_j$ includes all interaction and curvature terms and is used as an interpretability diagnostic; it is not a causal effect. This distinction is important whenever one standalone polynomial coefficient does not represent the net effect of the original epidemiological parameter. For the restored concordance tables, we also report the corresponding analytic direction
\[
G_j=\operatorname{median}_{0.8\le R_0\le1.2}
\left\{\frac{\partial\log R_0}{\partial\log\theta_j}\right\},
\]
so that the analytic numerator/denominator or next-generation-matrix role can be compared directly with the sign of the learned REEBLRA score.

For each ODE demonstration, five independent parameter sets are also evaluated explicitly. The tables compare the analytic reproduction number with the binary REEBLRA unit-threshold classification and the ordered multiclass REEBLRA bracket, without inserting the analytic formula into the feature basis. In the equations below, equations with more than five feature terms are displayed using the five largest absolute coefficients followed by an ellipsis. Complete equations and metrics can be generated from the supplementary Python code or from the Shiny app.

\noindent\textbf{Note}: The binary and first ordered-threshold equations both estimate the event \(R_0>1\), but are independently fitted using different Latin-hypercube samples and, where specified, different predictor feature bases. Consequently, their coefficients are not expected to coincide but both recover closely agreeing invasion boundaries. If both use exactly the same observations, predictor design, feature transformations, standardization, estimator, penalty/prior treatment, and convergence tolerances, their fitted equations should agree up to numerical tolerance.

\begin{figure}[!htbp]
\centering
\includegraphics[width=0.98\textwidth]{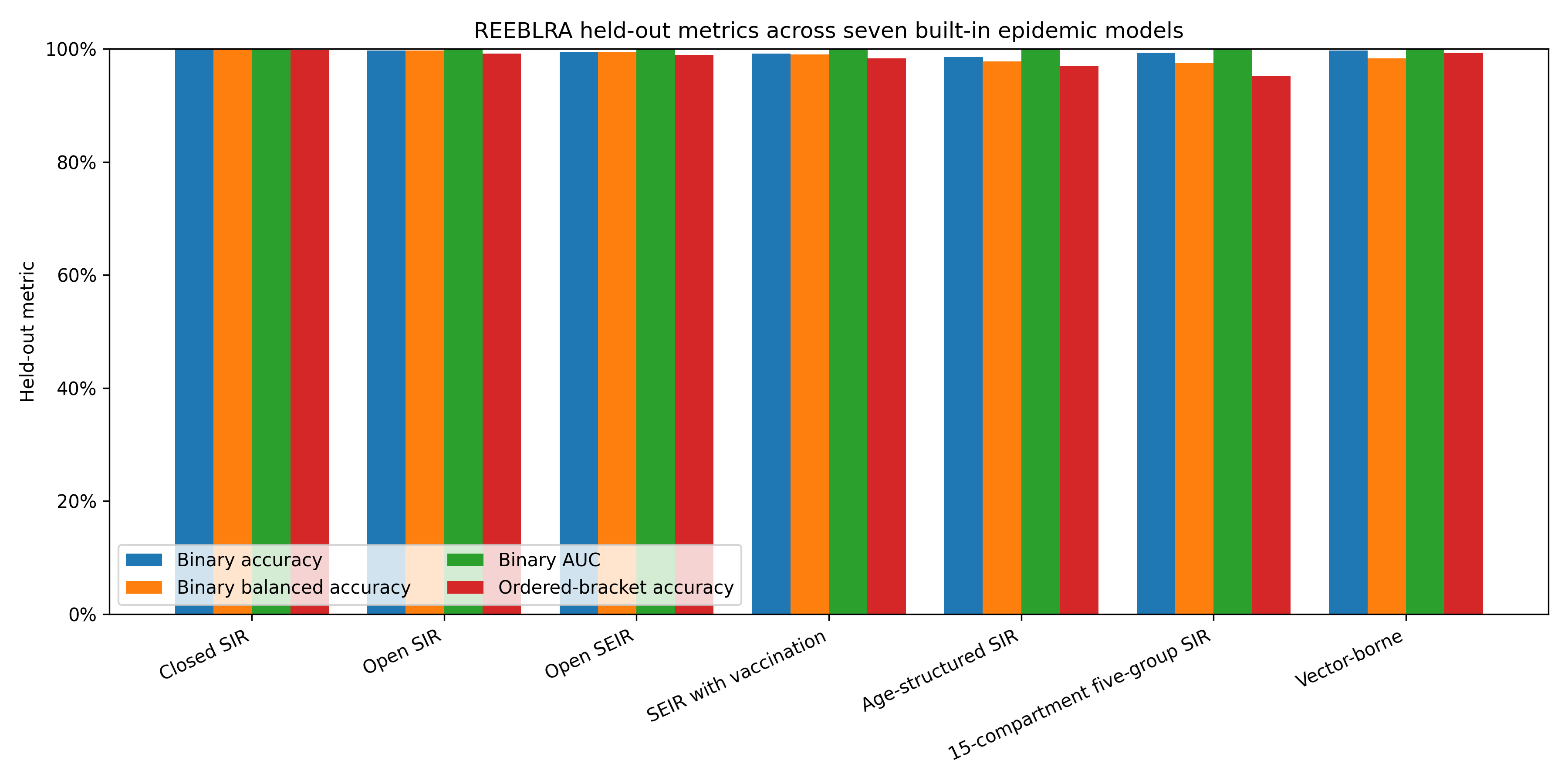}
\caption{Selected REEBLRA metrics across seven {analytic-label} epidemic-model benchmarks. For the actual values, refer to Table~\ref{tab:cross-model-v4}. These are held-out performance estimates under the sampled analytic-label domains, not validation on real-world or growth-derived threshold labels.}
\label{fig:cross-model-metrics}
\end{figure}

\subsection{Closed SIR model}\label{subsec:closed-sir}

\begin{figure}[!htbp]
\centering
\begin{tikzpicture}[node distance=25mm]
\node[compartment] (S) {$S$};
\node[compartment,right=of S] (I) {$I$};
\node[compartment,right=of I] (R) {$R$};
\draw[flow] (S) -- node[flowlabel,above] {$\beta SI/N$} (I);
\draw[flow] (I) -- node[flowlabel,above] {$\gamma I$} (R);
\end{tikzpicture}
\caption{Closed SIR compartmental structure used by the built-in simulator. The cumulative-incidence state $\mathcal{C}$ integrates the infection flow $\beta SI/N$.}
\label{fig:SIR}
\end{figure}

The closed SIR model (Figure~\ref{fig:SIR}) is
\begin{align}
\frac{dS}{dt} &= -\beta\frac{SI}{N},\\
\frac{dI}{dt} &= \beta\frac{SI}{N}-\gamma I,\\
\frac{dR}{dt} &= \gamma I,\\
\frac{dC}{dt} &= \beta\frac{SI}{N},
\end{align}
where $N=S+I+R$ and $\mathcal{C}$ is cumulative incidence. The analytic reproduction number is
\[
R_0=\frac{\beta}{\gamma}.
\]
For $R_0>1$, the analytic threshold is $\beta-\gamma>0$. The stronger comparison follows from $\log R_0=\log\beta-\log\gamma$: the standardized mechanistic gradient is positive in $\beta$ and negative in $\gamma$, with equal unit log-elasticities before sampling-scale adjustment.

\paragraph{REEBLRA results.}
The binary demonstration used the declared model-aware basis and five-fold held-out validation. The unstandardized score below is the posterior-mode (MAP) estimate at the center of the Gaussian Laplace approximation. 
\[\resizebox{0.98\linewidth}{!}{$\displaystyle \eta_{R_0>1}=-0.0144 - 60.1290\,\gamma + 60.0748\,\beta.$}\]

The ordered-bracket demonstration used the generic compact log-cubic basis with step $h=5$, five-fold held-out validation, and independent $N(0,2^2)$ prior and MAP/Laplace fitting routine. The displayed scores are MAP plug-in logit equations, not integrated posterior-predictive probabilities. Representative expressions were:

\begin{center}
\resizebox{0.98\linewidth}{!}{$\displaystyle \eta_{R_0>1} = 0.1414 + 27.6737\,\log(\beta) - 27.1390\,\log(\gamma) + 3.8169\,\log(\gamma)^2 - 3.5491\,\log(\beta)^2 + 0.7792\,\log(\beta)^3 + \cdots$}\\[2pt]
\resizebox{0.98\linewidth}{!}{$\displaystyle \eta_{R_0>6} = -8.4491 - 9.4359\,\log(\beta)\,\log(\gamma) + 2.4399\,\log(\gamma) + 1.7741\,\log(\gamma)^2 - 1.2630\,\log(\gamma)^3 - 0.1887\,\log(\beta) + \cdots$}\\[2pt]
\resizebox{0.98\linewidth}{!}{$\displaystyle \eta_{R_0>10} = -7.0246 - 6.1480\,\log(\beta)\,\log(\gamma) + 2.8277\,\log(\gamma) - 1.0110\,\log(\gamma)^3 + 0.5673\,\log(\beta) + 0.3047\,\log(\beta)^2 + \cdots$}
\end{center}

\begin{table}[ht]
\centering
\caption{Mechanistic--REEBLRA coefficient concordance for the Closed SIR demonstration. The analytic role identifies whether each parameter enters the $R_0$ expression through a numerator, denominator, both, or the next-generation matrix (NGM). $G_j$ is the median analytic log-gradient $\partial\log R_0/\partial\log\theta_j$ and $D_j$ is the median derivative of the complete REEBLRA binary score over $0.8\le R_0\le1.2$. For this linear basis, $b_j$ is the unstandardized raw-parameter coefficient. Concordance is assessed from the sign of the full-score direction $D_j$ relative to $G_j$.}
\label{tab:closed-sir-concordance}
\scriptsize
\setlength{\tabcolsep}{3pt}
\begin{tabularx}{\linewidth}{l >{\raggedright\arraybackslash}X r r r c}
\toprule
Parameter & Analytic $R_0$ role & Analytic $G_j$ & REEBLRA $b_j$ & Full-score $D_j$ & Concordant? \\
\midrule
$\beta$ & Numerator & +1.000 & +60.075 & +39.950 & Yes \\
$\gamma$ & Denominator & -1.000 & -60.129 & -40.864 & Yes \\
\bottomrule
\end{tabularx}
\end{table}

\begin{figure}[!htbp]
\centering
\includegraphics[width=1\textwidth]{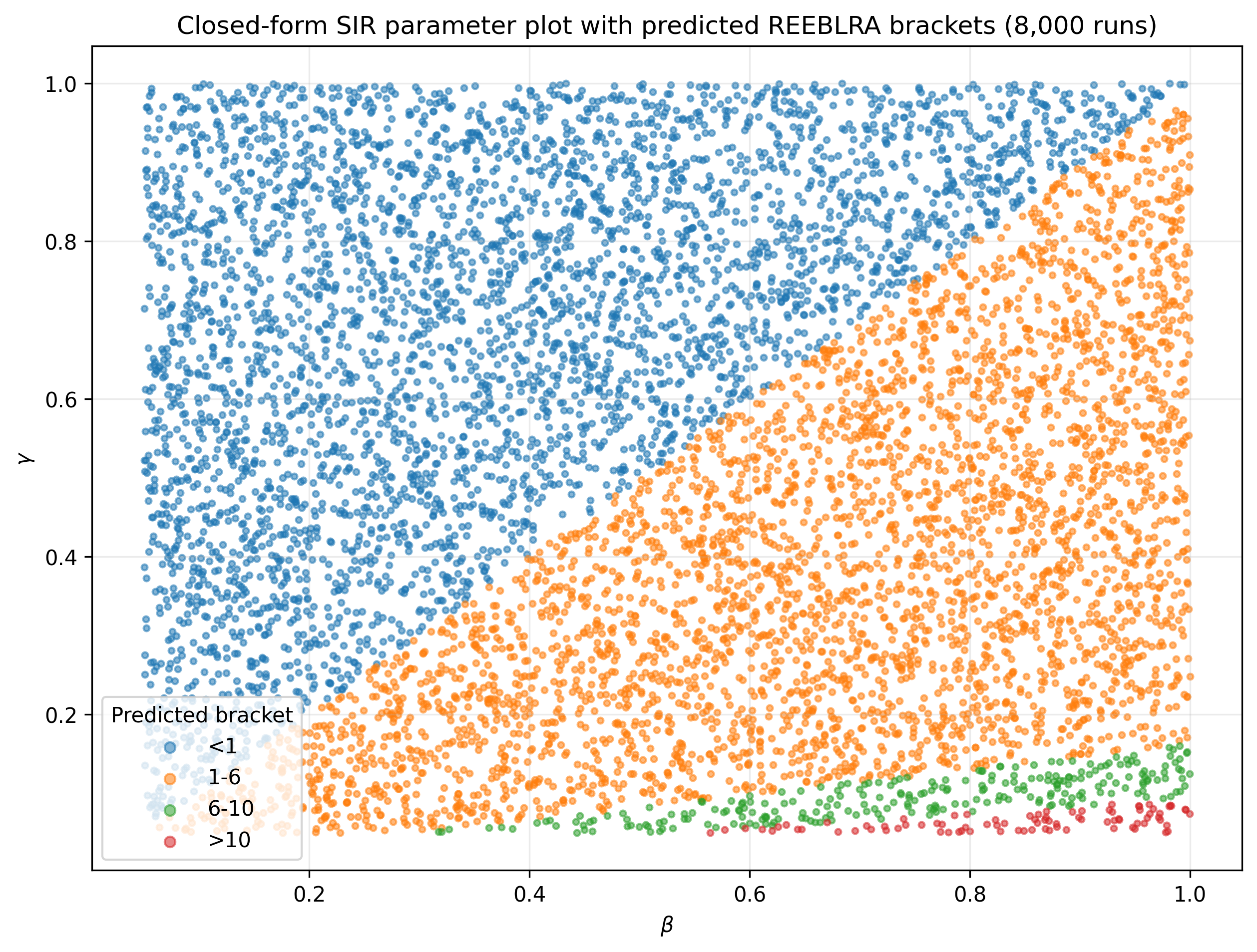}
\caption{Parameter-space plot for the closed-form SIR demonstration. Among the 8,000 parameter combinations, REEBLRA classified 3,987 (49.84\%) as $R_0<1$, 3,675 (45.94\%) as $1\leq R_0\leq 6$, 252 (3.15\%) as $6<R_0\leq 10$, and 86 (1.08\%) as $R_0>10$.}
\label{fig:param-space}
\end{figure}

\begin{table}[ht]
\centering
\caption{Five independent parameter examples for the Closed SIR demonstration. The analytic $R_0$ is compared with the binary REEBLRA prediction and the ordered-bracket REEBLRA prediction. The binary entry reports the predicted side of the unit threshold with $P(R_0>1)$ in parentheses; the ordered entry reports the modal bracket with its normalized class probability. Cases are selected from an independent deterministic Latin-hypercube design (seed 2600) by proximity to prespecified analytic targets $0.7,2,5,8,12$; prediction results were not used for case selection.}
\label{tab:closed-sir-five-examples}
\scriptsize
\setlength{\tabcolsep}{3pt}
\begin{tabularx}{\linewidth}{c >{\raggedright\arraybackslash}X c c c}
\toprule
Case & Parameter values & Analytic $R_0$ (bracket) & Binary REEBLRA & Ordered REEBLRA \\
\midrule
E1 & $\beta=0.6669,\allowbreak\,\gamma=0.9526$ & 0.700 ($<1$) & $\leq1$ ($P_{>1}=0.000$) & $<1$ ($p=1.000$) \\
E2 & $\beta=0.9858,\allowbreak\,\gamma=0.4928$ & 2.000 ($1\text{--}6$) & $>1$ ($P_{>1}=1.000$) & $1\text{--}6$ ($p=1.000$) \\
E3 & $\beta=0.6791,\allowbreak\,\gamma=0.1358$ & 5.000 ($1\text{--}6$) & $>1$ ($P_{>1}=1.000$) & $1\text{--}6$ ($p=0.967$) \\
E4 & $\beta=0.6308,\allowbreak\,\gamma=0.07904$ & 7.980 ($6\text{--}10$) & $>1$ ($P_{>1}=1.000$) & $6\text{--}10$ ($p=0.972$) \\
E5 & $\beta=0.8246,\allowbreak\,\gamma=0.0686$ & 12.021 ($>10$) & $>1$ ($P_{>1}=1.000$) & $>10$ ($p=0.960$) \\
\bottomrule
\end{tabularx}
\end{table}

This first example is a deliberately transparent sanity check. Without giving REEBLRA the analytic formula $R_0=\beta/\gamma$, the binary equation recovers a positive transmission coefficient and a negative removal coefficient (Table~\ref{tab:closed-sir-concordance}). More importantly, the fitted raw-scale magnitudes are almost identical, $|b_\beta|=60.075$ and $|b_\gamma|=60.129$. That near symmetry mirrors the exact threshold $\beta=\gamma$ and the equal-and-opposite unit log-elasticities of $\log R_0$. Thus REEBLRA does more than classify this toy model correctly: its coefficient geometry reproduces the mechanistic balance that defines the invasion boundary. For users, this provides an immediate diagnostic benchmark--if a fitted surrogate for a known SIR system did not recover these opposite, nearly equal directions over a balanced parameter domain, the sampling design, feature basis, or fit should be questioned before applying the method to a more complex model. The ordered-threshold equations then show how the same two parameters continue to move the system across progressively higher reproduction-number cutoffs (Figure~\ref{fig:param-space}).

Table~\ref{tab:closed-sir-five-examples} shows that REEBLRA assigns these five examples to the correct {analytic-$R_0$ brackets}; it does not recover their exact numerical values. A smaller bracket width $h$ can increase nominal resolution, but is not guaranteed to improve predictive accuracy because narrow brackets may have limited training support. 

\subsection{Open SIR model}

The open SIR model with demographic turnover is
\begin{align}
\frac{dS}{dt} &= \mu N-\beta\frac{SI}{N}-\mu S,\\
\frac{dI}{dt} &= \beta\frac{SI}{N}-\gamma I-\mu I,\\
\frac{dR}{dt} &= \gamma I-\mu R,\\
\frac{dC}{dt} &= \beta\frac{SI}{N}.
\end{align}
The analytic reproduction number is
\[
R_0=\frac{\beta}{\gamma+\mu}.
\]
The mechanistic log-threshold is $\log R_0=\log\beta-\log(\gamma+\mu)$. Hence $\partial\log R_0/\partial\beta>0$ while the derivatives with respect to both $\gamma$ and $\mu$ are negative.

\paragraph{REEBLRA results.}
The binary demonstration used the declared model-aware basis and five-fold held-out validation. The unstandardized score below is the posterior-mode (MAP) estimate at the center of the Gaussian Laplace approximation. 
\[\resizebox{0.98\linewidth}{!}{$\displaystyle \eta_{R_0>1}=-0.3650 + 54.7156\,\beta - 54.6899\,\gamma - 51.8621\,\mu.$}\]

The ordered-bracket demonstration used the generic compact log-cubic basis with step $h=5$, five-fold held-out validation, and independent $N(0,2^2)$ prior and MAP/Laplace fitting routine. The displayed scores are MAP plug-in logit equations, not integrated posterior-predictive probabilities. Representative expressions were:

\begin{center}
\resizebox{0.98\linewidth}{!}{$\displaystyle \eta_{R_0>1} = -9.4498 + 19.9294\,\log(\beta) + 7.8800\,\log(\gamma)\,\log(\mu) - 6.7794\,\log(\gamma) - 4.2481\,\log(\mu) - 3.5240\,\log(\beta)\,\log(\mu) + \cdots$}\\[2pt]
\resizebox{0.98\linewidth}{!}{$\displaystyle \eta_{R_0>6} = -14.4591 + 3.8716\,\log(\gamma)\,\log(\mu) - 3.4726\,\log(\beta)\,\log(\gamma) + 3.2166\,\log(\gamma) + 1.6060\,\log(\beta)^2 - 1.5573\,\log(\beta)\,\log(\mu) + \cdots$}\\[2pt]
\resizebox{0.98\linewidth}{!}{$\displaystyle \eta_{R_0>10} = -8.5646 + 4.2431\,\log(\gamma) - 2.5256\,\log(\beta)\,\log(\gamma) + 2.3848\,\log(\gamma)\,\log(\mu) - 1.4532\,\log(\beta) - 1.0112\,\log(\beta)\,\log(\mu) + \cdots$}
\end{center}

\begin{table}[ht]
\centering
\caption{Mechanistic--REEBLRA coefficient concordance for the Open SIR demonstration. The analytic role identifies whether each parameter enters the $R_0$ expression through a numerator, denominator, both, or the next-generation matrix (NGM). $G_j$ is the median analytic log-gradient $\partial\log R_0/\partial\log\theta_j$ and $D_j$ is the median derivative of the complete REEBLRA binary score over $0.8\le R_0\le1.2$. For this linear basis, $b_j$ is the unstandardized raw-parameter coefficient. Concordance is assessed from the sign of the full-score direction $D_j$ relative to $G_j$.}
\label{tab:open-sir-concordance}
\scriptsize
\setlength{\tabcolsep}{3pt}
\begin{tabularx}{\linewidth}{l >{\raggedright\arraybackslash}X r r r c}
\toprule
Parameter & Analytic $R_0$ role & Analytic $G_j$ & REEBLRA $b_j$ & Full-score $D_j$ & Concordant? \\
\midrule
$\beta$ & Numerator & +1.000 & +54.716 & +28.548 & Yes \\
$\gamma$ & Denominator: $(\gamma+\mu)$ & -0.723 & -54.690 & -21.794 & Yes \\
$\mu$ & Denominator: $(\gamma+\mu)$ & -0.277 & -51.862 & -7.761 & Yes \\
\bottomrule
\end{tabularx}
\end{table}

\begin{table}[ht]
\centering
\caption{Five independent parameter examples for the Open SIR demonstration. The analytic $R_0$ is compared with the binary REEBLRA prediction and the ordered-bracket REEBLRA prediction. The binary entry reports the predicted side of the unit threshold with $P(R_0>1)$ in parentheses; the ordered entry reports the modal bracket with its normalized class probability. Cases are selected from an independent deterministic Latin-hypercube design (seed 2601) by proximity to prespecified analytic targets $0.7,2,5,8,12$; prediction results were not used for case selection.}
\label{tab:open-sir-five-examples}
\scriptsize
\setlength{\tabcolsep}{3pt}
\begin{tabularx}{\linewidth}{c >{\raggedright\arraybackslash}X c c c}
\toprule
Case & Parameter values & Analytic $R_0$ (bracket) & Binary REEBLRA & Ordered REEBLRA \\
\midrule
E1 & $\beta=0.3592,\allowbreak\,\gamma=0.4541,\allowbreak\,\mu=0.05893$ & 0.700 ($<1$) & $\leq1$ ($P_{>1}=0.000$) & $<1$ ($p=1.000$) \\
E2 & $\beta=0.2401,\allowbreak\,\gamma=0.1142,\allowbreak\,\mu=0.005828$ & 2.000 ($1\text{--}6$) & $>1$ ($P_{>1}=0.998$) & $1\text{--}6$ ($p=1.000$) \\
E3 & $\beta=1.1,\allowbreak\,\gamma=0.1372,\allowbreak\,\mu=0.08269$ & 5.002 ($1\text{--}6$) & $>1$ ($P_{>1}=1.000$) & $1\text{--}6$ ($p=0.954$) \\
E4 & $\beta=0.9599,\allowbreak\,\gamma=0.08933,\allowbreak\,\mu=0.03058$ & 8.005 ($6\text{--}10$) & $>1$ ($P_{>1}=1.000$) & $6\text{--}10$ ($p=0.895$) \\
E5 & $\beta=1.182,\allowbreak\,\gamma=0.05958,\allowbreak\,\mu=0.03929$ & 11.958 ($>10$) & $>1$ ($P_{>1}=1.000$) & $>10$ ($p=0.692$) \\
\bottomrule
\end{tabularx}
\end{table}

The open-SIR result shows how REEBLRA can expose a mechanistic grouping that is not visible from prediction accuracy alone. Transmission remains strongly positive ($b_\beta=54.716$), while recovery and demographic turnover are both negative ($b_\gamma=-54.690$ and $b_\mu=-51.862$), consistent with their additive appearance in the same denominator $\gamma+\mu$ (Table~\ref{tab:open-sir-concordance}). The similar negative magnitudes indicate that, over the sampled threshold region, increasing either loss process moves the system away from invasion in a comparably strong direction. This does not mean that $\gamma$ and $\mu$ are biologically interchangeable; rather, it shows that the learned threshold surface has detected their shared mathematical role in removing infectious individuals. For model users, such coefficient patterns can reveal which parameters operate as a common threshold mechanism and can suggest useful reparameterizations such as an aggregate removal rate when interpretation or identifiability is difficult.

Table~\ref{tab:open-sir-five-examples} {illustrates recovery of $R_0$ brackets} rather than pointwise numerical values.

\subsection{Open SEIR model}\label{subsec:open-seir}

\begin{figure}[!htbp]
\centering
\begin{tikzpicture}[node distance=20mm]
\node[compartment] (S) {$S$};
\node[compartment,right=of S] (E) {$E$};
\node[compartment,right=of E] (I) {$I$};
\node[compartment,right=of I] (R) {$R$};
\coordinate[left=16mm of S] (birth);
\draw[flow] (birth) -- node[flowlabel,above] {$\mu N$} (S);
\draw[flow] (S) -- node[flowlabel,above] {$\beta SI/N$} (E);
\draw[flow] (E) -- node[flowlabel,above] {$\sigma E$} (I);
\draw[flow] (I) -- node[flowlabel,above] {$\gamma I$} (R);
\foreach \X/\lab in {S/\mu S,E/\mu E,I/\mu I,R/\mu R}{
  \draw[loss] (\X.south) -- ++(0,-10mm) node[flowlabel,below] {$\lab$};
}
\end{tikzpicture}
\caption{Open SEIR compartmental structure with demographic turnover. The cumulative-incidence state integrates $\beta SI/N$.}
\label{fig:SEIR}
\end{figure}

The open SEIR model (Figure~\ref{fig:SEIR}) is
\begin{align}
\frac{dS}{dt} &= \mu N-\beta\frac{SI}{N}-\mu S,\\
\frac{dE}{dt} &= \beta\frac{SI}{N}-\sigma E-\mu E,\\
\frac{dI}{dt} &= \sigma E-\gamma I-\mu I,\\
\frac{dR}{dt} &= \gamma I-\mu R,\\
\frac{dC}{dt} &= \beta\frac{SI}{N}.
\end{align}
Its analytic reproduction number is
\[
R_0=\frac{\beta\sigma}{(\sigma+\mu)(\gamma+\mu)}.
\]
The log form
\[
\log R_0=\log\beta+\log\sigma-\log(\sigma+\mu)-\log(\gamma+\mu)
\]
shows why sign interpretation must use derivatives rather than a numerator/denominator mnemonic alone. In particular,
\[
\frac{\partial\log R_0}{\partial\sigma}=\frac{1}{\sigma}-\frac{1}{\sigma+\mu}>0,
\qquad
\frac{\partial\log R_0}{\partial\mu}=-\frac{1}{\sigma+\mu}-\frac{1}{\gamma+\mu}<0.
\]
Thus $\beta$ and $\sigma$ are threshold-increasing, whereas $\gamma$ and $\mu$ are threshold-decreasing.

\paragraph{REEBLRA results.}
The binary demonstration used the declared model-aware basis and five-fold held-out validation. The unstandardized score below is the posterior-mode (MAP) estimate at the center of the Gaussian Laplace approximation. 
\[\resizebox{0.98\linewidth}{!}{$\displaystyle \eta_{R_0>1}=-14.5183 + 14.9421\,\log(\beta) + 7.9717\,\log(\sigma) + 5.7728\,\log(\gamma)\,\log(\mu) - 5.7618\,\log(\gamma) - 5.6778\,\log(\mu) + \cdots.$}\]

The ordered-bracket demonstration used the generic compact log-cubic basis with step $h=5$, five-fold held-out validation, and independent $N(0,2^2)$ prior and MAP/Laplace fitting routine. The displayed scores are MAP plug-in logit equations, not integrated posterior-predictive probabilities. Representative expressions were:

\begin{center}
\resizebox{0.98\linewidth}{!}{$\displaystyle \eta_{R_0>1} = -19.5161 + 17.1601\,\log(\beta) - 7.6621\,\log(\mu) + 7.0568\,\log(\gamma)\,\log(\mu) + 7.0479\,\log(\sigma) - 6.7967\,\log(\gamma) + \cdots$}\\[2pt]
\resizebox{0.98\linewidth}{!}{$\displaystyle \eta_{R_0>6} = -21.2621 + 4.2635\,\log(\gamma)\,\log(\mu) - 3.2188\,\log(\beta)\,\log(\gamma) - 2.2119\,\log(\beta)\,\log(\mu) + 2.1605\,\log(\beta)^2 + 1.9233\,\log(\gamma) + \cdots$}\\[2pt]
\resizebox{0.98\linewidth}{!}{$\displaystyle \eta_{R_0>10} = -12.2716 + 4.0578\,\log(\gamma) - 2.8369\,\log(\beta)\,\log(\gamma) + 2.7234\,\log(\gamma)\,\log(\mu) - 1.3846\,\log(\beta) + 1.3353\,\log(\beta)^2 + \cdots$}
\end{center}

\begin{table}[ht]
\centering
\caption{Mechanistic--REEBLRA coefficient concordance for the Open SEIR demonstration. The analytic role identifies whether each parameter enters the $R_0$ expression through a numerator, denominator, both, or the next-generation matrix (NGM). $G_j$ is the median analytic log-gradient $\partial\log R_0/\partial\log\theta_j$ and $D_j$ is the median derivative of the complete REEBLRA binary score over $0.8\le R_0\le1.2$. For this compact log-cubic basis, $b_j$ is the unstandardized first-order $\log\theta_j$ coefficient; interactions and curvature terms can make its isolated sign differ from the net parameter direction. Concordance is assessed from the sign of the full-score direction $D_j$ relative to $G_j$.}
\label{tab:open-seir-concordance}
\scriptsize
\setlength{\tabcolsep}{3pt}
\begin{tabularx}{\linewidth}{l >{\raggedright\arraybackslash}X r r r c}
\toprule
Parameter & Analytic $R_0$ role & Analytic $G_j$ & REEBLRA $b_j$ & Full-score $D_j$ & Concordant? \\
\midrule
$\beta$ & Numerator & +1.000 & +14.942 & +22.117 & Yes \\
$\sigma$ & Numerator and denominator: $\sigma/(\sigma+\mu)$ & +0.191 & +7.972 & +4.545 & Yes \\
$\gamma$ & Denominator: $(\gamma+\mu)$ & -0.766 & -5.762 & -16.352 & Yes \\
$\mu$ & Denominator: $(\sigma+\mu)(\gamma+\mu)$ & -0.463 & -5.678 & -9.580 & Yes \\
\bottomrule
\end{tabularx}
\end{table}

\begin{table}[ht]
\centering
\caption{Five independent parameter examples for the Open SEIR demonstration. The analytic $R_0$ is compared with the binary REEBLRA prediction and the ordered-bracket REEBLRA prediction. The binary entry reports the predicted side of the unit threshold with $P(R_0>1)$ in parentheses; the ordered entry reports the modal bracket with its normalized class probability. Cases are selected from an independent deterministic Latin-hypercube design (seed 2602) by proximity to prespecified analytic targets $0.7,2,5,8,12$; prediction results were not used for case selection.}
\label{tab:open-seir-five-examples}
\scriptsize
\setlength{\tabcolsep}{3pt}
\begin{tabularx}{\linewidth}{c >{\raggedright\arraybackslash}X c c c}
\toprule
Case & Parameter values & Analytic $R_0$ (bracket) & Binary REEBLRA & Ordered REEBLRA \\
\midrule
E1 & $\beta=0.2015,\allowbreak\,\sigma=0.2111,\allowbreak\,\gamma=0.2345,\allowbreak\,\mu=0.02401$ & 0.700 ($<1$) & $\leq1$ ($P_{>1}=0.000$) & $<1$ ($p=1.000$) \\
E2 & $\beta=1.214,\allowbreak\,\sigma=0.5854,\allowbreak\,\gamma=0.3574,\allowbreak\,\mu=0.1355$ & 2.000 ($1\text{--}6$) & $>1$ ($P_{>1}=1.000$) & $1\text{--}6$ ($p=1.000$) \\
E3 & $\beta=1.44,\allowbreak\,\sigma=0.9619,\allowbreak\,\gamma=0.2338,\allowbreak\,\mu=0.04209$ & 5.000 ($1\text{--}6$) & $>1$ ($P_{>1}=1.000$) & $1\text{--}6$ ($p=0.947$) \\
E4 & $\beta=1.492,\allowbreak\,\sigma=0.489,\allowbreak\,\gamma=0.1284,\allowbreak\,\mu=0.04318$ & 7.993 ($6\text{--}10$) & $>1$ ($P_{>1}=1.000$) & $6\text{--}10$ ($p=0.895$) \\
E5 & $\beta=1.131,\allowbreak\,\sigma=0.9985,\allowbreak\,\gamma=0.07675,\allowbreak\,\mu=0.01663$ & 11.917 ($>10$) & $>1$ ($P_{>1}=1.000$) & $>10$ ($p=0.843$) \\
\bottomrule
\end{tabularx}
\end{table}

The open-SEIR example demonstrates why REEBLRA is useful when the threshold surface is nonlinear even though its qualitative mechanism is known. The full binary score gives positive local directions for $\beta$ and $\sigma$ and negative directions for $\gamma$ and $\mu$, matching the analytic derivatives (Table~\ref{tab:open-seir-concordance}). The fitted main effect of $\beta$ is the largest positive term, while $\sigma$ is weaker because faster exposed-to-infectious progression increases transmission potential but also appears inside $\sigma+\mu$. Conversely, demographic turnover acts through both denominator factors and has a substantial negative full-score effect. The interaction terms are informative rather than decorative: they show that the learned decision boundary needs joint parameter combinations to reproduce the curvature of the analytic threshold. For users working with models whose exact $R_0$ is cumbersome, this pattern can distinguish a parameter with a consistently monotone role from one whose contribution is conditional on other rates, while still retaining a compact interpretable classifier.

Table~\ref{tab:open-seir-five-examples} {illustrates recovery of $R_0$ brackets} rather than pointwise numerical values.

\subsection{SEIR model with vaccination}

The vaccinated SEIR model with demographic turnover is
\begin{align}
\frac{dS}{dt} &= \mu N-\beta\frac{SI}{N}-vS-\mu S,\\
\frac{dV}{dt} &= vS-\mu V,\\
\frac{dE}{dt} &= \beta\frac{SI}{N}-\sigma E-\mu E,\\
\frac{dI}{dt} &= \sigma E-\gamma I-\mu I,\\
\frac{dR}{dt} &= \gamma I-\mu R,\\
\frac{dC}{dt} &= \beta\frac{SI}{N}.
\end{align}
The app reports the control or effective reproduction number under continuous vaccination as
\[
R_e(0)=\frac{\mu}{\mu+v}\cdot\frac{\beta\sigma}{(\sigma+\mu)(\gamma+\mu)}.
\]
The susceptible-fraction multiplier $\mu/(\mu+v)$ is included because the disease-free susceptible fraction under continuous vaccination is $S^*/N=\mu/(\mu+v)$. The basic reproduction number without vaccination excludes this multiplier. Here the effect of $\mu$ cannot be inferred from a simple numerator/denominator label because it occurs in both. Specifically,
\[
\frac{\partial\log R_e(0)}{\partial\mu}
=\frac{1}{\mu}-\frac{1}{\mu+v}-\frac{1}{\sigma+\mu}-\frac{1}{\gamma+\mu},
\]
so its net direction depends on the local parameter regime.

\paragraph{REEBLRA results.}
The binary demonstration used the declared model-aware basis and five-fold held-out validation. The unstandardized score below is the posterior-mode (MAP) estimate at the center of the Gaussian Laplace approximation. 
\[\resizebox{0.98\linewidth}{!}{$\displaystyle \eta_{R_e(0)>1}=-16.0078 + 10.2025\,\log(\beta) + 4.0229\,\log(\sigma) - 3.8673\,\log(\gamma) + 3.7181\,\log(\mu)\,\log(v) + 3.5497\,\log(\gamma)\,\log(\mu) + \cdots.$}\]

The ordered-bracket demonstration used the generic compact log-cubic basis with step $h=5$, five-fold held-out validation, and independent $N(0,2^2)$ prior and MAP/Laplace fitting routine. The displayed scores are MAP plug-in logit equations, not integrated posterior-predictive probabilities. Representative expressions were:

\begin{center}
\resizebox{0.98\linewidth}{!}{$\displaystyle \eta_{R_e(0)>1} = -17.4662 + 10.3197\,\log(\beta) + 4.5834\,\log(\mu)\,\log(v) + 4.5611\,\log(\sigma) + 4.2399\,\log(\gamma)\,\log(\mu) - 4.2182\,\log(\gamma) + \cdots$}\\[2pt]
\resizebox{0.98\linewidth}{!}{$\displaystyle \eta_{R_e(0)>6} = -28.7462 + 3.6188\,\log(\beta)^2 + 3.5368\,\log(\gamma)\,\log(\mu) + 2.9613\,\log(\mu)\,\log(v) - 2.3085\,\log(\beta)\,\log(\gamma) + 1.9915\,\log(\beta)\,\log(\sigma) + \cdots$}\\[2pt]
\resizebox{0.98\linewidth}{!}{$\displaystyle \eta_{R_e(0)>10} = -19.2916 + 3.7470\,\log(\gamma) - 2.3729\,\log(\beta) + 2.0502\,\log(\beta)^2 + 1.9924\,\log(\gamma)\,\log(\mu) - 1.9262\,\log(\beta)\,\log(\gamma) + \cdots$}
\end{center}

\begin{table}[ht]
\centering
\caption{Mechanistic--REEBLRA coefficient concordance for the SEIR with vaccination demonstration. The analytic role identifies whether each parameter enters the $R_e(0)$ expression through a numerator, denominator, both, or the next-generation matrix (NGM). $G_j$ is the median analytic log-gradient $\partial\log R_e(0)/\partial\log\theta_j$ and $D_j$ is the median derivative of the complete REEBLRA binary score over $0.8\le R_e(0)\le1.2$. For this compact log-cubic basis, $b_j$ is the unstandardized first-order $\log\theta_j$ coefficient; interactions and curvature terms can make its isolated sign differ from the net parameter direction. Concordance is assessed from the sign of the full-score direction $D_j$ relative to $G_j$.}
\label{tab:seir-with-vaccination-concordance}
\scriptsize
\setlength{\tabcolsep}{3pt}
\begin{tabularx}{\linewidth}{l >{\raggedright\arraybackslash}X r r r c}
\toprule
Parameter & Analytic $R_e(0)$ role & Analytic $G_j$ & REEBLRA $b_j$ & Full-score $D_j$ & Concordant? \\
\midrule
$\beta$ & Numerator & +1.000 & +10.203 & +16.804 & Yes \\
$\sigma$ & Numerator and denominator: $\sigma/(\sigma+\mu)$ & +0.262 & +4.023 & +4.675 & Yes \\
$\gamma$ & Denominator: $(\gamma+\mu)$ & -0.678 & -3.867 & -10.892 & Yes \\
$\mu$ & Numerator and denominators & -0.354 & +1.357 & -5.349 & Yes \\
$v$ & Denominator: $(\mu+v)$ & -0.326 & +0.151 & -4.816 & Yes \\
\bottomrule
\end{tabularx}
\end{table}

\begin{table}[ht]
\centering
\caption{Five independent parameter examples for the SEIR with vaccination demonstration. The analytic $R_e(0)$ is compared with the binary REEBLRA prediction and the ordered-bracket REEBLRA prediction. The binary entry reports the predicted side of the unit threshold with $P(R_e(0)>1)$ in parentheses; the ordered entry reports the modal bracket with its normalized class probability. Cases are selected from an independent deterministic Latin-hypercube design (seed 2603) by proximity to prespecified analytic targets $0.7,2,5,8,12$; prediction results were not used for case selection.}
\label{tab:seir-with-vaccination-five-examples}
\scriptsize
\setlength{\tabcolsep}{3pt}
\begin{tabularx}{\linewidth}{c >{\raggedright\arraybackslash}X c c c}
\toprule
Case & Parameter values & Analytic $R_e$ (bracket) & Binary REEBLRA & Ordered REEBLRA \\
\midrule
E1 & $\beta=0.1994,\allowbreak\,\sigma=0.2458,\allowbreak\,\gamma=0.06908,\allowbreak\,\mu=0.03407,\allowbreak\,v=0.04856$ & 0.700 ($<1$) & $\leq1$ ($P_{>1}=0.002$) & $<1$ ($p=0.999$) \\
E2 & $\beta=3.947,\allowbreak\,\sigma=0.6551,\allowbreak\,\gamma=0.3801,\allowbreak\,\mu=0.02933,\allowbreak\,v=0.106$ & 2.000 ($1\text{--}6$) & $>1$ ($P_{>1}=1.000$) & $1\text{--}6$ ($p=1.000$) \\
E3 & $\beta=2.623,\allowbreak\,\sigma=0.6904,\allowbreak\,\gamma=0.1032,\allowbreak\,\mu=0.1549,\allowbreak\,v=0.1023$ & 4.998 ($1\text{--}6$) & $>1$ ($P_{>1}=1.000$) & $1\text{--}6$ ($p=0.969$) \\
E4 & $\beta=2.208,\allowbreak\,\sigma=0.69,\allowbreak\,\gamma=0.09018,\allowbreak\,\mu=0.0204,\allowbreak\,v=0.02902$ & 8.004 ($6\text{--}10$) & $>1$ ($P_{>1}=1.000$) & $6\text{--}10$ ($p=0.902$) \\
E5 & $\beta=3.796,\allowbreak\,\sigma=0.7814,\allowbreak\,\gamma=0.03307,\allowbreak\,\mu=0.1389,\allowbreak\,v=0.07906$ & 11.943 ($>10$) & $>1$ ($P_{>1}=1.000$) & $>10$ ($p=0.841$) \\
\bottomrule
\end{tabularx}
\end{table}

This example is particularly important for interpretation because a main polynomial coefficient is not necessarily the net parameter effect. The isolated $\log v$ main coefficient is slightly positive, but the complete fitted score gives $D_v=-4.82$ near $R_e(0)=1$, correctly recovering the threshold-reducing role of vaccination once interactions are included. Likewise, $\mu$ appears in both favorable and unfavorable algebraic locations in $R_e(0)$; its local full-score derivative is negative over the sampled boundary even though its standalone main coefficient is positive. This is exactly the situation for which reporting only a coefficient table could be misleading. REEBLRA therefore gives users two levels of insight: individual fitted feature terms describe how the surrogate constructs the boundary, while the derivative of the full score describes the net local direction of the original epidemiological parameter. For intervention models, that distinction can identify when the effect of a control parameter is robustly threshold-lowering and when its apparent role depends on interactions or the operating regime (Table~\ref{tab:seir-with-vaccination-concordance}).

Table~\ref{tab:seir-with-vaccination-five-examples} {illustrates recovery of $R_0$ brackets} rather than pointwise numerical values.

\subsection{Age-structured SIR model (two groups)}

The age-structured SIR model is
\begin{align}
\lambda_1 &= \beta\left(c_{11}\frac{I_1}{N_1}+c_{12}\frac{I_2}{N_2}\right),\\
\lambda_2 &= \beta\left(c_{21}\frac{I_1}{N_1}+c_{22}\frac{I_2}{N_2}\right),\\
\frac{dS_1}{dt} &= -\lambda_1 S_1, &
\frac{dI_1}{dt} &= \lambda_1 S_1-\gamma_1 I_1, &
\frac{dR_1}{dt} &= \gamma_1 I_1,\\
\frac{dS_2}{dt} &= -\lambda_2 S_2, &
\frac{dI_2}{dt} &= \lambda_2 S_2-\gamma_2 I_2, &
\frac{dR_2}{dt} &= \gamma_2 I_2,\\
\frac{dC}{dt} &= \lambda_1S_1+\lambda_2S_2.
\end{align}
The analytic reproduction number is the spectral radius of
\[
K=\begin{pmatrix}
\dfrac{\beta c_{11}}{\gamma_1} & \dfrac{\beta c_{12}}{\gamma_2}\\[0.7em]
\dfrac{\beta c_{21}}{\gamma_1} & \dfrac{\beta c_{22}}{\gamma_2}
\end{pmatrix},
\qquad
R_0=\rho(K),
\]
assuming equally sized, fully susceptible groups at the disease-free state. In the app, total infectious and susceptible columns are constructed as $I_{\mathrm{total}}=I_1+I_2$ and $S_{\mathrm{total}}=S_1+S_2$ for growth-rate fitting.

\paragraph{REEBLRA results.}
The binary demonstration used the declared model-aware basis and five-fold held-out validation. The unstandardized score below is the posterior-mode (MAP) estimate at the center of the Gaussian Laplace approximation. 
\[\resizebox{0.98\linewidth}{!}{$\displaystyle \eta_{R_0>1}=6.1983 + 8.2880\,\log(\beta) - 6.1155\,\log(\gamma_1) - 6.0797\,\log(\gamma_2) + 4.7133\,\log(\gamma_2)\,\log(c_{11}) + 4.4751\,\log(\gamma_1)\,\log(c_{22}) + \cdots.$}\]

The ordered-bracket demonstration used the generic compact log-cubic basis with step $h=5$, five-fold held-out validation, and independent $N(0,2^2)$ prior and MAP/Laplace fitting routine. The displayed scores are MAP plug-in logit equations, not integrated posterior-predictive probabilities. Representative expressions were:

\begin{center}
\resizebox{0.98\linewidth}{!}{$\displaystyle \eta_{R_0>1} = 7.6707 + 11.2516\,\log(\beta) - 7.5780\,\log(\gamma_2) - 6.8347\,\log(\gamma_1) + 6.8132\,\log(\gamma_1)\,\log(c_{22}) + 6.7955\,\log(\gamma_2)\,\log(c_{11}) + \cdots$}\\[2pt]
\resizebox{0.98\linewidth}{!}{$\displaystyle \eta_{R_0>6} = -17.6514 + 6.4183\,\log(c_{11}) + 5.9754\,\log(\beta) - 5.9168\,\log(\gamma_1) + 5.6031\,\log(c_{22}) - 5.5716\,\log(\gamma_2) + \cdots$}\\[2pt]
\resizebox{0.98\linewidth}{!}{$\displaystyle \eta_{R_0>10} = -11.1985 + 6.0485\,\log(c_{11}) + 4.1984\,\log(c_{22}) - 3.5351\,\log(\beta)\,\log(\gamma_1) - 3.1675\,\log(c_{11})\,\log(c_{22}) - 2.9984\,\log(\beta)\,\log(\gamma_2) + \cdots$}
\end{center}

\begin{table}[ht]
\centering
\caption{Mechanistic--REEBLRA coefficient concordance for the Age-structured SIR demonstration. The analytic role identifies whether each parameter enters the $R_0$ expression through a numerator, denominator, both, or the next-generation matrix (NGM). $G_j$ is the median analytic log-gradient $\partial\log R_0/\partial\log\theta_j$ and $D_j$ is the median derivative of the complete REEBLRA binary score over $0.8\le R_0\le1.2$. For this compact log-cubic basis, $b_j$ is the unstandardized first-order $\log\theta_j$ coefficient; interactions and curvature terms can make its isolated sign differ from the net parameter direction. Concordance is assessed from the sign of the full-score direction $D_j$ relative to $G_j$.}
\label{tab:age-structured-sir-concordance}
\scriptsize
\setlength{\tabcolsep}{3pt}
\begin{tabularx}{\linewidth}{l >{\raggedright\arraybackslash}X r r r c}
\toprule
Parameter & Analytic $R_0$ role & Analytic $G_j$ & REEBLRA $b_j$ & Full-score $D_j$ & Concordant? \\
\midrule
$\beta$ & NGM numerator multiplier & +1.000 & +8.288 & +16.796 & Yes \\
$\gamma_1$ & NGM denominator (column 1) & -0.467 & -6.115 & -7.317 & Yes \\
$\gamma_2$ & NGM denominator (column 2) & -0.533 & -6.080 & -8.005 & Yes \\
$c_{11}$ & NGM numerator/contact & +0.245 & -1.569 & +4.462 & Yes \\
$c_{12}$ & NGM numerator/contact & +0.150 & +3.155 & +3.000 & Yes \\
$c_{21}$ & NGM numerator/contact & +0.150 & +2.806 & +2.798 & Yes \\
$c_{22}$ & NGM numerator/contact & +0.302 & +1.059 & +5.574 & Yes \\
\bottomrule
\end{tabularx}
\end{table}

\begin{table}[ht]
\centering
\caption{Five independent parameter examples for the Age-structured SIR demonstration. The analytic $R_0$ is compared with the binary REEBLRA prediction and the ordered-bracket REEBLRA prediction. The binary entry reports the predicted side of the unit threshold with $P(R_0>1)$ in parentheses; the ordered entry reports the modal bracket with its normalized class probability. Cases are selected from an independent deterministic Latin-hypercube design (seed 2604) by proximity to prespecified analytic targets $0.7,2,5,8,12$; prediction results were not used for case selection.}
\label{tab:age-structured-sir-five-examples}
\scriptsize
\setlength{\tabcolsep}{3pt}
\begin{tabularx}{\linewidth}{c >{\raggedright\arraybackslash}X c c c}
\toprule
Case & Parameter values & Analytic $R_0$ (bracket) & Binary REEBLRA & Ordered REEBLRA \\
\midrule
E1 & $\beta=0.1596,\allowbreak\,\gamma_1=0.4024,\allowbreak\,\gamma_2=0.308,\allowbreak\,c_{11}=1.328,\allowbreak\,c_{12}=0.2807,\allowbreak\,c_{21}=0.6058,\allowbreak\,c_{22}=0.9627$ & 0.700 ($<1$) & $\leq1$ ($P_{>1}=0.013$) & $<1$ ($p=0.994$) \\
E2 & $\beta=0.132,\allowbreak\,\gamma_1=0.132,\allowbreak\,\gamma_2=0.1189,\allowbreak\,c_{11}=0.9655,\allowbreak\,c_{12}=0.9897,\allowbreak\,c_{21}=0.3708,\allowbreak\,c_{22}=1.448$ & 2.000 ($1\text{--}6$) & $>1$ ($P_{>1}=1.000$) & $1\text{--}6$ ($p=1.000$) \\
E3 & $\beta=0.2803,\allowbreak\,\gamma_1=0.4069,\allowbreak\,\gamma_2=0.08217,\allowbreak\,c_{11}=0.5081,\allowbreak\,c_{12}=0.08871,\allowbreak\,c_{21}=0.4798,\allowbreak\,c_{22}=1.459$ & 5.000 ($1\text{--}6$) & $>1$ ($P_{>1}=1.000$) & $1\text{--}6$ ($p=0.830$) \\
E4 & $\beta=0.7881,\allowbreak\,\gamma_1=0.07343,\allowbreak\,\gamma_2=0.2996,\allowbreak\,c_{11}=0.6871,\allowbreak\,c_{12}=0.1643,\allowbreak\,c_{21}=0.6821,\allowbreak\,c_{22}=1.125$ & 8.003 ($6\text{--}10$) & $>1$ ($P_{>1}=1.000$) & $6\text{--}10$ ($p=0.873$) \\
E5 & $\beta=0.5928,\allowbreak\,\gamma_1=0.4841,\allowbreak\,\gamma_2=0.06423,\allowbreak\,c_{11}=1.142,\allowbreak\,c_{12}=0.6892,\allowbreak\,c_{21}=0.1383,\allowbreak\,c_{22}=1.288$ & 11.991 ($>10$) & $>1$ ($P_{>1}=1.000$) & $>10$ ($p=0.739$) \\
\bottomrule
\end{tabularx}
\end{table}

The spectral-radius example extends the interpretation beyond scalar numerator--denominator formulas. The full REEBLRA score is positive in $\beta$ and all four contact coefficients, and negative in both recovery rates near the threshold, consistent with Perron--Frobenius monotonicity of the nonnegative next-generation matrix (Table \ref{tab:age-structured-sir-concordance}). Some isolated main coefficients, such as that of $\log c_{11}$, need not have the same sign because the compact log-cubic basis also contains contact--contact, contact--recovery, and curvature terms; the positive full-score derivative restores the correct parameter-level direction. This gives users a practical way to interrogate structured models: REEBLRA can indicate whether transmission intensity, recovery, or particular mixing pathways are driving movement toward the invasion boundary without requiring the spectral-radius expression to be symbolically simplified. Differences in $D_j$ across contact terms can be used as local, domain-specific screening signals, but they should not be interpreted as universal causal rankings because they depend on the fixed matrix, sampled ranges, and the current threshold.

Table~\ref{tab:age-structured-sir-five-examples} {illustrates recovery of $R_0$ brackets} rather than pointwise numerical values.

\subsection{Fifteen-compartment five-group SIR model}\label{subsec:five-group}

To test whether REEBLRA remains useful beyond small textbook systems, we added a five-group SIR model with 15 epidemiological compartments, $(S_i,I_i,R_i)$ for $i=1,\ldots,5$ (Figure~\ref{fig:five-group}). The numerical helper states for cumulative incidence are not counted among the 15 epidemiological compartments. For each group,
\begin{align}
\lambda_i &= \beta\sum_{j=1}^{5} C_{ij}\frac{I_j}{N_j},\\
\frac{dS_i}{dt} &= -\lambda_iS_i, &
\frac{dI_i}{dt} &= \lambda_iS_i-\gamma_iI_i, &
\frac{dR_i}{dt} &= \gamma_iI_i.
\end{align}
The fixed mixing matrix is
\[
C=\begin{pmatrix}
3.0&1.2&0.6&0.3&0.2\\
1.0&3.2&1.1&0.5&0.3\\
0.5&1.0&2.8&1.0&0.5\\
0.3&0.5&1.0&2.6&0.9\\
0.2&0.3&0.5&0.9&2.4
\end{pmatrix}.
\]
With fully susceptible groups, the next-generation matrix is
\[
K_{ij}=\frac{\beta C_{ij}}{\gamma_j},\qquad R_0=\rho(K).
\]
The built-in parameter vector is therefore only six-dimensional, $(\beta,\gamma_1,\ldots,\gamma_5)$, even though the dynamic state dimension is 15. This separation is useful for testing whether state-space complexity alone defeats the surrogate threshold approach.

\begin{figure}[!htbp]
\centering
\begin{tikzpicture}[
    node distance=8mm and 14mm,
    scale=0.88,
    transform shape
]

% Five group-specific SIR triplets
\foreach \i in {1,...,5}{
  \node[compartment] (S\i) at (0,-1.15*\i) {$S_{\i}$};
  \node[compartment] (I\i) at (3.0,-1.15*\i) {$I_{\i}$};
  \node[compartment] (R\i) at (6.0,-1.15*\i) {$R_{\i}$};

  \draw[flow] (S\i) --
    node[flowlabel,above] {$\lambda_{\i}S_{\i}$}
    (I\i);

  \draw[flow] (I\i) --
    node[flowlabel,above] {$\gamma_{\i}I_{\i}$}
    (R\i);
}

% Dashed enclosures for susceptible and infectious groups
\node[
    draw,
    dashed,
    rounded corners=8pt,
    inner xsep=7pt,
    inner ysep=7pt,
    fit=(S1)(S5)
] (Sgroup) {};

\node[
    draw,
    dashed,
    rounded corners=8pt,
    inner xsep=7pt,
    inner ysep=7pt,
    fit=(I1)(I5)
] (Igroup) {};

% Aggregate cross-group infection link
\draw[
    flow,
    dashed,
    bend right=35
]
(Igroup.north)
to node[
    above,
    align=center,
    font=\scriptsize
] {
    cross-group infection
    through contact matrix $C$
}
(Sgroup.north);

% Force-of-infection definition
\node[
    align=left,
    text width=4.0cm
] at (9.1,-3.45) {
\[
\lambda_i
=
\beta
\sum_{j=1}^{5}
C_{ij}\frac{I_j}{N_j}.
\]
Each $I_j$ contributes to the
force of infection acting on
each susceptible group $S_i$.
};

\end{tikzpicture}

\caption{Compact representation of the 15-compartment five-group SIR benchmark.
Each row represents one group-specific SIR triplet. The dashed enclosures and
aggregate dashed arrow indicate cross-group infection: infectious individuals
from all groups contribute to the force of infection acting on susceptible
individuals in all groups through the fixed contact matrix $C$.}
\label{fig:five-group}
\end{figure}

\paragraph{REEBLRA results.}
The binary demonstration used the declared model-aware basis and five-fold held-out validation. The unstandardized score below is the posterior-mode (MAP) estimate at the center of the Gaussian Laplace approximation. 
\[\resizebox{0.98\linewidth}{!}{$\displaystyle \eta_{R_0>1}=0.9042 - 2.8558\,\log(\gamma_2) - 2.8100\,\log(\gamma_1) - 2.5984\,\log(\gamma_5) - 2.0845\,\log(\gamma_3) - 2.0739\,\log(\gamma_2)\,\log(\gamma_5) + \cdots.$}\]

The ordered-bracket demonstration used the generic compact log-cubic basis with step $h=5$, five-fold held-out validation, and independent $N(0,2^2)$ prior and MAP/Laplace fitting routine. The displayed scores are MAP plug-in logit equations, not integrated posterior-predictive probabilities. Representative expressions were:

\begin{center}
\resizebox{0.98\linewidth}{!}{$\displaystyle \eta_{R_0>1} = 4.0181 - 3.4039\,\log(\gamma_3) - 2.4754\,\log(\gamma_2)\,\log(\gamma_4) - 2.1734\,\log(\gamma_5) - 2.0688\,\log(\gamma_2) - 1.9267\,\log(\gamma_4) + \cdots$}\\[2pt]
\resizebox{0.98\linewidth}{!}{$\displaystyle \eta_{R_0>6} = -16.6150 - 7.7408\,\log(\gamma_2) - 6.2590\,\log(\gamma_4) - 6.2427\,\log(\gamma_3) - 5.7667\,\log(\gamma_1) - 4.5164\,\log(\beta)\,\log(\gamma_2) + \cdots$}\\[2pt]
\resizebox{0.98\linewidth}{!}{$\displaystyle \eta_{R_0>10} = -6.5899 - 4.3635\,\log(\gamma_1) - 3.8354\,\log(\beta)\,\log(\gamma_3) - 3.6487\,\log(\beta)\,\log(\gamma_2) - 3.5186\,\log(\gamma_5) - 3.3570\,\log(\gamma_3) + \cdots$}
\end{center}

\begin{table}[ht]
\centering
\caption{Mechanistic--REEBLRA coefficient concordance for the 15-compartment five-group SIR demonstration. The analytic role identifies whether each parameter enters the $R_0$ expression through a numerator, denominator, both, or the next-generation matrix (NGM). $G_j$ is the median analytic log-gradient $\partial\log R_0/\partial\log\theta_j$ and $D_j$ is the median derivative of the complete REEBLRA binary score over $0.8\le R_0\le1.2$. For this compact log-cubic basis, $b_j$ is the unstandardized first-order $\log\theta_j$ coefficient; interactions and curvature terms can make its isolated sign differ from the net parameter direction. Concordance is assessed from the sign of the full-score direction $D_j$ relative to $G_j$.}
\label{tab:15-compartment-five-group-sir-concordance}
\scriptsize
\setlength{\tabcolsep}{3pt}
\begin{tabularx}{\linewidth}{l >{\raggedright\arraybackslash}X r r r c}
\toprule
Parameter & Analytic $R_0$ role & Analytic $G_j$ & REEBLRA $b_j$ & Full-score $D_j$ & Concordant? \\
\midrule
$\beta$ & NGM numerator multiplier & +1.000 & +0.048 & +12.710 & Yes \\
$\gamma_1$ & NGM denominator (column 1) & -0.106 & -2.810 & -2.352 & Yes \\
$\gamma_2$ & NGM denominator (column 2) & -0.168 & -2.856 & -1.894 & Yes \\
$\gamma_3$ & NGM denominator (column 3) & -0.118 & -2.085 & -2.143 & Yes \\
$\gamma_4$ & NGM denominator (column 4) & -0.078 & -1.279 & -0.913 & Yes \\
$\gamma_5$ & NGM denominator (column 5) & -0.045 & -2.598 & -1.067 & Yes \\
\bottomrule
\end{tabularx}
\end{table}

\begin{table}[ht]
\centering
\caption{Five independent parameter examples for the 15-compartment five-group SIR demonstration. The analytic $R_0$ is compared with the binary REEBLRA prediction and the ordered-bracket REEBLRA prediction. The binary entry reports the predicted side of the unit threshold with $P(R_0>1)$ in parentheses; the ordered entry reports the modal bracket with its normalized class probability. Cases are selected from an independent deterministic Latin-hypercube design (seed 2605) by proximity to prespecified analytic targets $0.7,2,5,8,12$; prediction results were not used for case selection.}
\label{tab:15-compartment-five-group-sir-five-examples}
\scriptsize
\setlength{\tabcolsep}{3pt}
\begin{tabularx}{\linewidth}{c >{\raggedright\arraybackslash}X c c c}
\toprule
Case & Parameter values & Analytic $R_0$ (bracket) & Binary REEBLRA & Ordered REEBLRA \\
\midrule
E1 & $\beta=0.04168,\allowbreak\,\gamma_1=0.4096,\allowbreak\,\gamma_2=0.4303,\allowbreak\,\gamma_3=0.4377,\allowbreak\,\gamma_4=0.2124,\allowbreak\,\gamma_5=0.32$ & 0.700 ($<1$) & $\leq1$ ($P_{>1}=0.033$) & $<1$ ($p=0.992$) \\
E2 & $\beta=0.09297,\allowbreak\,\gamma_1=0.2553,\allowbreak\,\gamma_2=0.2499,\allowbreak\,\gamma_3=0.2875,\allowbreak\,\gamma_4=0.211,\allowbreak\,\gamma_5=0.3328$ & 2.000 ($1\text{--}6$) & $>1$ ($P_{>1}=0.998$) & $1\text{--}6$ ($p=0.999$) \\
E3 & $\beta=0.1023,\allowbreak\,\gamma_1=0.1678,\allowbreak\,\gamma_2=0.1676,\allowbreak\,\gamma_3=0.283,\allowbreak\,\gamma_4=0.05951,\allowbreak\,\gamma_5=0.2702$ & 5.001 ($1\text{--}6$) & $>1$ ($P_{>1}=1.000$) & $1\text{--}6$ ($p=1.000$) \\
E4 & $\beta=0.233,\allowbreak\,\gamma_1=0.1805,\allowbreak\,\gamma_2=0.4199,\allowbreak\,\gamma_3=0.09769,\allowbreak\,\gamma_4=0.3524,\allowbreak\,\gamma_5=0.1475$ & 8.000 ($6\text{--}10$) & $>1$ ($P_{>1}=1.000$) & $6\text{--}10$ ($p=0.988$) \\
E5 & $\beta=0.2597,\allowbreak\,\gamma_1=0.3044,\allowbreak\,\gamma_2=0.169,\allowbreak\,\gamma_3=0.2414,\allowbreak\,\gamma_4=0.4289,\allowbreak\,\gamma_5=0.05476$ & 12.003 ($>10$) & $>1$ ($P_{>1}=1.000$) & $>10$ ($p=0.650$) \\
\bottomrule
\end{tabularx}
\end{table}

The 15-compartment example separates state-space complexity from parameter-space complexity. Although the dynamical system contains 15 epidemiological states, REEBLRA only needs to learn a six-parameter threshold surface because the mixing matrix is fixed (Table~\ref{tab:15-compartment-five-group-sir-concordance}). The complete fitted score recovers a positive local effect of the common transmission multiplier $\beta$ and negative local effects for all five recovery rates, in agreement with the monotonicity of $\rho(K)$. The very small isolated main coefficient of $\log\beta$ is not evidence that transmission is unimportant: once the many interaction and curvature terms are included, its full local effect is strongly positive ($D_\beta=12.71$). This is a concrete illustration of why parameter interpretation in an expanded feature basis must use the full score rather than one main coefficient. For users, the group-specific recovery derivatives provide a compact summary of which infectious-duration changes most strongly move the system across the threshold under the fixed contact structure; these are useful screening signals for model refinement or intervention exploration, but they remain local to the sampled parameter domain.

Table~\ref{tab:15-compartment-five-group-sir-five-examples} {illustrates recovery of $R_0$ brackets} rather than pointwise numerical values.

\subsection{Vector-borne model}\label{subsec:vector}

\begin{figure}[!htbp]
\centering
\begin{tikzpicture}[node distance=20mm and 24mm]
\node[compartment] (Sh) {$S_h$};
\node[compartment,right=of Sh] (Ih) {$I_h$};
\node[compartment,right=of Ih] (Rh) {$R_h$};
\node[compartment,below=18mm of Sh] (Sv) {$S_v$};
\node[compartment,right=of Sv] (Iv) {$I_v$};
\draw[flow] (Sh) -- node[flowlabel,above] {$\beta_{vh}mS_hI_v$} (Ih);
\draw[flow] (Ih) -- node[flowlabel,above] {$\gamma_h I_h$} (Rh);
\draw[flow] (Sv) -- node[flowlabel,below] {$\beta_{hv}S_vI_h$} (Iv);
\draw[flow] (Iv) -- ++(22mm,0) node[flowlabel,above,pos=0.5] {$\mu_v I_v$};
\draw[flow] ($(Sv.west)+(-16mm,0)$) -- node[flowlabel,above] {$\mu_v$} (Sv);
\draw[loss] (Sv.south) -- ++(0,-9mm) node[flowlabel,below] {$\mu_vS_v$};
\draw[flow,dashed] (Iv.north) to[bend left=25] node[flowlabel,right] {vector$\to$human} (Sh.south);
\draw[flow,dashed] (Ih.south) to[bend right=25] node[flowlabel,right] {human$\to$vector} (Sv.north);
\end{tikzpicture}
\caption{Human--vector transmission structure used by the built-in vector-borne simulator. Transmission between populations is represented by the dashed cross-population links.}
\label{fig:vector}
\end{figure}

The vector-borne model (Figure~\ref{fig:vector}) is
\begin{align}
\frac{dS_h}{dt} &= -\beta_{vh}mS_hI_v,\\
\frac{dI_h}{dt} &= \beta_{vh}mS_hI_v-\gamma_hI_h,\\
\frac{dR_h}{dt} &= \gamma_hI_h,\\
\frac{dS_v}{dt} &= \mu_v(1-S_v)-\beta_{hv}S_vI_h,\\
\frac{dI_v}{dt} &= \beta_{hv}S_vI_h-\mu_vI_v,\\
\frac{dC_h}{dt} &= \beta_{vh}mS_hI_v,\\
\frac{dC_v}{dt} &= \beta_{hv}S_vI_h.
\end{align}
Here $m$ is the vector-to-human ratio. The analytic reproduction number used by the app is
\[
R_0=\sqrt{\frac{\beta_{hv}\beta_{vh}m}{\gamma_h\mu_v}}.
\]
For growth-rate fitting, the major-host infectious column is $I_h$ and the major susceptible column is $S_h$.

\paragraph{REEBLRA results.}
The binary demonstration used the declared model-aware basis and five-fold held-out validation. The unstandardized score below is the posterior-mode (MAP) estimate at the center of the Gaussian Laplace approximation. 
\[\resizebox{0.98\linewidth}{!}{$\displaystyle \eta_{R_0>1}=0.3110 + 5.7502\,\log(\beta_{hv}) - 5.1796\,\log(\beta_{hv})\,\log(\beta_{vh}) - 3.9833\,\log(\mu_v) + 3.6503\,\log(m) + 3.4138\,\log(\beta_{vh}) + \cdots.$}\]

The ordered-bracket demonstration used the generic compact log-cubic basis with step $h=5$, five-fold held-out validation, and independent $N(0,2^2)$ prior and MAP/Laplace fitting routine. The displayed scores are MAP plug-in logit equations, not integrated posterior-predictive probabilities. Representative expressions were:

\begin{center}
\resizebox{0.98\linewidth}{!}{$\displaystyle \eta_{R_0>1} = 0.8594 - 5.7008\,\log(\beta_{hv})\,\log(\beta_{vh}) + 4.9146\,\log(\beta_{hv}) - 3.9926\,\log(\gamma_h) - 3.9242\,\log(\mu_v) + 3.8614\,\log(\beta_{vh}) + \cdots$}\\[2pt]
\resizebox{0.98\linewidth}{!}{$\displaystyle \eta_{R_0>6} = -46.9710 - 5.6409\,\log(\mu_v) - 5.5344\,\log(\gamma_h) - 5.3459\,\log(\gamma_h)\,\log(m) + 5.3098\,\log(\beta_{vh}) - 5.1692\,\log(\mu_v)\,\log(m) + \cdots$}\\[2pt]
\resizebox{0.98\linewidth}{!}{$\displaystyle \eta_{R_0>10} = -33.0565 + 3.7660\,\log(\beta_{hv})\,\log(\beta_{vh}) + 3.6142\,\log(\gamma_h)\,\log(\mu_v) - 3.5880\,\log(\beta_{vh})\,\log(\mu_v) - 3.5467\,\log(\beta_{hv})\,\log(\mu_v) - 3.5082\,\log(\beta_{hv})\,\log(\gamma_h) + \cdots$}
\end{center}

\begin{table}[ht]
\centering
\caption{Mechanistic--REEBLRA coefficient concordance for the Vector-borne demonstration. The analytic role identifies whether each parameter enters the $R_0$ expression through a numerator, denominator, both, or the next-generation matrix (NGM). $G_j$ is the median analytic log-gradient $\partial\log R_0/\partial\log\theta_j$ and $D_j$ is the median derivative of the complete REEBLRA binary score over $0.8\le R_0\le1.2$. For this compact log-cubic basis, $b_j$ is the unstandardized first-order $\log\theta_j$ coefficient; interactions and curvature terms can make its isolated sign differ from the net parameter direction. Concordance is assessed from the sign of the full-score direction $D_j$ relative to $G_j$.}
\label{tab:vector-borne-concordance}
\scriptsize
\setlength{\tabcolsep}{3pt}
\begin{tabularx}{\linewidth}{l >{\raggedright\arraybackslash}X r r r c}
\toprule
Parameter & Analytic $R_0$ role & Analytic $G_j$ & REEBLRA $b_j$ & Full-score $D_j$ & Concordant? \\
\midrule
$\beta_{hv}$ & Numerator (inside square root) & +0.500 & +5.750 & +10.042 & Yes \\
$\beta_{vh}$ & Numerator (inside square root) & +0.500 & +3.414 & +9.882 & Yes \\
$\gamma_h$ & Denominator (inside square root) & -0.500 & -2.934 & -9.234 & Yes \\
$\mu_v$ & Denominator (inside square root) & -0.500 & -3.983 & -9.974 & Yes \\
$m$ & Numerator (inside square root) & +0.500 & +3.650 & +9.724 & Yes \\
\bottomrule
\end{tabularx}
\end{table}

\begin{table}[ht]
\centering
\caption{Five independent parameter examples for the Vector-borne demonstration. The analytic $R_0$ is compared with the binary REEBLRA prediction and the ordered-bracket REEBLRA prediction. The binary entry reports the predicted side of the unit threshold with $P(R_0>1)$ in parentheses; the ordered entry reports the modal bracket with its normalized class probability. Cases are selected from an independent deterministic Latin-hypercube design (seed 2606) by proximity to prespecified analytic targets $0.7,2,5,8,12$; prediction results were not used for case selection.}
\label{tab:vector-borne-five-examples}
\scriptsize
\setlength{\tabcolsep}{3pt}
\begin{tabularx}{\linewidth}{c >{\raggedright\arraybackslash}X c c c}
\toprule
Case & Parameter values & Analytic $R_0$ (bracket) & Binary REEBLRA & Ordered REEBLRA \\
\midrule
E1 & $\beta_{hv}=1.36,\allowbreak\,\beta_{vh}=0.07201,\allowbreak\,\gamma_h=0.5546,\allowbreak\,\mu_v=0.4127,\allowbreak\,m=1.147$ & 0.701 ($<1$) & $\leq1$ ($P_{>1}=0.003$) & $<1$ ($p=0.999$) \\
E2 & $\beta_{hv}=0.6068,\allowbreak\,\beta_{vh}=1.111,\allowbreak\,\gamma_h=0.5159,\allowbreak\,\mu_v=0.3649,\allowbreak\,m=1.117$ & 2.000 ($1\text{--}6$) & $>1$ ($P_{>1}=0.998$) & $1\text{--}6$ ($p=0.999$) \\
E3 & $\beta_{hv}=0.1364,\allowbreak\,\beta_{vh}=1.425,\allowbreak\,\gamma_h=0.07617,\allowbreak\,\mu_v=0.3019,\allowbreak\,m=2.959$ & 5.000 ($1\text{--}6$) & $>1$ ($P_{>1}=1.000$) & $1\text{--}6$ ($p=0.999$) \\
E4 & $\beta_{hv}=0.6209,\allowbreak\,\beta_{vh}=0.8853,\allowbreak\,\gamma_h=0.4866,\allowbreak\,\mu_v=0.05874,\allowbreak\,m=3.328$ & 8.000 ($6\text{--}10$) & $>1$ ($P_{>1}=1.000$) & $6\text{--}10$ ($p=0.998$) \\
E5 & $\beta_{hv}=1.238,\allowbreak\,\beta_{vh}=1.117,\allowbreak\,\gamma_h=0.1694,\allowbreak\,\mu_v=0.1865,\allowbreak\,m=3.291$ & 12.002 ($>10$) & $>1$ ($P_{>1}=1.000$) & $>10$ ($p=0.997$) \\
\bottomrule
\end{tabularx}
\end{table}

Taking logarithms gives
\[
\log R_0=\tfrac12\left(\log\beta_{hv}+\log\beta_{vh}+\log m-\log\gamma_h-\log\mu_v\right).
\]
Thus the exact mechanistic log-elasticities have equal magnitude $1/2$: the two transmission rates and $m$ are positive directions, whereas $\gamma_h$ and $\mu_v$ are negative directions. REEBLRA recovers the same five-way pattern. More strikingly, the full-score derivatives near $R_0=1$ are also of comparable magnitude (approximately 9--10 in absolute value), reproducing the mechanistic symmetry even though the fitted feature expansion contains nonlinear interactions (Table~\ref{tab:vector-borne-concordance}). This provides an interpretable diagnostic for multi-host systems: the method can reveal that host-to-vector transmission, vector-to-host transmission, vector abundance, human removal, and vector removal enter the threshold with broadly balanced log-scale influence, while still allowing interaction terms to improve classification. For users, the result offers a compact threshold map of a two-population transmission cycle and a direct way to check whether a learned surrogate respects known biological directionality.

Table~\ref{tab:vector-borne-five-examples} {illustrates recovery of $R_0$ brackets} rather than pointwise numerical values.

\section{Discussion}\label{sec:discussion}

REEBLRA learns inspectable {threshold-score} equations from sampled parameter--response pairs, including analytic reproduction-number evaluations, model simulations, or uploaded numerical summaries. These equations approximate events $R>c$; they are not analytic expressions for the magnitude of $R_0$. 

Its principal advantage is interpretability: each binary equation specifies how fitted features alter the log-odds of crossing a chosen threshold. The successive multiclass construction extends this idea across ordered $R_0$ levels without treating brackets as unrelated nominal classes. The coefficient matrix and its row-normalized heatmap (Figure~\ref{fig:multiclass_r0_coefficients}) further reveal whether the relative direction and prominence of parameters change as the threshold rises. Moreover, the workflow also brings together operations that are often conducted separately: LHS parameter exploration, growth-based threshold summaries, ODE simulation, Bayesian fitting, sensitivity summaries, and diagnostics. 

\begin{figure}[!htbp]
    \centering
    \includegraphics[width=1\textwidth]
    {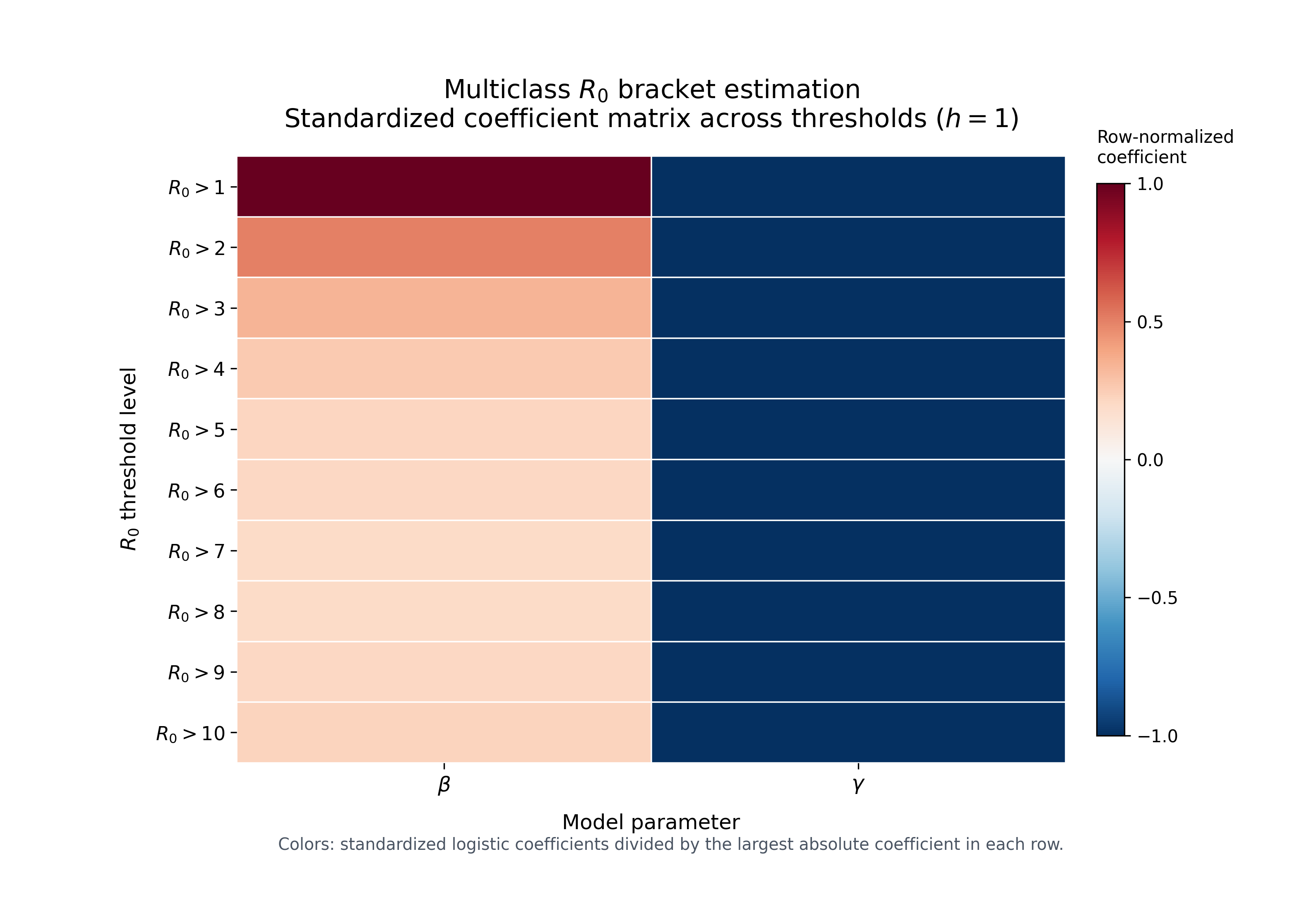}
    \caption{\textbf{Multiclass $R_0$ coefficient patterns across thresholds ($h=1$).} This independent heatmap uses the MAP/Laplace routine with linear $\beta$--$\gamma$ terms. Rows represent successive $R_0$ thresholds, while columns represent the transmission rate ($\beta$) and recovery/removal rate ($\gamma$) in the closed SIR model. Cell colors indicate the direction and relative prominence of standardized logistic coefficients, normalized independently within each row by the largest absolute coefficient, with red indicating positive and blue indicating negative associations with threshold exceedance. Across all thresholds, $\beta$ remains positively associated with $R_0$, whereas $\gamma$ remains negatively associated, consistent with $R_0=\beta/\gamma$. As the threshold increases, the relative prominence of $\beta$ generally decreases, while the negative coefficient of $\gamma$ becomes increasingly dominant, although minor variations occur at higher thresholds. Biologically, this suggests that, within the simulated parameter space, lower recovery/removal rates become relatively more prominent in distinguishing high-$R_0$ regimes, whereas increased transmission remains an important positive contributor. These changes reflect the threshold-dependent behavior of the fitted classifier rather than changes in the underlying SIR mechanism or absolute parameter sensitivities. Row-wise normalization precludes direct comparison of absolute coefficient magnitudes across thresholds.}
    \label{fig:multiclass_r0_coefficients}
\end{figure}

REEBLRA remains a surrogate-modelling framework. The fitted score $\eta(\boldsymbol\theta)$ approximates a threshold surface inside the sampled domain; it is not the spectral radius of a next-generation operator. The term ``polynomial'' in the Shiny app title refers to the default first-order logit polynomial. Without explicitly supplied transformations or interactions, the present fitted boundary is linear in standardized parameter space. Strong nonlinearity, parameter interactions, discontinuous interventions, multiple attractors, stochastic dynamics, or broad extrapolation may therefore require richer surrogate families or a different analysis.

Growth-rate conversion requires equal care. The SIR-like, SEIR-like, and generation-time equations encode different assumptions and should not be interpreted as exact mechanistic $R_0$ conversion. Results can change with the target signal, fitting window, model-time unit, removal or latency rate, initial conditions, and simulation horizon. Growth-fit confidence intervals and app warnings should be reported rather than treating every finite \texttt{R0estimate} as equally reliable.

Coefficient interpretation also depends on identifiability and class support. High VIF or strong parameter correlations can destabilize signs and magnitudes. One-class binary outcomes are not estimable, and one-sided multiclass thresholds correspond only to constant exceedance probabilities. Rare brackets may produce quasi-separation, very large coefficients, unstable sensitivity or precision, and uncertain one-vs-rest AUC. The row-normalized heatmap is useful for within-threshold patterns but does not preserve absolute magnitudes between rows. Apparent metrics should therefore be accompanied by held-out metrics whenever feasible, together with class counts, raw coefficients, and collinearity diagnostics.

Across all seven built-in models, high held-out performance persisted after feature construction was confined to the training folds. The mechanistic formulas then provide an external sign check: transmission-enhancing directions should raise the threshold score, while removal-enhancing directions should lower it. Under transformed bases, the correct object for magnitude interpretation is the combined local derivative of the fitted score with respect to an original parameter, not any single polynomial coefficient. This preserves the geometric concordance idea without confusing feature-term coefficients with literal algebraic exponents in the exact $R_0$ formula.

Class imbalance and sparse ordered brackets remain important. A high discrimination measure can coexist with weaker threshold-dependent classification diagnostics. REEBLRA should therefore be used with parameter ranges that deliberately explore both sides of scientifically relevant thresholds, and its high-level summaries should be read together with the full diagnostic report when making substantive claims. This follows broader recommendations for transparent prediction-model performance assessment \citep{steyerberg2010,saito2015,vancalster2019}.

The high held-out AUC and bracket accuracy are conditional on the chosen parameter domains, known analytic labels, supported brackets, feature basis, and cross-validation split. They do not establish an advantage over direct-$R_0$ regression, ordinal regression, or flexible classification. Additional tests should quantify errors near $R=1$, assess calibration and bracket support, examine sample-efficiency curves and feature-basis sensitivity, and measure the frequency and magnitude of threshold-order violations before projection.

The Bayesian formulation provides posterior uncertainty for coefficients and sensitivity summaries, while proper weakly informative priors regularize estimates under sparse data, correlated predictors, and complete or quasi-complete separation. Under separation, classical logistic maximum-likelihood estimates may fail to exist as finite values \citep{albert1984,gelman2008}. However, Bayesian regularization does not eliminate prior sensitivity, particularly when outcome classes are rare or perfectly separated. Therefore, class counts, held-out performance, coefficient uncertainty, and prior sensitivity should be examined. Firth's bias-reduced logistic regression provides an established alternative \citep{firth1993,heinze2002,mansournia2018}.

Bayesian sampling and mean-field variational inference serve different purposes. MCMC is more appropriate when posterior uncertainty and convergence diagnostics are central, but it is slower, especially when repeated across validation folds. Variational Bayes supports rapid exploratory fitting but may understate uncertainty. The ordinary logistic equation offers a transparent maximum-likelihood comparison, yet agreement between the two fitted equations does not by itself establish mechanistic validity or out-of-sample performance.

Several methodological questions are important for interpreting REEBLRA correctly (refer to Section~\ref{subsec:reviewer-questions} in the Supplementary Information). These points clarify the novelty claim of our proposed method. REEBLRA is not novel because logistic regression itself is new, nor because machine learning can approximate epidemic models. The proposed contribution is the combination of (i) threshold-surface learning rather than full $R_0$ regression, (ii) Bayesian regularization and uncertainty for interpretable symbolic threshold equations, (iii) successive ordered thresholds for bracket recovery, and (iv) a mechanistic--surrogate concordance principle that makes the learned coefficient geometry testable against the gradient of a known reproduction-number surface. The implementation serves as a reproducible demonstration of this methodological framework.

Potential extensions and future versions of REEBLRA include comprehensive automated screening of parameter interactions and transformations, automated separation diagnostics and bias-reduced estimation for ordered-threshold models, sparse equation discovery, Gaussian-process classification, adaptive sampling near epidemic threshold boundaries, propagation of uncertainty from growth-rate estimation, and automated comparison of mechanistically derived and surrogate threshold surfaces. Another important direction is the development of a parsimonious, unified mathematical representation of the multiple threshold equations generated by the sequential ordered multiclass bracket REEBLRA model, potentially reducing them to a single interpretable equation or functional relationship while preserving their predictive accuracy and threshold-ordering properties. Extending REEBLRA to stochastic and agent-based simulators would further broaden its applicability, allowing the framework to accommodate simulation variability, population heterogeneity, and more complex epidemic dynamics.

\section{Conclusion}\label{sec:conclusion}

REEBLRA reframes reproduction-number approximation as threshold-surface learning. This is the central methodological contribution: instead of requiring a surrogate to reproduce the full nonlinear magnitude of $R_0(\boldsymbol\theta)$, the method learns interpretable Bayesian equations for the scientifically relevant events $R_0>c$ and, through successive cutoffs, ordered reproduction-number brackets. The mechanistic--surrogate concordance principle then makes the learned geometry testable: coefficient direction can be compared with the standardized local gradient of the mechanistic reproduction-number surface. Across seven built-in models, binary five-fold AUC ranged from 0.999 to approximately 1.000 and balanced accuracy from 0.974 to 0.999; ordered-bracket accuracy ranged from 0.951 to 0.998 and balanced accuracy from 0.922 to 0.992.

The advantages of the framework are interpretability, explicit formulation, threshold-focused predictive efficiency, posterior uncertainty, compatibility with complex simulators, and a direct route from machine-learned coefficients back to mechanistic hypotheses. Its limitations are equally important: performance is conditional on the simulated domain and chosen cutoff; feature-basis and prior choices matter; collinearity can preserve prediction while undermining individual coefficient interpretation; numerical target uncertainty is only partly propagated; and extrapolation or mechanistic equivalence is not guaranteed. The accompanying Shiny application demonstrates these operations. REEBLRA is therefore best viewed as a mechanism-aware surrogate and diagnostic layer that complements---rather than replaces---next-generation-matrix derivation, direct $R_0$ regression, and other mechanistic analyses.

\backmatter

\section*{Supplementary information}
Supplementary Files contain (i) SI additional details, (ii) the Python translation reproducibility package, and (iii) the Shiny source code. The Shiny app includes a link to the user manual.

\section*{Shiny application}\label{sec:code-listings}

The implementation is an R Shiny app. Its principal computational modules cover built-in ODE definitions, custom ODE and auxiliary-equation parsing, nonnegative-state safeguards, LHS, \texttt{lsoda} simulation, analytic $R_0$ or $R_e$ where available, growth-rate fitting and conversion, upload cleaning, CSV generation, predictor standardization, Bayesian and ordinary binary logistic regression, successive multiclass threshold modeling, monotonic probability post-processing, cross-validation, coefficient-matrix construction, diagnostics, plotting, and export.

\noindent Link to the online Shiny app: \url{https://jfrabajanteuplb.shinyapps.io/REEBLRA}

\section*{Acknowledgements}
The authors gratefully acknowledge Ariel L. Babierra and Allen L. Nazareno for their valuable insights and constructive suggestions that contributed to the development and refinement of this work. The authors also extend their appreciation to Borame Sue Lee Dickens of the Saw Swee Hock School of Public Health, National University of Singapore, for providing inspiration that helped motivate this research.

\section*{Declarations}

\begin{itemize}

\item \textbf{Funding:} This research received no specific grant from any funding agency in the public, commercial, or not-for-profit sectors.

\item \textbf{Conflict of interest/Competing interests:} The authors declare no competing interests.

\item \textbf{Ethics approval, consent to participate, and consent for publication:} Not applicable. This study used only synthetic and simulated data and did not involve human participants, human biological materials, or individual-level human data.

\item \textbf{Data availability:} All results reported in this article were generated from synthetic parameter samples, simulated data, and analytic reproduction-number threshold formulas. No individual-level human data were used. The information required to reproduce the numerical experiments is provided in the article and Supplementary Information.

\item \textbf{Code availability:} The computational procedures, algorithms, and code necessary to reproduce the analyses are provided in the Supplementary Information and accompanying supplementary files.

\item \textbf{Author contributions:} J.F.R. conceptualized and designed the study, developed and implemented the computational framework and code, formulated and analyzed the algorithm, conducted the main analyses, and supervised the overall development of the work. N.P.B., M.C.T.L.-A., D.S.M.L., and E.R.R. contributed to the conceptual development of the study, analysis and interpretation of the methodology and results, and testing and evaluation of the algorithm. All authors contributed to writing, reviewing, and revising the manuscript, approved the final version, and agreed to its submission and publication.

\item \textbf{Generative AI use:} GPT-5.6 Sol (High reasoning mode) was used as an assistive tool for grammar checking, initial language editing of the manuscript, and code optimization. The scientific concepts, methodological decisions, analyses, interpretations, and conclusions were developed and evaluated by the authors. All authors reviewed and vetted the AI-assisted revisions and computational code and take full responsibility for the accuracy, integrity, and final content of the work.

\end{itemize}

\begin{appendices}

\section{Supplementary Information: Additional Details}

\subsection{Symbols and notation}\label{app:symbols}

Table~\ref{tab:symbols} consolidates the symbols used in the main text. 

\noindent Note on notation: Some symbols used in this manuscript have multiple meanings because they follow established conventions in their respective scientific and mathematical disciplines. To maintain consistency with these widely accepted conventions, we have retained the standard notation rather than introducing alternative symbols. Consequently, the meaning of a symbol should be interpreted according to its context. For example, `R' may refer to a reproduction-number-related quantity, the recovered or removed compartment in an epidemiological model, the R programming language, or a statistical measure such as the coefficient of determination (\(R^2\)). 

\begin{longtable}{@{}p{0.19\textwidth}p{0.49\textwidth}p{0.24\textwidth}@{}}
\caption{Symbols and notation used in the REEBLRA framework and model demonstrations.}\label{tab:symbols}\\
\toprule
Symbol & Definition & Context \\
\midrule
\endfirsthead
\toprule
Symbol & Definition & Context \\
\midrule
\endhead
$R_0$ & Basic reproduction number at the disease-free state. & Mechanistic benchmark. \\
$R_e(0)$ or $R(0)$ & Effective/control reproduction number or generic reproduction-number target at the start of the simulation. & Used when interventions or non-DFE susceptible fractions are included. \\
$\widehat R$ / \texttt{R0estimate} & Numerical reproduction-number summary obtained from early-growth information or uploaded externally. & Numerical target option. \\
$\boldsymbol\theta=(\theta_1,\ldots,\theta_p)$ & Vector of original epidemiological/model parameters. & REEBLRA input space. \\
$c$; $c_k$ & Binary threshold; $k$th successive threshold. & Events $R>c$ and $R>c_k$. \\
$Y_c$ & Binary indicator $\mathbf 1\{R>c\}$. & Binary REEBLRA outcome. \\
$p_c(\boldsymbol\theta)$ & $\Pr(R>c\mid\boldsymbol\theta)$. & Bayesian binary probability. \\
$\eta_c$ & Logistic linear/additive score for threshold $c$. & $p_c=\operatorname{logit}^{-1}(\eta_c)$. \\
$a_c$ & Intercept of the threshold-$c$ REEBLRA equation. & Binary/multiclass score. \\
$\phi_j(\boldsymbol\theta)$ & $j$th declared feature term constructed from the original parameters (raw, log, power, or pairwise log interaction). & Feature map; not the mechanistic $R$ formula. \\
$b_{cj}$ & Fitted coefficient multiplying feature term $\phi_j$ in the threshold-$c$ score. & Sign/magnitude interpreted conditionally on the chosen basis. \\
$x_j$ & Generic fitted feature value before standardization. & Regression design matrix. \\
$z_j$ & Standardized feature, $z_j=(x_j-\bar x_j)/s_j$. & Bayesian/ordinary logistic fitting. \\
$\bar x_j$, $s_j$ & Training-fold mean and sample standard deviation of feature $x_j$. & Fold-specific standardization. \\
$q_k$ & Estimated cumulative exceedance probability $\Pr(R>c_k\mid\boldsymbol\theta)$. & Ordered-bracket REEBLRA. \\
$h$ & Spacing between successive reproduction-number thresholds. & Python benchmark uses $h=5$. \\
$D_j$ & Median local full-score derivative $\partial\eta/\partial\log\theta_j$ over sampled points with $0.8\le R\le1.2$. & Mechanistic--surrogate concordance diagnostic. \\
$S,E,I,R,V$ & Susceptible, exposed, infectious, removed/recovered, and vaccinated compartments. & Compartmental models. \\
$S_i,I_i,R_i$ & SIR states for group $i$. & Structured SIR models. \\
$S_h,I_h,R_h$ & Human SIR states. & Vector-borne model. \\
$S_v,I_v$ & Susceptible and infectious vector states. & Vector-borne model. \\
$\mathcal{C}(t)$ & Cumulative incidence state. & ODE simulation output. \\
$N$ & Total host population or normalized population mass. & Force-of-infection denominator. \\
$\beta$ & Effective transmission/contact rate. & SIR/SEIR and structured models. \\
$\gamma$ & Infectious removal/recovery rate. & SIR/SEIR models. \\
$\mu$ & Per-capita demographic turnover/natural mortality rate. & Open models. \\
$\sigma$ & Exposed-to-infectious progression rate. & SEIR models. \\
$v$ & Vaccination rate/pressure applied to susceptible individuals. & Vaccination model. \\
$c_{ij}$ & Contact coefficient linking infectious group $j$ to susceptible group $i$. & Two-group SIR. \\
$\mathbf C=(C_{ij})$ & Fixed $5\times5$ mixing matrix. & Fifteen-compartment five-group SIR. \\
$\gamma_i$ & Recovery rate in group $i$. & Structured SIR models. \\
$\mathbf K$ & Next-generation matrix. & $R_0=\rho(\mathbf K)$ in structured models. \\
$\rho(\mathbf K)$ & Spectral radius of the next-generation matrix. & Structured-model reproduction number. \\
$\beta_{hv}$ & Human-to-vector transmission rate. & Vector-borne model. \\
$\beta_{vh}$ & Vector-to-human transmission rate. & Vector-borne model. \\
$\gamma_h$ & Human infectious removal rate. & Vector-borne model. \\
$\mu_v$ & Vector death/removal rate. & Vector-borne model. \\
$m$ & Vector-to-human ratio. & Vector-borne model. \\
$r$ & Early exponential growth rate estimated from $\log I(t)=\alpha+rt$. & Numerical \texttt{R0estimate}. \\
$k$ & Number of cross-validation folds. & Python benchmark uses $k=5$. \\
$s_j=\tanh(b_j)$ & Bounded coefficient-based sensitivity summary. & App sensitivity tab; not a mechanistic elasticity. \\
\bottomrule
\end{longtable}

\subsection{Bayesian logistic regression}\label{subsec:why-bayesian}

The Bayesian formulation is used for three reasons. First, it provides posterior uncertainty for coefficients and derived sensitivity summaries, rather than only point estimates. Second, proper weakly informative priors regularize standardized coefficients and reduce instability when the simulated design is sparse, highly predictive, or nearly separated. Classical logistic maximum-likelihood estimates may not exist as finite values under complete or quasi-complete separation, whereas weakly informative Bayesian priors can regularize estimation and yield stable, finite posterior estimates in such settings. Third, Bayesian regularization is useful when transformed feature bases introduce several correlated terms and the likelihood alone is weakly informative.

In our implementation, standardized feature coefficients and the intercept receive proper normal priors centered at zero with nominal scale 2, with the app using the corresponding \texttt{rstanarm} autoscaling option. These priors are intentionally weak enough to allow strong effects while preventing the infinite-coefficient behavior of unpenalized maximum likelihood. Separation is therefore not ignored: the ordinary logistic comparator may show very large or unstable coefficients, whereas the Bayesian fit remains finite. Nevertheless, a finite Bayesian posterior does not make separation harmless. If one class is rare or perfectly predictable, the posterior can become prior-sensitive; class counts, held-out performance, coefficient uncertainty, and prior sensitivity should then be reported. Bias-reduced/Firth logistic regression is a well-established alternative for separation.

Multicollinearity requires a separate distinction. Priors can stabilize prediction and partially shrink correlated coefficients, but they do not restore structural identifiability when predictors carry nearly redundant information. REEBLRA therefore reports VIFs and the predictor correlation matrix, and coefficient-level mechanistic interpretation is discouraged when these diagnostics are severe. Remedies include redesigning the parameter ranges, combining mechanistically redundant parameters, reparameterizing the model, using stronger regularizing or hierarchical priors, or interpreting groups of correlated features rather than individual coefficients. In short, Bayesian regularization addresses estimation instability; it does not magically remove nonidentifiability.

\subsection{Shiny app}

\subsubsection{Built-in ODE simulation and Latin hypercube sampling}

For built-in ODE simulation, each parameter is assigned an editable minimum and maximum. REEBLRA uses random Latin hypercube sampling (LHS) to distribute parameter combinations across the resulting hyper-rectangle. The interface defaults to 400 simulations for quick runs and displays a screening recommendation of approximately $400p$ simulations for $p$ parameters. This recommendation is heuristic rather than a formal sample-size theorem; required simulation size depends on class balance, threshold complexity, validation demands, and computation time. The default horizon is 365 model-time units, the default output step is 1, the default initial infectious proportion is 0.01, and the default case-increase tolerance is $10^{-5}$.

The case-increase response is
\[
Y_{\mathrm{inc}}
=
\mathbf{1}\!\left\{
\max_t I_{\mathrm{major}}(t)
>
I_{\mathrm{major}}(0)+\varepsilon
\right\},
\]
where $\varepsilon$ is the selected tolerance. This operational definition prevents very small numerical changes from automatically being treated as meaningful growth.

\subsubsection{Custom ODE simulation, auxiliary equations, and nonnegative states}

The custom parser accepts up to 30 state equations and 50 parameters. State-equation left sides may be entered as \texttt{dx/dt}, \texttt{dI/dt}, \texttt{dx}, \texttt{dI}, or the state name itself. Right sides use R-style syntax. Initial values and parameter ranges are supplied as named lines. Unless declared explicitly, \texttt{N} is set to the sum of the current state values, and \texttt{t} and \texttt{time} are available during evaluation.

Up to 30 optional non-ODE equations may define derived quantities, totals, time-varying forcing terms, or intervention indicators. These auxiliary values are evaluated during ODE simulation and are added to the returned trajectory. Thus, when one host species has several infectious classes, a user may define, for example,
\[
I_{\mathrm{total}}=I_1+I_2+I_3,
\]
and select $I_{\mathrm{total}}$ as the early-growth target. The target may therefore be a simulated state or a valid auxiliary output.

The user can select which state variables are forced to be nonnegative. At a selected state's lower boundary, a negative derivative is replaced by zero, and negative numerical output values are clamped to zero. This is a numerical safeguard; it does not replace an analytic proof that the mathematical model is positively invariant.

\subsubsection{Uploaded data and CSV generation}

An uploaded table must contain an $R_0$-like numeric column selected as \texttt{R0estimate} and at least one numeric parameter column. The implementation requires at least 20 complete observations, and enforces a suggested maximum of 50 numeric parameter predictors for stable fitting and interpretation. The uploaded value may have been obtained from a numerical NGM spectral radius, early-growth conversion, or another method justified by the user.

The minimum of 20 observations is used only as a technical safeguard to prevent fitting logistic models to trivially small datasets. It is not a formal sample-size recommendation, and substantially larger samples may be needed depending on class balance, the number of fitted feature terms, separation, and the validation design. The suggested maximum of 50 predictors is likewise a practical limit rather than a theoretical one, intended to reduce computational burden, excessive feature expansion, collinearity, and difficulty in interpreting individual coefficients. Analyses with more predictors may require prior screening, dimension reduction, or stronger regularization.

The CSV generator is intended for users who have trajectories but not yet a parameter--\texttt{R0estimate} table. Its time-series file contains time in the first column and one epidemic signal per remaining column. A second file contains one parameter row for each trajectory. Growth windows may be selected automatically, by the first $n$ positive points, by a time interval, by a fraction-of-peak rule, or from all positive points. The generated table and a separate diagnostics table can be downloaded or passed directly to the main REEBLRA analyses.

\subsubsection{Numerical growth-rate \texttt{R0estimate}}\label{sec:growth-r0}

REEBLRA estimates early exponential growth by fitting
\[
\log Y(t)=a+rt
\]
to a positive infectious-state trajectory or selected output. In built-in models, $Y(t)=I_{\mathrm{major}}(t)$ and the candidate window is restricted by
\[
S_{\mathrm{major}}(t)\ge \tau S_{\mathrm{major}}(0),
\]
where $\tau=0.95$ by default. At least five positive infectious observations are required. If the requested window is too short, the app progressively relaxes $\tau$ through lower candidate values; if necessary, it falls back to the earliest positive observations. The output includes $\hat r$, a 95\% confidence interval, $R^2$, number of points, threshold used, fitting-window endpoint, a validity flag, and warnings. A confidence interval crossing zero or $R^2<0.80$ is explicitly flagged.

Custom ODE mode fits the first user-selected number of positive target observations, with eight as the default and five as the minimum. The CSV generator provides automatic and user-defined window rules. In its automatic mode, candidate early windows up to the first peak are compared using a score favoring positive growth, high log-linear $R^2$, and earlier windows.

The available conversion rules are:
\begin{align}
\text{SIR-like:}\qquad
\widehat R_{0,\mathrm{growth}}
&=1+\frac{r}{\gamma_{\mathrm{removal}}},\\
\text{SEIR-like:}\qquad
\widehat R_{0,\mathrm{growth}}
&=\frac{(r+\sigma_{\mathrm{latency}})(r+\gamma_{\mathrm{removal}})}
{\sigma_{\mathrm{latency}}\gamma_{\mathrm{removal}}},\\
\text{generation-time approximation:}\qquad
\widehat R_{0,\mathrm{growth}}
&=\exp(rT_{\mathrm{gen}}),\\
\text{unit-time Euler multiplier:}\qquad
\widehat R_{0,\mathrm{growth}}
&=1+r.
\end{align}
The last rule preserves the threshold equivalence $r>0\Longleftrightarrow \widehat R_0>1$ for one model-time unit, but it is not a mechanistic generation-interval conversion and is time-scale dependent. For built-in simulations, a non-positive $r$ produces \texttt{R0estimate}=0 by convention. In custom and CSV-generator modes, the user may instead set non-positive growth to 0, 1, missing, or the unmodified formula value. All growth-based values are numerical summaries and should not be presented as exact NGM reproduction numbers without further justification.

\subsubsection{Collinearity and parameter sensitivity}\label{sec:collinearity-sensitivity}

The collinearity module reports variance inflation factors (VIFs), a parameter correlation matrix, and a correlation heatmap. A VIF near 1 indicates little collinearity; values around 2.5--5 suggest moderate collinearity; values above 5 indicate high collinearity; and values above 10 indicate serious collinearity. These are interpretation guides rather than universal rejection thresholds. Collinearity may leave prediction useful while making individual coefficient signs and magnitudes unstable, a well-known distinction between predictive adequacy and coefficient interpretability. The current normal priors provide shrinkage but do not eliminate this identifiability problem. If strong collinearity persists, the preferred response is mechanistic reparameterization, aggregation of redundant quantities, redesign of the sampled domain, or a stronger structured prior rather than interpreting unstable individual coefficients.

For each posterior draw of a standardized Bayesian feature coefficient $b_j$, REEBLRA computes
\[
s_j=\tanh(b_j),\qquad -1<s_j<1.
\]
The sensitivity table and plot report the posterior mean, median, and 95\% credible interval of $s_j$. Under a transformed basis these are feature-term sensitivities rather than one-to-one parameter effects. Positive values indicate increased fitted probability of threshold exceedance, negative values indicate decreased fitted probability, and values near zero indicate weak fitted influence on the standardized logit scale. This bounded transformation is not a mechanistic derivative, elasticity, partial rank correlation coefficient, or causal effect. It is a coefficient-based summary of the fitted binary Bayesian surrogate.

\subsubsection{Cross-language reproducibility benchmark}\label{sec:python-benchmark}

To check the calculations independently of the R Shiny implementation, we translated the built-in parameter ranges, analytic $R_0/R_e$ formulas, Latin-hypercube sampling, predictor standardization, logistic threshold equations, ordered successive-threshold construction, and normalized sensitivity transformation into Python.  

The interactive Shiny multiclass module fits independent, {unpenalized} frequentist threshold equations using \texttt{stats::glm}. In contrast, the supplied Python reproducibility script uses \texttt{fit\_laplace} for {both binary and ordered-threshold fits}: a logistic maximum-a-posteriori (MAP) estimate under independent $N(0,2^2)$ priors on the intercept and standardized coefficients, optimized with L-BFGS-B, with a Gaussian Laplace covariance approximation. The Python ordered predictions use plug-in MAP scores rather than posterior-predictive integration. These procedures are not identical to the Shiny multiclass GLM; neither fits a jointly constrained Bayesian ordinal-regression model. The Shiny binary Bayesian module instead uses \texttt{rstanarm}. Thus, the Python results are an independent numerical cross-check rather than a claim of draw-for-draw identity with Stan-based inference.

The final reproducibility benchmark contains seven built-in models, including a 15-compartment five-group SIR system. Binary demonstrations use 5,000 Latin-hypercube parameter sets and five-fold held-out prediction. A linear basis is sufficient for the two simple SIR benchmarks, while the model-aware option uses a generic compact log-cubic expansion for the curved positive-parameter threshold surfaces. Ordered-bracket demonstrations use 8,000 independent parameter sets, the generic compact log-cubic basis, bracket step $h=5$, and five-fold validation. 

The feature expansion is generic and does not contain the analytic $R_0/R_e$ formula. Feature construction and standardization are performed inside each training fold, preventing validation leakage.

The probability cutoff is set to 0.5. All binary accuracy, balanced accuracy, AUC, and F1 values in the benchmark exceeded 0.90. All ordered-bracket overall accuracy, balanced accuracy, Cohen's $\kappa$, macro F1, and weighted F1 values also exceeded 0.90. 

The ODE model demonstrations in Section~\ref{sec:models} report the fitted binary equation, representative ordered-threshold equations, and metrics using the Python code. 

\subsubsection{Application outputs and reporting}\label{sec:outputs}

The main interface contains the following tabs: Overview, ODE equations, Parameter ranges, Simulation results, Collinearity analysis, Binary Bayesian $R_0$ threshold, Multiclass $R_0$ bracket, Parameter sensitivity analysis, CSV generator, and Downloads.

The binary main-output tab reports the Bayesian standardized equation and the corresponding unstandardized fitted-feature equation (which is an original-parameter-scale equation only under a linear basis), apparent and optional $k$-fold metrics (accuracy, balanced accuracy, AUC, sensitivity/recall, specificity, precision, F1, NPV, Cohen's $\kappa$, and Brier score), MCMC diagnostics when applicable, original-scale trial prediction, posterior coefficients, ROC curve, an interactive predicted-probability display, and ordinary logistic equations for comparison. When the outcome contains only one class, the fitting routine provides a specific warning rather than attempting an undefined logistic model.

The multiclass tab reports the successive-threshold method, all estimable standardized and unstandardized fitted-feature equations, explicit notices for thresholds with only one observed class, the row-normalized coefficient heatmap, raw coefficient table, original-scale trial prediction, overall accuracy/balanced accuracy/Cohen's $\kappa$, macro and weighted summaries, per-class recall, specificity, precision, F1, NPV, one-vs-rest balanced accuracy and AUC, class distribution, apparent and cross-validated one-vs-rest AUC, confusion matrix, and predicted-bracket display.

The CSV generator reports input previews, the generated \texttt{R0estimate} table, growth diagnostics, and warnings. Its outputs can be downloaded or transferred into the binary and multiclass workflows without a separate upload.

The Downloads tab provides a ZIP archive of available outputs and individual files. CSV exports include simulation or uploaded data, parameter ranges or summaries, parameter definitions, simulation summaries, analytic-versus-numeric $R_0$ comparison, VIFs, parameter correlations, binary metrics and predictions, held-out predictions, ROC data, posterior coefficients, MCMC diagnostics, sensitivity summaries, multiclass predictions, multiclass held-out predictions, multiclass metrics, coefficients, confusion matrix, class distribution, and apparent and cross-validated one-vs-rest AUC. PNG exports include the principal binary, multiclass, sensitivity, VIF, and correlation plots. Publication-ready LaTeX files are generated for the fitted binary and multiclass equations.

\section{Reproducibility specification for the main-text results}\label{app:reproducibility}

This appendix records the numerical inputs and settings needed to reproduce the main-text benchmark table, metric figure, fitted REEBLRA equations, and parameter-concordance summaries. Two computational paths should be distinguished. The R/Shiny application is the reference implementation of the interactive workflow. The main cross-language benchmark reported in Section~\ref{sec:python-benchmark} was generated independently in Python from the same analytic reproduction-number definitions and parameter domains. The latter does not numerically integrate the ODEs to create its binary labels; it samples parameters, evaluates the analytic $R_0/R_e$, and learns the corresponding threshold surface. The ODE defaults in Table~\ref{tab:app-ode-defaults} are supplied so that the built-in Shiny simulations and growth-rate outputs can also be reproduced.

\subsection*{Files and global benchmark settings}

\begin{table}[H]
\centering
\small
\caption{Global settings used for the main cross-language benchmark.}
\label{tab:global-repro-settings}
\begin{tabularx}{\textwidth}{@{}lX@{}}
\toprule
Setting & Value \\
\midrule
Reproducibility script & \texttt{reeblra\_python\_reproducibility.py} (included in the revised source package) \\
Binary sample size & 5,000 independent Latin-hypercube parameter sets per model \\
Ordered-bracket sample size & 8,000 independent Latin-hypercube parameter sets per model \\
Cross-validation & Stratified five-fold; feature construction and standardization repeated within each training fold \\
Binary target & $Y=\mathbf 1\{R>1\}$ \\
Binary probability cutoff & 0.5 \\
Binary prior in Python check & Independent $N(0,2^2)$ Gaussian prior on the fitted logit intercept and coefficients; Laplace/Gaussian posterior approximation \\
Binary feature basis & Linear for Closed SIR and Open SIR; compact log-cubic for the other five models \\
Ordered-bracket feature basis & Compact log-cubic for all seven models \\
Ordered thresholds & $c_k\in\{1,6,10\}$ (step $h=5$), yielding $<1$, $1$--$6$, $6$--$10$, and $>10$ brackets \\
Ordered-threshold fit & Separate Gaussian-prior logistic MAP/Laplace fits (independent $N(0,2^2)$ priors; L-BFGS-B optimizer), matching the Python binary fitting algorithm; cumulative probability monotonicization and differencing follow. The Shiny multiclass module instead uses unpenalized \texttt{stats::glm}. \\
Feature standardization & Training-fold mean and sample SD ($\mathrm{ddof}=1$); the held-out fold uses the training-fold constants \\
Main fitted equations & Full-data fit after validation; held-out metrics are calculated from pooled fold predictions \\
\bottomrule
\end{tabularx}
\end{table}

The compact log-cubic feature map for positive parameters is
\[
\phi(\boldsymbol\theta)=\left\{\log\theta_j,(\log\theta_j)^2,(\log\theta_j)^3,\log\theta_j\log\theta_\ell:\ j<\ell\right\}.
\]

No analytic $R_0/R_e$ expression, ratio, product, spectral radius, or model-specific mechanistic interaction is supplied as a predictor. The expansion is generic and is declared before validation.

\subsection*{Parameter domains and random seeds}

\begin{longtable}{@{}p{0.27\textwidth}p{0.20\textwidth}rr@{}}
\caption{Latin-hypercube parameter domains used in the Python benchmark.}\label{tab:benchmark-ranges}\\
\toprule
Model & Parameter & Minimum & Maximum \\
\midrule
\endfirsthead
\toprule
Model & Parameter & Minimum & Maximum \\
\midrule
\endhead
Closed SIR & $\beta$ & 0.05 & 1.00 \\
 & $\gamma$ & 0.05 & 1.00 \\
Open SIR & $\beta$ & 0.05 & 1.20 \\
 & $\gamma$ & 0.03 & 0.60 \\
 & $\mu$ & 0.005 & 0.25 \\
Open SEIR & $\beta$ & 0.05 & 1.50 \\
 & $\sigma$ & 0.05 & 1.00 \\
 & $\gamma$ & 0.03 & 0.60 \\
 & $\mu$ & 0.005 & 0.20 \\
SEIR with vaccination & $\beta$ & 0.10 & 4.00 \\
 & $\sigma$ & 0.05 & 1.00 \\
 & $\gamma$ & 0.03 & 0.60 \\
 & $\mu$ & 0.01 & 0.30 \\
 & $v$ & 0.001 & 0.15 \\
Age-structured SIR & $\beta$ & 0.03 & 0.80 \\
 & $\gamma_1$ & 0.05 & 0.50 \\
 & $\gamma_2$ & 0.05 & 0.50 \\
 & $c_{11}$ & 0.10 & 1.50 \\
 & $c_{12}$ & 0.05 & 1.00 \\
 & $c_{21}$ & 0.05 & 1.00 \\
 & $c_{22}$ & 0.10 & 1.50 \\
15-compartment five-group SIR & $\beta$ & 0.005 & 0.30 \\
 & $\gamma_1$ & 0.05 & 0.50 \\
 & $\gamma_2$ & 0.05 & 0.50 \\
 & $\gamma_3$ & 0.05 & 0.50 \\
 & $\gamma_4$ & 0.05 & 0.50 \\
 & $\gamma_5$ & 0.05 & 0.50 \\
Vector-borne & $\beta_{hv}$ & 0.05 & 1.50 \\
 & $\beta_{vh}$ & 0.05 & 1.50 \\
 & $\gamma_h$ & 0.05 & 0.60 \\
 & $\mu_v$ & 0.05 & 0.60 \\
 & $m$ & 0.50 & 6.00 \\
\bottomrule
\end{longtable}

\begin{table}[H]
\centering
\small
\caption{Model order and deterministic random seeds used by the Python benchmark.}
\label{tab:benchmark-seeds}
\begin{tabular}{@{}lrr@{}}
\toprule
Model & Binary LHS/CV seed & Ordered-bracket LHS/CV seed \\
\midrule
Closed SIR & 123 & 1700 \\
Open SIR & 124 & 1701 \\
Open SEIR & 125 & 1702 \\
SEIR with vaccination & 126 & 1703 \\
Age-structured SIR & 127 & 1704 \\
15-compartment five-group SIR & 128 & 1705 \\
Vector-borne & 129 & 1706 \\
\bottomrule
\end{tabular}
\end{table}

\subsection*{Built-in ODE simulation defaults in the Shiny implementation}

\begin{table}[H]
\centering
\small
\caption{Global built-in ODE defaults in the current Shiny implementation.}
\label{tab:app-ode-defaults}
\begin{tabularx}{\textwidth}{@{}lX@{}}
\toprule
Input & Default \\
\midrule
LHS simulation count & 400 \\
ODE solver & \texttt{deSolve::ode} with \texttt{method="lsoda"} \\
Time horizon & 365 time units \\
Output time step & 1 time unit \\
Initial infectious proportion & $I_0=0.01$ \\
Early-growth susceptible threshold & 0.95 of the initial major-host susceptible level; relaxed stepwise only when fewer than five positive infectious points are available \\
Case-increase tolerance & $10^{-5}$ \\
Growth fit & Linear regression of $\log I_{\rm major}(t)$ on time over the selected early-growth window \\
SIR-like conversion & $\widehat R=1+r/\gamma_{\rm removal}$ \\
SEIR-like conversion & $\widehat R=(r+\sigma_{\rm latency})(r+\gamma_{\rm removal})/(\sigma_{\rm latency}\gamma_{\rm removal})$ \\
Non-growing built-in convention & Growth-based $\widehat R<1$ is set to 0 to flag a non-growing/subthreshold run \\
\bottomrule
\end{tabularx}
\end{table}

\begin{table}[H]
\centering
\scriptsize
\caption{Built-in early-growth targets and growth-rate conversion quantities used by the Shiny implementation.}
\label{tab:growth-conversion-map}
\begin{tabularx}{\textwidth}{@{}lXXX@{}}
\toprule
Model & Major infectious/susceptible output & Removal rate used & Latency rate / conversion \\
\midrule
Closed SIR & $I/S$ & $\gamma$ & SIR-like \\
Open SIR & $I/S$ & $\gamma+\mu$ & SIR-like \\
Open SEIR & $I/S$ & $\gamma+\mu$ & $\sigma+\mu$; SEIR-like \\
SEIR with vaccination & $I/S$ & $\gamma+\mu$ & $\sigma+\mu$; SEIR-like \\
Age-structured SIR & $I_{\rm total}/S_{\rm total}$ & $(\gamma_1+\gamma_2)/2$ & SIR-like \\
15-compartment five-group SIR & $I_{\rm total}/S_{\rm total}$ & $(\gamma_1+\cdots+\gamma_5)/5$ & SIR-like \\
Vector-borne & $I_h/S_h$ & $\gamma_h$ & SIR-like host-growth approximation \\
\bottomrule
\end{tabularx}
\end{table}

The built-in initial states at $I_0=0.01$ are: Closed/Open SIR, $(S,I,R)=(0.99,0.01,0)$; Open SEIR, $(S,E,I,R)=(0.99,0,0.01,0)$; vaccination SEIR, $(S,V,E,I,R)=(0.99,0,0,0.01,0)$; two-group SIR, $(S_i,I_i,R_i)=(0.495,0.005,0)$ for $i=1,2$; five-group SIR, $(S_i,I_i,R_i)=(0.198,0.002,0)$ for $i=1,\ldots,5$; and the vector-borne model, $(S_h,I_h,R_h,S_v,I_v)=(0.99,0.01,0,1,0)$. Cumulative-incidence states are initialized to the corresponding seeded host infection amount in the application.

\subsection{Accuracy and AUC metrics for the ODE demonstrations} 

Table~\ref{tab:cross-model-v4} shows the accuracy, balanced accuracy, and AUC of the binary and multiclass models for the seven ODE demonstrations.

\begin{table*}[ht]
\centering
\caption{Five-fold held-out REEBLRA benchmark across the seven built-in models. Binary models use the declared model-aware basis (linear for the two simple SIR systems and compact log-cubic for the curved higher-dimensional systems); ordered brackets use the generic compact log-cubic basis with step 5 and Bayesian Laplace threshold fits. Values are proportions.}
\label{tab:cross-model-v4}
\scriptsize
\setlength{\tabcolsep}{3pt}
\begin{tabular}{lrrrrrrr}
\toprule
Model & Binary acc. & Binary bal. acc. & Binary AUC & Multi acc. & Multi bal. acc. & Multi AUC & Multi $\kappa$ \\
\midrule
Closed SIR & 0.999 & 0.999 & 1.000 & 0.998 & 0.992 & 1.000 & 0.996 \\
Open SIR & 0.997 & 0.997 & 1.000 & 0.992 & 0.934 & 1.000 & 0.983 \\
Open SEIR & 0.995 & 0.994 & 1.000 & 0.989 & 0.935 & 1.000 & 0.979 \\
SEIR with vaccination & 0.991 & 0.990 & 1.000 & 0.983 & 0.922 & 0.999 & 0.965 \\
Age-structured SIR & 0.985 & 0.978 & 0.999 & 0.970 & 0.929 & 0.998 & 0.940 \\
15-compartment five-group SIR & 0.993 & 0.974 & 0.999 & 0.951 & 0.939 & 0.996 & 0.921 \\
Vector-borne & 0.997 & 0.983 & 1.000 & 0.993 & 0.985 & 1.000 & 0.986 \\
\bottomrule
\end{tabular}
\end{table*}

\pagebreak
\section{{Alternative feature bases} and finer ordered brackets}\label{app:alternate-equations}

The supplementary sensitivity analyses below repeat all seven analytic-label ODE benchmarks while changing one design choice at a time. First, we use untransformed parameter values in a standardized linear logistic score for both binary ($R>1$) and ordered brackets ($h=5$, as in the main text). Second, we retain exactly the compact log-cubic feature transformations of the main-text ordered-bracket benchmark and reduce the bracket increment to $h=0.5$. These scenarios use the same parameter ranges, sample sizes, design seeds, prior specification, five-fold held-out validation, and probabilistic monotonicization algorithm as the main reported benchmark. They are not independent epidemiological studies or validation on observed epidemic data.

Let $\phi(\boldsymbol\theta)$ be a declared feature vector and $z_j=(\phi_j-\bar\phi_j)/s_j$. At cutoff $c$, all fitted equations take the explicit form $\eta_c=a_c+\sum_j b_{cj}z_j$ and $q_c=\{1+\exp(-\eta_c)\}^{-1}$. The numerical values of $a_c$, $b_{cj}$, the feature ordering, and the normalization constants are printed for every ODE model and cutoff below; full-precision values and fold-held-out diagnostic tables are also exported as CSV. Ordered probabilities are $\pi_0=1-q_{c_1}$, $\pi_j=q_{c_j}-q_{c_{j+1}}$, and $\pi_K=q_{c_K}$ after non-increasing projection, with the most probable bracket predicted. The equations are threshold-exceedance classifiers, not analytic identities for $R_0$ or $R_e$.

\subsection{{Linear parameters} with standardization: binary and sequential h = 5}\label{app:linear_h5}

Standardized linear-parameter basis; main-text bracket width $h=5$. The underlying parameter tables and analytic labels are generated from independent Latin-hypercube designs for binary ($n=5000$, seed $123+i$) and ordered ($n=8000$, seed $1700+i$) fits, where $i=0,\ldots,6$ indexes the seven benchmark models. A proper independent coefficient prior $N(0,2^2)$ is used; MAP/Laplace logistic fits and training-fold-only standardization match the accompanying reproducibility Python. Reported scores are pooled five-fold held-out metrics. AUC in the ordered summaries is macro/weighted one-vs-rest and is undefined when the prescribed full class set is not represented; per-bracket AUC is reported only where positive and negative observations exist.

\subsubsection{Five-fold binary evaluation for raw standardized parameters}

For each of the seven binary fits, the threshold event is $R>1$, the decision threshold is 0.5, and the binary equations use only the original parameters, individually standardized on the training sample. Sensitivity, specificity, precision, negative predictive value, F1, and Cohen's $\kappa$ are defined against the $R>1$ positive class. The Brier score evaluates squared probability error.

\begingroup\scriptsize\setlength{\LTcapwidth}{\linewidth}\setlength{\tabcolsep}{3pt}

\begin{longtable}{@{}lrrrrrrrr@{}}

\caption{Binary five-fold held-out performance using standardized linear parameters. AUC denotes binary ROC AUC; Brier denotes mean squared probability error.}\label{tab:supp-linear-binary-metrics}\\

\toprule Model & Accuracy & Balanced acc. & AUC & Sens. & Spec. & F1 & $\kappa$ & Brier \\ \midrule\endfirsthead

\toprule Model & Accuracy & Balanced acc. & AUC & Sens. & Spec. & F1 & $\kappa$ & Brier \\ \midrule\endhead

 Closed SIR & 0.999 & 0.999 & 1.000 & 0.998 & 1.000 & 0.999 & 0.997 & 0.007 \\

 Open SIR & 0.997 & 0.997 & 1.000 & 0.997 & 0.997 & 0.998 & 0.993 & 0.007 \\

 Open SEIR & 0.952 & 0.946 & 0.989 & 0.963 & 0.929 & 0.963 & 0.893 & 0.035 \\

 SEIR with vaccination & 0.925 & 0.906 & 0.968 & 0.949 & 0.863 & 0.948 & 0.814 & 0.056 \\

 Age-structured SIR & 0.951 & 0.927 & 0.986 & 0.970 & 0.883 & 0.969 & 0.855 & 0.036 \\

 15-compartment five-group SIR & 0.974 & 0.923 & 0.994 & 0.986 & 0.861 & 0.986 & 0.848 & 0.018 \\

 Vector-borne & 0.974 & 0.817 & 0.984 & 0.992 & 0.643 & 0.987 & 0.701 & 0.019 \\

\bottomrule\end{longtable}\endgroup

\begingroup\scriptsize\setlength{\LTcapwidth}{\linewidth}\setlength{\tabcolsep}{3pt}

\begin{longtable}{@{}lrrrr@{}}

\caption{Additional binary classification diagnostics for the standardized linear basis.}\label{tab:supp-linear-binary-extra}\\

\toprule Model & Positive $n$ & Negative $n$ & Precision & NPV \\ \midrule\endfirsthead

\toprule Model & Positive $n$ & Negative $n$ & Precision & NPV \\ \midrule\endhead

 Closed SIR & 2480 & 2520 & 1.000 & 0.998 \\

 Open SIR & 3285 & 1715 & 0.998 & 0.995 \\

 Open SEIR & 3276 & 1724 & 0.963 & 0.930 \\

 SEIR with vaccination & 3590 & 1410 & 0.946 & 0.870 \\

 Age-structured SIR & 3921 & 1079 & 0.968 & 0.890 \\

 15-compartment five-group SIR & 4526 & 474 & 0.985 & 0.864 \\

 Vector-borne & 4751 & 249 & 0.981 & 0.804 \\

\bottomrule\end{longtable}\endgroup

\subsubsection{Five-fold ordered-bracket evaluation}

Exact bracket classification is assessed under the four-class partition $R\le 1$, $1<R\le 6$, $6<R\le 10$, and $R>10$. A change in the number and granularity of classes changes the difficulty of exact-bracket prediction; accuracy and macro F1 should therefore not be interpreted as direct apples-to-apples comparisons across these two scenarios.

\begingroup\scriptsize\setlength{\LTcapwidth}{\linewidth}\setlength{\tabcolsep}{3pt}

\begin{longtable}{@{}lrrrrrrr@{}}

\caption{Five-fold ordered-bracket classification metrics for linear $h=5$. AUC values are OvR; a dash indicates that at least one prescribed class is absent, making the full-class multiclass AUC undefined.}\label{tab:supp-linear_h5-ordered}\\

\toprule Model & Acc. & Bal. acc. & Macro F1 & Weighted F1 & Macro AUC & Weighted AUC & $\kappa$ \\ \midrule\endfirsthead

\toprule Model & Acc. & Bal. acc. & Macro F1 & Weighted F1 & Macro AUC & Weighted AUC & $\kappa$ \\ \midrule\endhead

 Closed SIR & 0.985 & 0.818 & 0.852 & 0.984 & 0.998 & 1.000 & 0.971 \\

 Open SIR & 0.991 & 0.837 & 0.875 & 0.991 & 0.999 & 1.000 & 0.982 \\

 Open SEIR & 0.940 & 0.844 & 0.875 & 0.940 & 0.993 & 0.988 & 0.881 \\

 SEIR with vaccination & 0.894 & 0.758 & 0.788 & 0.893 & 0.978 & 0.962 & 0.779 \\

 Age-structured SIR & 0.867 & 0.646 & 0.681 & 0.860 & 0.950 & 0.948 & 0.720 \\

 15-compartment five-group SIR & 0.767 & 0.714 & 0.724 & 0.764 & 0.928 & 0.919 & 0.617 \\

 Vector-borne & 0.882 & 0.777 & 0.799 & 0.881 & 0.972 & 0.967 & 0.754 \\

\bottomrule\end{longtable}\endgroup

The CSV files include support-aware macro/weighted AUC calculations restricted to evaluable brackets and ROC AUC of each $R>c$ threshold. These alternatives are clearly distinguished from full-class macro/weighted AUC and do not silently fill absent classes. Each per-bracket diagnostic table below includes observed support, sensitivity (recall), specificity, precision, F1, negative predictive value, and OvR AUC.

\subsubsection{Closed SIR: validation and equations}

For the binary $R_0>1$ model, $z_j=(\theta_j-\bar\theta_j)/s_j$; the binary training-sample scaling constants (independent of ordered-sample scaling) are: $z_{1}=(\beta-0.525001)/0.274270$, $z_{2}=(\gamma-0.525001)/0.274269$.

\noindent\textbf{Binary cutoff $R_0>1$.}
\[\begin{aligned}
\eta_{R_0>1} &= -0.04280 \\
&\quad + 16.47672z_{1} - 16.49150z_{2}
\end{aligned}\]

\begingroup\scriptsize\setlength{\LTcapwidth}{\linewidth}\setlength{\tabcolsep}{3pt}

\begin{longtable}{@{}lrrrrrrr@{}}

\caption{Class-specific held-out metrics for Closed SIR with linear $h=5$.}\label{tab:supp-class-linear_h5-closed-sir}\\

\toprule Bracket & Support & Recall & Spec. & Precision & F1 & NPV & OvR AUC \\ \midrule\endfirsthead

\toprule Bracket & Support & Recall & Spec. & Precision & F1 & NPV & OvR AUC \\ \midrule\endhead

 \texttt{$<$1} & 4012 & 0.998 & 0.999 & 0.999 & 0.999 & 0.998 & 1.000 \\

 \texttt{1-6} & 3625 & 0.999 & 0.987 & 0.985 & 0.992 & 0.999 & 1.000 \\

 \texttt{6-10} & 250 & 0.780 & 0.993 & 0.771 & 0.775 & 0.993 & 0.994 \\

 \texttt{$>$10} & 113 & 0.496 & 0.999 & 0.918 & 0.644 & 0.993 & 0.998 \\

\bottomrule\end{longtable}\endgroup

\paragraph{Ordered-threshold feature map and equations.}

All $2$ fitted predictors are standardized using $z_j=(\phi_j(\boldsymbol\theta)-\bar\phi_j)/s_j$, where the values below use the complete training sample for the printed (full-data) equations. Held-out predictions separately recompute means, standard deviations, and fits within each training fold. Here $\phi_j$ is the feature shown in the table. The score for every estimable cutoff is $q_c=\operatorname{logit}^{-1}(\eta_c)$; positive scores increase the fitted exceedance probability, not the numerical reproduction number.

\begingroup\scriptsize\setlength{\LTcapwidth}{\linewidth}

\begin{longtable}{@{}r p{0.44\linewidth} r r@{}}

\caption{Standardization constants and ordered features for Closed SIR (raw linear basis).}\label{tab:supp-features-closed-sir-linear}\\

\toprule Index & Raw feature $\phi_j$ & Training mean & Training SD \\ \midrule\endfirsthead

\toprule Index & Raw feature $\phi_j$ & Training mean & Training SD \\ \midrule\endhead

 1 & $\beta$ & 0.525000 & 0.274259 \\

 2 & $\gamma$ & 0.525000 & 0.274259 \\

\bottomrule\end{longtable}\endgroup

Each displayed score below is a numeric equation in the standardized variables $z_1,\ldots,z_2$. The intercept and slopes are posterior modes from the independent $N(0,2^2)$ coefficient prior under the Gaussian Laplace approximation. A cutoff with only one observed class cannot produce a fitted finite logistic score, so its constant exceedance probability is stated explicitly.

\noindent\textbf{Cutoff $R_0>1$.}
\[\begin{aligned}
\eta_{R_0>1} &= -0.02550 \\
&\quad + 19.52491z_{1} - 19.40418z_{2}
\end{aligned}\]

\noindent\textbf{Cutoff $R_0>6$.}
\[\begin{aligned}
\eta_{R_0>6} &= -22.12775 \\
&\quad + 2.78498z_{1} - 13.58500z_{2}
\end{aligned}\]

\noindent\textbf{Cutoff $R_0>10$.}
\[\begin{aligned}
\eta_{R_0>10} &= -17.64488 \\
&\quad + 2.07489z_{1} - 9.25181z_{2}
\end{aligned}\]

\subsubsection{Open SIR: validation and equations}

For the binary $R_0>1$ model, $z_j=(\theta_j-\bar\theta_j)/s_j$; the binary training-sample scaling constants (independent of ordered-sample scaling) are: $z_{1}=(\beta-0.625001)/0.332010$, $z_{2}=(\gamma-0.315000)/0.164561$, $z_{3}=(\mu-0.127500)/0.070732$.

\noindent\textbf{Binary cutoff $R_0>1$.}
\[\begin{aligned}
\eta_{R_0>1} &= 9.99254 \\
&\quad + 18.16614z_{1} - 8.99984z_{2} - 3.66832z_{3}
\end{aligned}\]

\begingroup\scriptsize\setlength{\LTcapwidth}{\linewidth}\setlength{\tabcolsep}{3pt}

\begin{longtable}{@{}lrrrrrrr@{}}

\caption{Class-specific held-out metrics for Open SIR with linear $h=5$.}\label{tab:supp-class-linear_h5-open-sir}\\

\toprule Bracket & Support & Recall & Spec. & Precision & F1 & NPV & OvR AUC \\ \midrule\endfirsthead

\toprule Bracket & Support & Recall & Spec. & Precision & F1 & NPV & OvR AUC \\ \midrule\endhead

 \texttt{$<$1} & 2709 & 0.998 & 0.998 & 0.997 & 0.997 & 0.999 & 1.000 \\

 \texttt{1-6} & 5070 & 0.998 & 0.989 & 0.994 & 0.996 & 0.997 & 1.000 \\

 \texttt{6-10} & 162 & 0.827 & 0.996 & 0.822 & 0.825 & 0.996 & 0.998 \\

 \texttt{$>$10} & 59 & 0.525 & 1.000 & 0.969 & 0.681 & 0.996 & 1.000 \\

\bottomrule\end{longtable}\endgroup

\paragraph{Ordered-threshold feature map and equations.}

All $3$ fitted predictors are standardized using $z_j=(\phi_j(\boldsymbol\theta)-\bar\phi_j)/s_j$, where the values below use the complete training sample for the printed (full-data) equations. Held-out predictions separately recompute means, standard deviations, and fits within each training fold. Here $\phi_j$ is the feature shown in the table. The score for every estimable cutoff is $q_c=\operatorname{logit}^{-1}(\eta_c)$; positive scores increase the fitted exceedance probability, not the numerical reproduction number.

\begingroup\scriptsize\setlength{\LTcapwidth}{\linewidth}

\begin{longtable}{@{}r p{0.44\linewidth} r r@{}}

\caption{Standardization constants and ordered features for Open SIR (raw linear basis).}\label{tab:supp-features-open-sir-linear}\\

\toprule Index & Raw feature $\phi_j$ & Training mean & Training SD \\ \midrule\endfirsthead

\toprule Index & Raw feature $\phi_j$ & Training mean & Training SD \\ \midrule\endhead

 1 & $\beta$ & 0.625000 & 0.331997 \\

 2 & $\gamma$ & 0.315000 & 0.164555 \\

 3 & $\mu$ & 0.127500 & 0.070730 \\

\bottomrule\end{longtable}\endgroup

Each displayed score below is a numeric equation in the standardized variables $z_1,\ldots,z_3$. The intercept and slopes are posterior modes from the independent $N(0,2^2)$ coefficient prior under the Gaussian Laplace approximation. A cutoff with only one observed class cannot produce a fitted finite logistic score, so its constant exceedance probability is stated explicitly.

\noindent\textbf{Cutoff $R_0>1$.}
\[\begin{aligned}
\eta_{R_0>1} &= 11.55806 \\
&\quad + 21.05637z_{1} - 10.37200z_{2} - 4.60783z_{3}
\end{aligned}\]

\noindent\textbf{Cutoff $R_0>6$.}
\[\begin{aligned}
\eta_{R_0>6} &= -18.57968 \\
&\quad + 3.18594z_{1} - 8.47904z_{2} - 4.13327z_{3}
\end{aligned}\]

\noindent\textbf{Cutoff $R_0>10$.}
\[\begin{aligned}
\eta_{R_0>10} &= -14.91386 \\
&\quad + 1.70025z_{1} - 5.00689z_{2} - 3.61580z_{3}
\end{aligned}\]

\subsubsection{Open SEIR: validation and equations}

For the binary $R_0>1$ model, $z_j=(\theta_j-\bar\theta_j)/s_j$; the binary training-sample scaling constants (independent of ordered-sample scaling) are: $z_{1}=(\beta-0.774999)/0.418622$, $z_{2}=(\sigma-0.525001)/0.274270$, $z_{3}=(\gamma-0.315001)/0.164562$, $z_{4}=(\mu-0.102500)/0.056297$.

\noindent\textbf{Binary cutoff $R_0>1$.}
\[\begin{aligned}
\eta_{R_0>1} &= 3.72466 \\
&\quad + 6.86214z_{1} + 1.65577z_{2} - 3.55826z_{3} \\
&\quad - 2.20596z_{4}
\end{aligned}\]

\begingroup\scriptsize\setlength{\LTcapwidth}{\linewidth}\setlength{\tabcolsep}{3pt}

\begin{longtable}{@{}lrrrrrrr@{}}

\caption{Class-specific held-out metrics for Open SEIR with linear $h=5$.}\label{tab:supp-class-linear_h5-open-seir}\\

\toprule Bracket & Support & Recall & Spec. & Precision & F1 & NPV & OvR AUC \\ \midrule\endfirsthead

\toprule Bracket & Support & Recall & Spec. & Precision & F1 & NPV & OvR AUC \\ \midrule\endhead

 \texttt{$<$1} & 2764 & 0.924 & 0.965 & 0.934 & 0.929 & 0.960 & 0.989 \\

 \texttt{1-6} & 4899 & 0.961 & 0.918 & 0.949 & 0.955 & 0.937 & 0.986 \\

 \texttt{6-10} & 244 & 0.803 & 0.995 & 0.831 & 0.817 & 0.994 & 0.997 \\

 \texttt{$>$10} & 93 & 0.688 & 1.000 & 0.955 & 0.800 & 0.996 & 1.000 \\

\bottomrule\end{longtable}\endgroup

\paragraph{Ordered-threshold feature map and equations.}

All $4$ fitted predictors are standardized using $z_j=(\phi_j(\boldsymbol\theta)-\bar\phi_j)/s_j$, where the values below use the complete training sample for the printed (full-data) equations. Held-out predictions separately recompute means, standard deviations, and fits within each training fold. Here $\phi_j$ is the feature shown in the table. The score for every estimable cutoff is $q_c=\operatorname{logit}^{-1}(\eta_c)$; positive scores increase the fitted exceedance probability, not the numerical reproduction number.

\begingroup\scriptsize\setlength{\LTcapwidth}{\linewidth}

\begin{longtable}{@{}r p{0.44\linewidth} r r@{}}

\caption{Standardization constants and ordered features for Open SEIR (raw linear basis).}\label{tab:supp-features-open-seir-linear}\\

\toprule Index & Raw feature $\phi_j$ & Training mean & Training SD \\ \midrule\endfirsthead

\toprule Index & Raw feature $\phi_j$ & Training mean & Training SD \\ \midrule\endhead

 1 & $\beta$ & 0.775000 & 0.418605 \\

 2 & $\sigma$ & 0.525000 & 0.274259 \\

 3 & $\gamma$ & 0.315000 & 0.164555 \\

 4 & $\mu$ & 0.102500 & 0.056295 \\

\bottomrule\end{longtable}\endgroup

Each displayed score below is a numeric equation in the standardized variables $z_1,\ldots,z_4$. The intercept and slopes are posterior modes from the independent $N(0,2^2)$ coefficient prior under the Gaussian Laplace approximation. A cutoff with only one observed class cannot produce a fitted finite logistic score, so its constant exceedance probability is stated explicitly.

\noindent\textbf{Cutoff $R_0>1$.}
\[\begin{aligned}
\eta_{R_0>1} &= 3.93164 \\
&\quad + 7.19453z_{1} + 1.79351z_{2} - 3.75294z_{3} \\
&\quad - 2.55239z_{4}
\end{aligned}\]

\noindent\textbf{Cutoff $R_0>6$.}
\[\begin{aligned}
\eta_{R_0>6} &= -19.00917 \\
&\quad + 3.75660z_{1} + 0.78636z_{2} - 9.43432z_{3} \\
&\quad - 4.54975z_{4}
\end{aligned}\]

\noindent\textbf{Cutoff $R_0>10$.}
\[\begin{aligned}
\eta_{R_0>10} &= -16.00918 \\
&\quad + 2.32024z_{1} + 0.26876z_{2} - 6.16483z_{3} \\
&\quad - 3.44168z_{4}
\end{aligned}\]

\subsubsection{SEIR with vaccination: validation and equations}

For the binary $R_e(0)>1$ model, $z_j=(\theta_j-\bar\theta_j)/s_j$; the binary training-sample scaling constants (independent of ordered-sample scaling) are: $z_{1}=(\beta-2.049998)/1.125945$, $z_{2}=(\sigma-0.524999)/0.274267$, $z_{3}=(\gamma-0.315000)/0.164562$, $z_{4}=(\mu-0.155000)/0.083724$, $z_{5}=(v-0.075500)/0.043017$.

\noindent\textbf{Binary cutoff $R_e(0)>1$.}
\[\begin{aligned}
\eta_{R_e(0)>1} &= 3.22076 \\
&\quad + 4.11195z_{1} + 1.02771z_{2} - 1.67089z_{3} \\
&\quad - 0.22074z_{4} - 0.92731z_{5}
\end{aligned}\]

\begingroup\scriptsize\setlength{\LTcapwidth}{\linewidth}\setlength{\tabcolsep}{3pt}

\begin{longtable}{@{}lrrrrrrr@{}}

\caption{Class-specific held-out metrics for SEIR with vaccination with linear $h=5$.}\label{tab:supp-class-linear_h5-seir-with-vaccination}\\

\toprule Bracket & Support & Recall & Spec. & Precision & F1 & NPV & OvR AUC \\ \midrule\endfirsthead

\toprule Bracket & Support & Recall & Spec. & Precision & F1 & NPV & OvR AUC \\ \midrule\endhead

 \texttt{$<$1} & 2221 & 0.854 & 0.952 & 0.872 & 0.863 & 0.944 & 0.969 \\

 \texttt{1-6} & 5265 & 0.934 & 0.838 & 0.917 & 0.926 & 0.868 & 0.956 \\

 \texttt{6-10} & 403 & 0.675 & 0.985 & 0.701 & 0.688 & 0.983 & 0.988 \\

 \texttt{$>$10} & 111 & 0.568 & 0.998 & 0.840 & 0.677 & 0.994 & 0.998 \\

\bottomrule\end{longtable}\endgroup

\paragraph{Ordered-threshold feature map and equations.}

All $5$ fitted predictors are standardized using $z_j=(\phi_j(\boldsymbol\theta)-\bar\phi_j)/s_j$, where the values below use the complete training sample for the printed (full-data) equations. Held-out predictions separately recompute means, standard deviations, and fits within each training fold. Here $\phi_j$ is the feature shown in the table. The score for every estimable cutoff is $q_c=\operatorname{logit}^{-1}(\eta_c)$; positive scores increase the fitted exceedance probability, not the numerical reproduction number.

\begingroup\scriptsize\setlength{\LTcapwidth}{\linewidth}

\begin{longtable}{@{}r p{0.44\linewidth} r r@{}}

\caption{Standardization constants and ordered features for SEIR with vaccination (raw linear basis).}\label{tab:supp-features-seir-with-vaccination-linear}\\

\toprule Index & Raw feature $\phi_j$ & Training mean & Training SD \\ \midrule\endfirsthead

\toprule Index & Raw feature $\phi_j$ & Training mean & Training SD \\ \midrule\endhead

 1 & $\beta$ & 2.050003 & 1.125903 \\

 2 & $\sigma$ & 0.525001 & 0.274258 \\

 3 & $\gamma$ & 0.315000 & 0.164555 \\

 4 & $\mu$ & 0.155000 & 0.083721 \\

 5 & $v$ & 0.075500 & 0.043015 \\

\bottomrule\end{longtable}\endgroup

Each displayed score below is a numeric equation in the standardized variables $z_1,\ldots,z_5$. The intercept and slopes are posterior modes from the independent $N(0,2^2)$ coefficient prior under the Gaussian Laplace approximation. A cutoff with only one observed class cannot produce a fitted finite logistic score, so its constant exceedance probability is stated explicitly.

\noindent\textbf{Cutoff $R_e(0)>1$.}
\[\begin{aligned}
\eta_{R_e(0)>1} &= 3.38455 \\
&\quad + 4.24173z_{1} + 1.13002z_{2} - 1.66392z_{3} \\
&\quad - 0.12278z_{4} - 1.11260z_{5}
\end{aligned}\]

\noindent\textbf{Cutoff $R_e(0)>6$.}
\[\begin{aligned}
\eta_{R_e(0)>6} &= -12.36869 \\
&\quad + 3.98513z_{1} + 1.28538z_{2} - 5.90936z_{3} \\
&\quad - 1.98077z_{4} - 2.57111z_{5}
\end{aligned}\]

\noindent\textbf{Cutoff $R_e(0)>10$.}
\[\begin{aligned}
\eta_{R_e(0)>10} &= -14.85073 \\
&\quad + 2.52536z_{1} + 0.83210z_{2} - 5.65536z_{3} \\
&\quad - 2.22535z_{4} - 2.29947z_{5}
\end{aligned}\]

\subsubsection{Age-structured SIR: validation and equations}

For the binary $R_0>1$ model, $z_j=(\theta_j-\bar\theta_j)/s_j$; the binary training-sample scaling constants (independent of ordered-sample scaling) are: $z_{1}=(\beta-0.415000)/0.222302$, $z_{2}=(\gamma_1-0.275000)/0.129917$, $z_{3}=(\gamma_2-0.275000)/0.129917$, $z_{4}=(c_{11}-0.800000)/0.404188$, $z_{5}=(c_{12}-0.525000)/0.274270$, $z_{6}=(c_{21}-0.525000)/0.274268$, $z_{7}=(c_{22}-0.799999)/0.404185$.

\noindent\textbf{Binary cutoff $R_0>1$.}
\[\begin{aligned}
\eta_{R_0>1} &= 6.46265 \\
&\quad + 6.42182z_{1} - 1.84060z_{2} - 1.60343z_{3} \\
&\quad + 1.28088z_{4} + 0.81248z_{5} + 0.79145z_{6} \\
&\quad + 1.28878z_{7}
\end{aligned}\]

\begingroup\scriptsize\setlength{\LTcapwidth}{\linewidth}\setlength{\tabcolsep}{3pt}

\begin{longtable}{@{}lrrrrrrr@{}}

\caption{Class-specific held-out metrics for Age-structured SIR with linear $h=5$.}\label{tab:supp-class-linear_h5-age-structured-sir}\\

\toprule Bracket & Support & Recall & Spec. & Precision & F1 & NPV & OvR AUC \\ \midrule\endfirsthead

\toprule Bracket & Support & Recall & Spec. & Precision & F1 & NPV & OvR AUC \\ \midrule\endhead

 \texttt{$<$1} & 1729 & 0.873 & 0.970 & 0.888 & 0.880 & 0.965 & 0.986 \\

 \texttt{1-6} & 5366 & 0.940 & 0.776 & 0.895 & 0.917 & 0.865 & 0.939 \\

 \texttt{6-10} & 654 & 0.462 & 0.967 & 0.555 & 0.504 & 0.953 & 0.922 \\

 \texttt{$>$10} & 251 & 0.311 & 0.995 & 0.650 & 0.420 & 0.978 & 0.952 \\

\bottomrule\end{longtable}\endgroup

\paragraph{Ordered-threshold feature map and equations.}

All $7$ fitted predictors are standardized using $z_j=(\phi_j(\boldsymbol\theta)-\bar\phi_j)/s_j$, where the values below use the complete training sample for the printed (full-data) equations. Held-out predictions separately recompute means, standard deviations, and fits within each training fold. Here $\phi_j$ is the feature shown in the table. The score for every estimable cutoff is $q_c=\operatorname{logit}^{-1}(\eta_c)$; positive scores increase the fitted exceedance probability, not the numerical reproduction number.

\begingroup\scriptsize\setlength{\LTcapwidth}{\linewidth}

\begin{longtable}{@{}r p{0.44\linewidth} r r@{}}

\caption{Standardization constants and ordered features for Age-structured SIR (raw linear basis).}\label{tab:supp-features-age-structured-sir-linear}\\

\toprule Index & Raw feature $\phi_j$ & Training mean & Training SD \\ \midrule\endfirsthead

\toprule Index & Raw feature $\phi_j$ & Training mean & Training SD \\ \midrule\endhead

 1 & $\beta$ & 0.414999 & 0.222294 \\

 2 & $\gamma_1$ & 0.275000 & 0.129912 \\

 3 & $\gamma_2$ & 0.275000 & 0.129912 \\

 4 & $c_{11}$ & 0.800000 & 0.404172 \\

 5 & $c_{12}$ & 0.525000 & 0.274259 \\

 6 & $c_{21}$ & 0.525000 & 0.274259 \\

 7 & $c_{22}$ & 0.800001 & 0.404171 \\

\bottomrule\end{longtable}\endgroup

Each displayed score below is a numeric equation in the standardized variables $z_1,\ldots,z_7$. The intercept and slopes are posterior modes from the independent $N(0,2^2)$ coefficient prior under the Gaussian Laplace approximation. A cutoff with only one observed class cannot produce a fitted finite logistic score, so its constant exceedance probability is stated explicitly.

\noindent\textbf{Cutoff $R_0>1$.}
\[\begin{aligned}
\eta_{R_0>1} &= 6.77355 \\
&\quad + 6.68576z_{1} - 1.79851z_{2} - 1.81802z_{3} \\
&\quad + 1.31850z_{4} + 0.61787z_{5} + 0.60576z_{6} \\
&\quad + 1.28146z_{7}
\end{aligned}\]

\noindent\textbf{Cutoff $R_0>6$.}
\[\begin{aligned}
\eta_{R_0>6} &= -5.02772 \\
&\quad + 2.32909z_{1} - 1.67799z_{2} - 1.85421z_{3} \\
&\quad + 0.85883z_{4} + 0.27804z_{5} + 0.32068z_{6} \\
&\quad + 0.96436z_{7}
\end{aligned}\]

\noindent\textbf{Cutoff $R_0>10$.}
\[\begin{aligned}
\eta_{R_0>10} &= -6.60593 \\
&\quad + 2.01211z_{1} - 1.55392z_{2} - 1.41256z_{3} \\
&\quad + 0.71699z_{4} + 0.22277z_{5} + 0.25685z_{6} \\
&\quad + 0.69982z_{7}
\end{aligned}\]

\subsubsection{15-compartment five-group SIR: validation and equations}

For the binary $R_0>1$ model, $z_j=(\theta_j-\bar\theta_j)/s_j$; the binary training-sample scaling constants (independent of ordered-sample scaling) are: $z_{1}=(\beta-0.152500)/0.085168$, $z_{2}=(\gamma_1-0.275000)/0.129917$, $z_{3}=(\gamma_2-0.275000)/0.129917$, $z_{4}=(\gamma_3-0.275000)/0.129917$, $z_{5}=(\gamma_4-0.275000)/0.129917$, $z_{6}=(\gamma_5-0.275000)/0.129917$.

\noindent\textbf{Binary cutoff $R_0>1$.}
\[\begin{aligned}
\eta_{R_0>1} &= 16.10250 \\
&\quad + 11.45645z_{1} - 0.59421z_{2} - 0.71818z_{3} \\
&\quad - 0.56415z_{4} - 0.61466z_{5} - 0.40383z_{6}
\end{aligned}\]

\begingroup\scriptsize\setlength{\LTcapwidth}{\linewidth}\setlength{\tabcolsep}{3pt}

\begin{longtable}{@{}lrrrrrrr@{}}

\caption{Class-specific held-out metrics for 15-compartment five-group SIR with linear $h=5$.}\label{tab:supp-class-linear_h5-15-compartment-five-group-sir}\\

\toprule Bracket & Support & Recall & Spec. & Precision & F1 & NPV & OvR AUC \\ \midrule\endfirsthead

\toprule Bracket & Support & Recall & Spec. & Precision & F1 & NPV & OvR AUC \\ \midrule\endhead

 \texttt{$<$1} & 776 & 0.874 & 0.986 & 0.868 & 0.871 & 0.986 & 0.995 \\

 \texttt{1-6} & 4425 & 0.883 & 0.802 & 0.846 & 0.864 & 0.847 & 0.929 \\

 \texttt{6-10} & 1824 & 0.568 & 0.869 & 0.562 & 0.565 & 0.872 & 0.856 \\

 \texttt{$>$10} & 975 & 0.531 & 0.966 & 0.682 & 0.597 & 0.937 & 0.932 \\

\bottomrule\end{longtable}\endgroup

\paragraph{Ordered-threshold feature map and equations.}

All $6$ fitted predictors are standardized using $z_j=(\phi_j(\boldsymbol\theta)-\bar\phi_j)/s_j$, where the values below use the complete training sample for the printed (full-data) equations. Held-out predictions separately recompute means, standard deviations, and fits within each training fold. Here $\phi_j$ is the feature shown in the table. The score for every estimable cutoff is $q_c=\operatorname{logit}^{-1}(\eta_c)$; positive scores increase the fitted exceedance probability, not the numerical reproduction number.

\begingroup\scriptsize\setlength{\LTcapwidth}{\linewidth}

\begin{longtable}{@{}r p{0.44\linewidth} r r@{}}

\caption{Standardization constants and ordered features for 15-compartment five-group SIR (raw linear basis).}\label{tab:supp-features-15-compartment-five-group-sir-linear}\\

\toprule Index & Raw feature $\phi_j$ & Training mean & Training SD \\ \midrule\endfirsthead

\toprule Index & Raw feature $\phi_j$ & Training mean & Training SD \\ \midrule\endhead

 1 & $\beta$ & 0.152500 & 0.085164 \\

 2 & $\gamma_1$ & 0.275000 & 0.129912 \\

 3 & $\gamma_2$ & 0.275000 & 0.129912 \\

 4 & $\gamma_3$ & 0.275000 & 0.129912 \\

 5 & $\gamma_4$ & 0.275000 & 0.129912 \\

 6 & $\gamma_5$ & 0.275000 & 0.129912 \\

\bottomrule\end{longtable}\endgroup

Each displayed score below is a numeric equation in the standardized variables $z_1,\ldots,z_6$. The intercept and slopes are posterior modes from the independent $N(0,2^2)$ coefficient prior under the Gaussian Laplace approximation. A cutoff with only one observed class cannot produce a fitted finite logistic score, so its constant exceedance probability is stated explicitly.

\noindent\textbf{Cutoff $R_0>1$.}
\[\begin{aligned}
\eta_{R_0>1} &= 17.58651 \\
&\quad + 12.60636z_{1} - 0.65001z_{2} - 0.88915z_{3} \\
&\quad - 0.74272z_{4} - 0.57191z_{5} - 0.45613z_{6}
\end{aligned}\]

\noindent\textbf{Cutoff $R_0>6$.}
\[\begin{aligned}
\eta_{R_0>6} &= -1.67511 \\
&\quad + 3.00743z_{1} - 0.88540z_{2} - 1.07712z_{3} \\
&\quad - 0.91067z_{4} - 0.74923z_{5} - 0.57453z_{6}
\end{aligned}\]

\noindent\textbf{Cutoff $R_0>10$.}
\[\begin{aligned}
\eta_{R_0>10} &= -4.19228 \\
&\quad + 2.56890z_{1} - 0.56975z_{2} - 1.02280z_{3} \\
&\quad - 0.75957z_{4} - 0.59262z_{5} - 0.27580z_{6}
\end{aligned}\]

\subsubsection{Vector-borne: validation and equations}

For the binary $R_0>1$ model, $z_j=(\theta_j-\bar\theta_j)/s_j$; the binary training-sample scaling constants (independent of ordered-sample scaling) are: $z_{1}=(\beta_{hv}-0.775000)/0.418622$, $z_{2}=(\beta_{vh}-0.775001)/0.418621$, $z_{3}=(\gamma_h-0.325001)/0.158787$, $z_{4}=(\mu_v-0.325000)/0.158786$, $z_{5}=(m-3.249998)/1.587873$.

\noindent\textbf{Binary cutoff $R_0>1$.}
\[\begin{aligned}
\eta_{R_0>1} &= 9.11503 \\
&\quad + 3.08141z_{1} + 2.99646z_{2} - 1.59636z_{3} \\
&\quad - 1.52440z_{4} + 2.13926z_{5}
\end{aligned}\]

\begingroup\scriptsize\setlength{\LTcapwidth}{\linewidth}\setlength{\tabcolsep}{3pt}

\begin{longtable}{@{}lrrrrrrr@{}}

\caption{Class-specific held-out metrics for Vector-borne with linear $h=5$.}\label{tab:supp-class-linear_h5-vector-borne}\\

\toprule Bracket & Support & Recall & Spec. & Precision & F1 & NPV & OvR AUC \\ \midrule\endfirsthead

\toprule Bracket & Support & Recall & Spec. & Precision & F1 & NPV & OvR AUC \\ \midrule\endhead

 \texttt{$<$1} & 380 & 0.624 & 0.991 & 0.777 & 0.692 & 0.981 & 0.982 \\

 \texttt{1-6} & 5476 & 0.948 & 0.834 & 0.925 & 0.936 & 0.880 & 0.966 \\

 \texttt{6-10} & 1397 & 0.742 & 0.945 & 0.739 & 0.741 & 0.945 & 0.954 \\

 \texttt{$>$10} & 747 & 0.794 & 0.987 & 0.867 & 0.829 & 0.979 & 0.988 \\

\bottomrule\end{longtable}\endgroup

\paragraph{Ordered-threshold feature map and equations.}

All $5$ fitted predictors are standardized using $z_j=(\phi_j(\boldsymbol\theta)-\bar\phi_j)/s_j$, where the values below use the complete training sample for the printed (full-data) equations. Held-out predictions separately recompute means, standard deviations, and fits within each training fold. Here $\phi_j$ is the feature shown in the table. The score for every estimable cutoff is $q_c=\operatorname{logit}^{-1}(\eta_c)$; positive scores increase the fitted exceedance probability, not the numerical reproduction number.

\begingroup\scriptsize\setlength{\LTcapwidth}{\linewidth}

\begin{longtable}{@{}r p{0.44\linewidth} r r@{}}

\caption{Standardization constants and ordered features for Vector-borne (raw linear basis).}\label{tab:supp-features-vector-borne-linear}\\

\toprule Index & Raw feature $\phi_j$ & Training mean & Training SD \\ \midrule\endfirsthead

\toprule Index & Raw feature $\phi_j$ & Training mean & Training SD \\ \midrule\endhead

 1 & $\beta_{hv}$ & 0.775000 & 0.418606 \\

 2 & $\beta_{vh}$ & 0.775000 & 0.418605 \\

 3 & $\gamma_h$ & 0.325000 & 0.158781 \\

 4 & $\mu_v$ & 0.325000 & 0.158782 \\

 5 & $m$ & 3.249999 & 1.587814 \\

\bottomrule\end{longtable}\endgroup

Each displayed score below is a numeric equation in the standardized variables $z_1,\ldots,z_5$. The intercept and slopes are posterior modes from the independent $N(0,2^2)$ coefficient prior under the Gaussian Laplace approximation. A cutoff with only one observed class cannot produce a fitted finite logistic score, so its constant exceedance probability is stated explicitly.

\noindent\textbf{Cutoff $R_0>1$.}
\[\begin{aligned}
\eta_{R_0>1} &= 8.85460 \\
&\quad + 2.92937z_{1} + 2.91160z_{2} - 1.27723z_{3} \\
&\quad - 1.41902z_{4} + 2.07702z_{5}
\end{aligned}\]

\noindent\textbf{Cutoff $R_0>6$.}
\[\begin{aligned}
\eta_{R_0>6} &= -4.43547 \\
&\quad + 2.92508z_{1} + 3.06523z_{2} - 3.18408z_{3} \\
&\quad - 3.24908z_{4} + 2.74502z_{5}
\end{aligned}\]

\noindent\textbf{Cutoff $R_0>10$.}
\[\begin{aligned}
\eta_{R_0>10} &= -9.44178 \\
&\quad + 2.74462z_{1} + 2.55680z_{2} - 3.58983z_{3} \\
&\quad - 3.65854z_{4} + 2.48482z_{5}
\end{aligned}\]

\subsection{{Main-text transformations} with finer sequential brackets h = 0.5}\label{app:log_cubic_h0p5}

Main-text compact log-cubic transformation; finer bracket width $h=0.5$. The underlying parameter tables and analytic labels are generated from independent Latin-hypercube designs for binary ($n=5000$, seed $123+i$) and ordered ($n=8000$, seed $1700+i$) fits, where $i=0,\ldots,6$ indexes the seven benchmark models. A proper independent coefficient prior $N(0,2^2)$ is used; MAP/Laplace logistic fits and training-fold-only standardization match the accompanying reproducibility Python. Reported scores are pooled five-fold held-out metrics. AUC in the ordered summaries is macro/weighted one-vs-rest and is undefined when the prescribed full class set is not represented; per-bracket AUC is reported only where positive and negative observations exist.

\subsubsection{Five-fold ordered-bracket evaluation}

Exact bracket classification is assessed under the 20-class partition with boundaries $1,1.5,2,\ldots,10$. A change in the number and granularity of classes changes the difficulty of exact-bracket prediction; accuracy and macro F1 should therefore not be interpreted as direct apples-to-apples comparisons across these two scenarios.

\begingroup\scriptsize\setlength{\LTcapwidth}{\linewidth}\setlength{\tabcolsep}{3pt}

\begin{longtable}{@{}lrrrrrrr@{}}

\caption{Five-fold ordered-bracket classification metrics for log cubic $h=0.5$. AUC values are OvR; a dash indicates that at least one prescribed class is absent, making the full-class multiclass AUC undefined.}\label{tab:supp-log_cubic_h0p5-ordered}\\

\toprule Model & Acc. & Bal. acc. & Macro F1 & Weighted F1 & Macro AUC & Weighted AUC & $\kappa$ \\ \midrule\endfirsthead

\toprule Model & Acc. & Bal. acc. & Macro F1 & Weighted F1 & Macro AUC & Weighted AUC & $\kappa$ \\ \midrule\endhead

 Closed SIR & 0.962 & 0.714 & 0.708 & 0.959 & 0.999 & 1.000 & 0.946 \\

 Open SIR & 0.931 & 0.552 & 0.540 & 0.928 & 0.989 & 0.999 & 0.915 \\

 Open SEIR & 0.917 & 0.570 & 0.575 & 0.915 & 0.993 & 0.998 & 0.898 \\

 SEIR with vaccination & 0.860 & 0.561 & 0.555 & 0.856 & 0.986 & 0.995 & 0.837 \\

 Age-structured SIR & 0.775 & 0.537 & 0.538 & 0.772 & 0.969 & 0.985 & 0.748 \\

 15-compartment five-group SIR & 0.698 & 0.588 & 0.586 & 0.691 & 0.974 & 0.980 & 0.678 \\

 Vector-borne & 0.930 & 0.873 & 0.878 & 0.929 & 0.997 & 0.999 & 0.926 \\

\bottomrule\end{longtable}\endgroup

The CSV files include support-aware macro/weighted AUC calculations restricted to evaluable brackets and ROC AUC of each $R>c$ threshold. These alternatives are clearly distinguished from full-class macro/weighted AUC and do not silently fill absent classes. Each per-bracket diagnostic table below includes observed support, sensitivity (recall), specificity, precision, F1, negative predictive value, and OvR AUC.

\subsubsection{Closed SIR: validation and equations}

\begingroup\scriptsize\setlength{\LTcapwidth}{\linewidth}\setlength{\tabcolsep}{3pt}

\begin{longtable}{@{}lrrrrrrr@{}}

\caption{Class-specific held-out metrics for Closed SIR with log cubic $h=0.5$.}\label{tab:supp-class-log_cubic_h0p5-closed-sir}\\

\toprule Bracket & Support & Recall & Spec. & Precision & F1 & NPV & OvR AUC \\ \midrule\endfirsthead

\toprule Bracket & Support & Recall & Spec. & Precision & F1 & NPV & OvR AUC \\ \midrule\endhead

 \texttt{$<$1} & 4012 & 0.998 & 0.999 & 1.000 & 0.999 & 0.998 & 1.000 \\

 \texttt{1-1.5} & 1452 & 0.991 & 0.996 & 0.980 & 0.986 & 0.998 & 1.000 \\

 \texttt{1.5-2} & 728 & 0.967 & 0.997 & 0.966 & 0.966 & 0.997 & 1.000 \\

 \texttt{2-2.5} & 438 & 0.947 & 0.998 & 0.958 & 0.953 & 0.997 & 1.000 \\

 \texttt{2.5-3} & 315 & 0.949 & 0.996 & 0.901 & 0.924 & 0.998 & 0.999 \\

 \texttt{3-3.5} & 194 & 0.876 & 0.998 & 0.914 & 0.895 & 0.997 & 1.000 \\

 \texttt{3.5-4} & 135 & 0.867 & 0.999 & 0.921 & 0.893 & 0.998 & 1.000 \\

 \texttt{4-4.5} & 127 & 0.913 & 0.997 & 0.835 & 0.872 & 0.999 & 0.999 \\

 \texttt{4.5-5} & 96 & 0.781 & 0.998 & 0.824 & 0.802 & 0.997 & 0.999 \\

 \texttt{5-5.5} & 79 & 0.785 & 0.997 & 0.713 & 0.747 & 0.998 & 0.999 \\

 \texttt{5.5-6} & 61 & 0.525 & 0.999 & 0.821 & 0.640 & 0.996 & 0.998 \\

 \texttt{6-6.5} & 58 & 0.914 & 0.997 & 0.679 & 0.779 & 0.999 & 0.999 \\

 \texttt{6.5-7} & 37 & 0.378 & 1.000 & 0.875 & 0.528 & 0.997 & 0.999 \\

 \texttt{7-7.5} & 40 & 0.950 & 0.998 & 0.679 & 0.792 & 1.000 & 0.999 \\

 \texttt{7.5-8} & 32 & 0.500 & 0.999 & 0.727 & 0.593 & 0.998 & 0.999 \\

 \texttt{8-8.5} & 33 & 0.788 & 0.997 & 0.542 & 0.642 & 0.999 & 0.998 \\

 \texttt{8.5-9} & 20 & 0.150 & 1.000 & 0.750 & 0.250 & 0.998 & 0.998 \\

 \texttt{9-9.5} & 14 & 0.000 & 1.000 & -- & 0.000 & 0.998 & 0.998 \\

 \texttt{9.5-10} & 16 & 0.000 & 1.000 & -- & 0.000 & 0.998 & 0.998 \\

 \texttt{$>$10} & 113 & 1.000 & 0.997 & 0.813 & 0.897 & 1.000 & 1.000 \\

\bottomrule\end{longtable}\endgroup

\paragraph{Ordered-threshold feature map and equations.}

All $7$ fitted predictors are standardized using $z_j=(\phi_j(\boldsymbol\theta)-\bar\phi_j)/s_j$, where the values below use the complete training sample for the printed (full-data) equations. Held-out predictions separately recompute means, standard deviations, and fits within each training fold. Here $\phi_j$ is the feature shown in the table. The score for every estimable cutoff is $q_c=\operatorname{logit}^{-1}(\eta_c)$; positive scores increase the fitted exceedance probability, not the numerical reproduction number.

\begingroup\scriptsize\setlength{\LTcapwidth}{\linewidth}

\begin{longtable}{@{}r p{0.44\linewidth} r r@{}}

\caption{Standardization constants and ordered features for Closed SIR (compact log-cubic basis).}\label{tab:supp-features-closed-sir-fine}\\

\toprule Index & Raw feature $\phi_j$ & Training mean & Training SD \\ \midrule\endfirsthead

\toprule Index & Raw feature $\phi_j$ & Training mean & Training SD \\ \midrule\endhead

 1 & $\log(\beta)$ & -0.842331 & 0.709134 \\

 2 & $\log(\gamma)$ & -0.842330 & 0.709130 \\

 3 & $[\log(\beta)]^2$ & 1.212328 & 1.783167 \\

 4 & $[\log(\gamma)]^2$ & 1.212323 & 1.783137 \\

 5 & $\log(\beta)\,\log(\gamma)$ & 0.709592 & 0.978385 \\

 6 & $[\log(\beta)]^3$ & -2.221997 & 4.505500 \\

 7 & $[\log(\gamma)]^3$ & -2.221968 & 4.505350 \\

\bottomrule\end{longtable}\endgroup

Each displayed score below is a numeric equation in the standardized variables $z_1,\ldots,z_7$. The intercept and slopes are posterior modes from the independent $N(0,2^2)$ coefficient prior under the Gaussian Laplace approximation. A cutoff with only one observed class cannot produce a fitted finite logistic score, so its constant exceedance probability is stated explicitly.

\noindent\textbf{Cutoff $R_0>1$.}
\[\begin{aligned}
\eta_{R_0>1} &= -0.34751 \\
&\quad + 19.62432z_{1} - 19.24511z_{2} - 6.32871z_{3} \\
&\quad + 6.80608z_{4} + 0.28151z_{5} + 3.51080z_{6} \\
&\quad - 2.36030z_{7}
\end{aligned}\]

\noindent\textbf{Cutoff $R_0>1.5$.}
\[\begin{aligned}
\eta_{R_0>1.5} &= -8.31120 \\
&\quad + 13.78345z_{1} - 14.72849z_{2} - 3.12752z_{3} \\
&\quad + 9.28616z_{4} - 7.62345z_{5} + 0.99548z_{6} \\
&\quad - 4.84850z_{7}
\end{aligned}\]

\noindent\textbf{Cutoff $R_0>2$.}
\[\begin{aligned}
\eta_{R_0>2} &= -11.33320 \\
&\quad + 9.10945z_{1} - 10.30802z_{2} - 1.09070z_{3} \\
&\quad + 9.11445z_{4} - 10.90040z_{5} - 0.69440z_{6} \\
&\quad - 5.72695z_{7}
\end{aligned}\]

\noindent\textbf{Cutoff $R_0>2.5$.}
\[\begin{aligned}
\eta_{R_0>2.5} &= -12.23313 \\
&\quad + 5.55116z_{1} - 7.17690z_{2} + 0.14613z_{3} \\
&\quad + 8.53612z_{4} - 11.77120z_{5} - 1.15529z_{6} \\
&\quad - 5.40317z_{7}
\end{aligned}\]

\noindent\textbf{Cutoff $R_0>3$.}
\[\begin{aligned}
\eta_{R_0>3} &= -12.40649 \\
&\quad + 3.45909z_{1} - 4.00045z_{2} + 0.75965z_{3} \\
&\quad + 7.68162z_{4} - 12.02208z_{5} - 1.08228z_{6} \\
&\quad - 6.04115z_{7}
\end{aligned}\]

\noindent\textbf{Cutoff $R_0>3.5$.}
\[\begin{aligned}
\eta_{R_0>3.5} &= -11.77181 \\
&\quad + 1.72503z_{1} - 2.41667z_{2} + 0.87061z_{3} \\
&\quad + 6.48068z_{4} - 11.80882z_{5} - 1.93373z_{6} \\
&\quad - 6.02209z_{7}
\end{aligned}\]

\noindent\textbf{Cutoff $R_0>4$.}
\[\begin{aligned}
\eta_{R_0>4} &= -11.74804 \\
&\quad + 1.35584z_{1} - 0.91787z_{2} + 0.91556z_{3} \\
&\quad + 5.58430z_{4} - 11.03977z_{5} - 1.62785z_{6} \\
&\quad - 6.15810z_{7}
\end{aligned}\]

\noindent\textbf{Cutoff $R_0>4.5$.}
\[\begin{aligned}
\eta_{R_0>4.5} &= -11.89186 \\
&\quad - 0.18872z_{1} - 0.00367z_{2} + 0.42517z_{3} \\
&\quad + 5.22846z_{4} - 10.99795z_{5} - 0.84249z_{6} \\
&\quad - 5.81497z_{7}
\end{aligned}\]

\noindent\textbf{Cutoff $R_0>5$.}
\[\begin{aligned}
\eta_{R_0>5} &= -11.76264 \\
&\quad - 0.06486z_{1} + 1.01319z_{2} + 0.15513z_{3} \\
&\quad + 4.22526z_{4} - 10.25798z_{5} - 0.94971z_{6} \\
&\quad - 6.18443z_{7}
\end{aligned}\]

\noindent\textbf{Cutoff $R_0>5.5$.}
\[\begin{aligned}
\eta_{R_0>5.5} &= -11.61764 \\
&\quad + 0.06604z_{1} + 1.59781z_{2} + 0.42448z_{3} \\
&\quad + 3.61333z_{4} - 9.56914z_{5} - 0.60807z_{6} \\
&\quad - 6.01186z_{7}
\end{aligned}\]

\noindent\textbf{Cutoff $R_0>6$.}
\[\begin{aligned}
\eta_{R_0>6} &= -11.64499 \\
&\quad - 0.13383z_{1} + 1.73018z_{2} + 0.21888z_{3} \\
&\quad + 3.16349z_{4} - 9.23197z_{5} - 0.58794z_{6} \\
&\quad - 5.69038z_{7}
\end{aligned}\]

\noindent\textbf{Cutoff $R_0>6.5$.}
\[\begin{aligned}
\eta_{R_0>6.5} &= -11.56966 \\
&\quad - 0.58602z_{1} + 1.81811z_{2} - 0.07978z_{3} \\
&\quad + 2.77082z_{4} - 8.66368z_{5} - 0.47363z_{6} \\
&\quad - 5.27595z_{7}
\end{aligned}\]

\noindent\textbf{Cutoff $R_0>7$.}
\[\begin{aligned}
\eta_{R_0>7} &= -11.60823 \\
&\quad + 0.59613z_{1} + 2.21612z_{2} + 0.67104z_{3} \\
&\quad + 1.92741z_{4} - 8.07011z_{5} - 0.56137z_{6} \\
&\quad - 5.74410z_{7}
\end{aligned}\]

\noindent\textbf{Cutoff $R_0>7.5$.}
\[\begin{aligned}
\eta_{R_0>7.5} &= -11.40467 \\
&\quad - 0.65982z_{1} + 1.96799z_{2} - 0.26643z_{3} \\
&\quad + 1.98423z_{4} - 7.69791z_{5} - 0.47417z_{6} \\
&\quad - 4.91722z_{7}
\end{aligned}\]

\noindent\textbf{Cutoff $R_0>8$.}
\[\begin{aligned}
\eta_{R_0>8} &= -11.34055 \\
&\quad + 0.65787z_{1} + 2.22889z_{2} + 0.68951z_{3} \\
&\quad + 1.07190z_{4} - 6.89694z_{5} - 0.52209z_{6} \\
&\quad - 5.37517z_{7}
\end{aligned}\]

\noindent\textbf{Cutoff $R_0>8.5$.}
\[\begin{aligned}
\eta_{R_0>8.5} &= -11.25016 \\
&\quad + 0.88623z_{1} + 2.14540z_{2} + 0.88792z_{3} \\
&\quad + 0.67302z_{4} - 6.42783z_{5} - 0.49363z_{6} \\
&\quad - 5.14375z_{7}
\end{aligned}\]

\noindent\textbf{Cutoff $R_0>9$.}
\[\begin{aligned}
\eta_{R_0>9} &= -11.35939 \\
&\quad + 0.42767z_{1} + 2.09376z_{2} + 0.54320z_{3} \\
&\quad + 0.74283z_{4} - 6.60861z_{5} - 0.46606z_{6} \\
&\quad - 4.82771z_{7}
\end{aligned}\]

\noindent\textbf{Cutoff $R_0>9.5$.}
\[\begin{aligned}
\eta_{R_0>9.5} &= -11.31421 \\
&\quad + 0.61092z_{1} + 2.03066z_{2} + 0.72277z_{3} \\
&\quad + 0.53526z_{4} - 6.40347z_{5} - 0.42941z_{6} \\
&\quad - 4.72167z_{7}
\end{aligned}\]

\noindent\textbf{Cutoff $R_0>10$.}
\[\begin{aligned}
\eta_{R_0>10} &= -11.15649 \\
&\quad + 0.40232z_{1} + 2.00522z_{2} + 0.54330z_{3} \\
&\quad + 0.39681z_{4} - 6.01513z_{5} - 0.41552z_{6} \\
&\quad - 4.55482z_{7}
\end{aligned}\]

\subsubsection{Open SIR: validation and equations}

\begingroup\scriptsize\setlength{\LTcapwidth}{\linewidth}\setlength{\tabcolsep}{3pt}

\begin{longtable}{@{}lrrrrrrr@{}}

\caption{Class-specific held-out metrics for Open SIR with log cubic $h=0.5$.}\label{tab:supp-class-log_cubic_h0p5-open-sir}\\

\toprule Bracket & Support & Recall & Spec. & Precision & F1 & NPV & OvR AUC \\ \midrule\endfirsthead

\toprule Bracket & Support & Recall & Spec. & Precision & F1 & NPV & OvR AUC \\ \midrule\endhead

 \texttt{$<$1} & 2709 & 0.992 & 0.997 & 0.994 & 0.993 & 0.996 & 1.000 \\

 \texttt{1-1.5} & 1541 & 0.970 & 0.993 & 0.972 & 0.971 & 0.993 & 0.999 \\

 \texttt{1.5-2} & 1330 & 0.963 & 0.990 & 0.952 & 0.958 & 0.993 & 0.999 \\

 \texttt{2-2.5} & 754 & 0.927 & 0.993 & 0.928 & 0.928 & 0.992 & 0.999 \\

 \texttt{2.5-3} & 506 & 0.923 & 0.994 & 0.912 & 0.917 & 0.995 & 0.999 \\

 \texttt{3-3.5} & 329 & 0.897 & 0.994 & 0.873 & 0.885 & 0.996 & 0.999 \\

 \texttt{3.5-4} & 219 & 0.849 & 0.996 & 0.842 & 0.845 & 0.996 & 0.998 \\

 \texttt{4-4.5} & 142 & 0.796 & 0.995 & 0.753 & 0.774 & 0.996 & 0.997 \\

 \texttt{4.5-5} & 114 & 0.640 & 0.995 & 0.652 & 0.646 & 0.995 & 0.996 \\

 \texttt{5-5.5} & 82 & 0.549 & 0.995 & 0.523 & 0.536 & 0.995 & 0.996 \\

 \texttt{5.5-6} & 53 & 0.396 & 0.995 & 0.356 & 0.375 & 0.996 & 0.985 \\

 \texttt{6-6.5} & 28 & 0.000 & 0.999 & 0.000 & 0.000 & 0.996 & 0.994 \\

 \texttt{6.5-7} & 26 & 0.154 & 0.999 & 0.333 & 0.211 & 0.997 & 0.973 \\

 \texttt{7-7.5} & 27 & 0.444 & 0.997 & 0.333 & 0.381 & 0.998 & 0.995 \\

 \texttt{7.5-8} & 25 & 0.440 & 0.997 & 0.355 & 0.393 & 0.998 & 0.996 \\

 \texttt{8-8.5} & 19 & 0.105 & 0.999 & 0.167 & 0.129 & 0.998 & 0.967 \\

 \texttt{8.5-9} & 15 & 0.000 & 0.999 & 0.000 & 0.000 & 0.998 & 0.958 \\

 \texttt{9-9.5} & 9 & 0.000 & 1.000 & -- & 0.000 & 0.999 & 0.934 \\

 \texttt{9.5-10} & 13 & 0.000 & 1.000 & 0.000 & 0.000 & 0.998 & 0.994 \\

 \texttt{$>$10} & 59 & 1.000 & 0.998 & 0.756 & 0.861 & 1.000 & 1.000 \\

\bottomrule\end{longtable}\endgroup

\paragraph{Ordered-threshold feature map and equations.}

All $12$ fitted predictors are standardized using $z_j=(\phi_j(\boldsymbol\theta)-\bar\phi_j)/s_j$, where the values below use the complete training sample for the printed (full-data) equations. Held-out predictions separately recompute means, standard deviations, and fits within each training fold. Here $\phi_j$ is the feature shown in the table. The score for every estimable cutoff is $q_c=\operatorname{logit}^{-1}(\eta_c)$; positive scores increase the fitted exceedance probability, not the numerical reproduction number.

\begingroup\scriptsize\setlength{\LTcapwidth}{\linewidth}

\begin{longtable}{@{}r p{0.44\linewidth} r r@{}}

\caption{Standardization constants and ordered features for Open SIR (compact log-cubic basis).}\label{tab:supp-features-open-sir-fine}\\

\toprule Index & Raw feature $\phi_j$ & Training mean & Training SD \\ \midrule\endfirsthead

\toprule Index & Raw feature $\phi_j$ & Training mean & Training SD \\ \midrule\endhead

 1 & $\log(\beta)$ & -0.679502 & 0.736099 \\

 2 & $\log(\gamma)$ & -1.353160 & 0.709139 \\

 3 & $\log(\mu)$ & -2.306458 & 0.825462 \\

 4 & $[\log(\beta)]^2$ & 1.003497 & 1.683385 \\

 5 & $[\log(\gamma)]^2$ & 2.333856 & 2.481677 \\

 6 & $[\log(\mu)]^2$ & 6.001050 & 4.735152 \\

 7 & $\log(\beta)\,\log(\gamma)$ & 0.920092 & 1.230820 \\

 8 & $\log(\beta)\,\log(\mu)$ & 1.558951 & 1.860880 \\

 9 & $\log(\gamma)\,\log(\mu)$ & 3.118139 & 2.066217 \\

 10 & $[\log(\beta)]^3$ & -1.835254 & 4.180755 \\

 11 & $[\log(\gamma)]^3$ & -4.872630 & 7.705844 \\

 12 & $[\log(\mu)]^3$ & -17.686315 & 22.716890 \\

\bottomrule\end{longtable}\endgroup

Each displayed score below is a numeric equation in the standardized variables $z_1,\ldots,z_{12}$. The intercept and slopes are posterior modes from the independent $N(0,2^2)$ coefficient prior under the Gaussian Laplace approximation. A cutoff with only one observed class cannot produce a fitted finite logistic score, so its constant exceedance probability is stated explicitly.

\noindent\textbf{Cutoff $R_0>1$.}
\[\begin{aligned}
\eta_{R_0>1} &= 7.46826 \\
&\quad + 14.67004z_{1} - 4.80754z_{2} - 3.50666z_{3} \\
&\quad - 3.78021z_{4} - 1.46874z_{5} - 0.97335z_{6} \\
&\quad + 0.31052z_{7} - 6.55771z_{8} + 16.28174z_{9} \\
&\quad - 2.39530z_{10} + 3.38176z_{11} + 2.39640z_{12}
\end{aligned}\]

\noindent\textbf{Cutoff $R_0>1.5$.}
\[\begin{aligned}
\eta_{R_0>1.5} &= -2.52750 \\
&\quad + 15.28168z_{1} - 6.28693z_{2} - 3.60071z_{3} \\
&\quad + 2.09571z_{4} - 1.10551z_{5} - 2.25299z_{6} \\
&\quad - 3.97028z_{7} - 9.65250z_{8} + 19.33191z_{9} \\
&\quad - 3.43284z_{10} + 3.68635z_{11} + 2.44277z_{12}
\end{aligned}\]

\noindent\textbf{Cutoff $R_0>2$.}
\[\begin{aligned}
\eta_{R_0>2} &= -9.78242 \\
&\quad + 11.24018z_{1} - 5.01661z_{2} - 2.02774z_{3} \\
&\quad + 4.76890z_{4} - 1.65946z_{5} - 2.06027z_{6} \\
&\quad - 5.65853z_{7} - 9.54989z_{8} + 18.16669z_{9} \\
&\quad - 3.56081z_{10} + 2.49063z_{11} + 2.34364z_{12}
\end{aligned}\]

\noindent\textbf{Cutoff $R_0>2.5$.}
\[\begin{aligned}
\eta_{R_0>2.5} &= -12.74869 \\
&\quad + 8.04639z_{1} - 2.63923z_{2} - 1.96278z_{3} \\
&\quad + 5.89073z_{4} - 0.70879z_{5} - 2.97547z_{6} \\
&\quad - 6.14672z_{7} - 8.65213z_{8} + 16.21592z_{9} \\
&\quad - 3.94629z_{10} + 1.74921z_{11} + 2.04892z_{12}
\end{aligned}\]

\noindent\textbf{Cutoff $R_0>3$.}
\[\begin{aligned}
\eta_{R_0>3} &= -14.34196 \\
&\quad + 4.52972z_{1} - 1.94366z_{2} - 1.59386z_{3} \\
&\quad + 4.49552z_{4} - 0.36577z_{5} - 3.08799z_{6} \\
&\quad - 6.22340z_{7} - 7.64633z_{8} + 14.71490z_{9} \\
&\quad - 4.26327z_{10} + 2.07159z_{11} + 2.03672z_{12}
\end{aligned}\]

\noindent\textbf{Cutoff $R_0>3.5$.}
\[\begin{aligned}
\eta_{R_0>3.5} &= -14.73691 \\
&\quad + 3.21318z_{1} - 1.19829z_{2} - 1.34703z_{3} \\
&\quad + 5.05645z_{4} + 0.28315z_{5} - 2.01857z_{6} \\
&\quad - 5.64727z_{7} - 6.84405z_{8} + 12.22451z_{9} \\
&\quad - 4.16379z_{10} + 1.61093z_{11} + 2.09733z_{12}
\end{aligned}\]

\noindent\textbf{Cutoff $R_0>4$.}
\[\begin{aligned}
\eta_{R_0>4} &= -14.94549 \\
&\quad + 1.22892z_{1} - 0.52606z_{2} - 1.22069z_{3} \\
&\quad + 4.15366z_{4} + 0.59584z_{5} - 2.49942z_{6} \\
&\quad - 5.17052z_{7} - 6.75139z_{8} + 11.19494z_{9} \\
&\quad - 3.43573z_{10} + 1.87841z_{11} + 2.06700z_{12}
\end{aligned}\]

\noindent\textbf{Cutoff $R_0>4.5$.}
\[\begin{aligned}
\eta_{R_0>4.5} &= -14.42461 \\
&\quad + 1.60640z_{1} + 0.73057z_{2} - 0.28356z_{3} \\
&\quad + 4.19910z_{4} + 0.08508z_{5} - 1.95247z_{6} \\
&\quad - 5.53635z_{7} - 4.16867z_{8} + 10.77750z_{9} \\
&\quad - 2.67829z_{10} + 0.89056z_{11} + 1.90623z_{12}
\end{aligned}\]

\noindent\textbf{Cutoff $R_0>5$.}
\[\begin{aligned}
\eta_{R_0>5} &= -13.98378 \\
&\quad + 0.04705z_{1} + 0.80433z_{2} - 0.90402z_{3} \\
&\quad + 3.11016z_{4} + 0.31009z_{5} - 1.61285z_{6} \\
&\quad - 5.47285z_{7} - 3.57261z_{8} + 9.41207z_{9} \\
&\quad - 2.30124z_{10} + 0.91419z_{11} + 2.45694z_{12}
\end{aligned}\]

\noindent\textbf{Cutoff $R_0>5.5$.}
\[\begin{aligned}
\eta_{R_0>5.5} &= -13.40710 \\
&\quad + 0.11582z_{1} + 1.65024z_{2} - 0.77982z_{3} \\
&\quad + 3.03383z_{4} - 0.05217z_{5} - 1.45853z_{6} \\
&\quad - 4.49549z_{7} - 3.82273z_{8} + 8.59869z_{9} \\
&\quad - 1.99542z_{10} + 0.25234z_{11} + 2.38597z_{12}
\end{aligned}\]

\noindent\textbf{Cutoff $R_0>6$.}
\[\begin{aligned}
\eta_{R_0>6} &= -12.68590 \\
&\quad + 0.07772z_{1} + 2.28098z_{2} - 0.26962z_{3} \\
&\quad + 2.70355z_{4} - 0.28378z_{5} - 1.33522z_{6} \\
&\quad - 4.27413z_{7} - 2.89790z_{8} + 7.99950z_{9} \\
&\quad - 1.58269z_{10} - 0.35017z_{11} + 2.02158z_{12}
\end{aligned}\]

\noindent\textbf{Cutoff $R_0>6.5$.}
\[\begin{aligned}
\eta_{R_0>6.5} &= -12.71323 \\
&\quad - 0.67868z_{1} + 2.06687z_{2} - 0.34264z_{3} \\
&\quad + 2.44411z_{4} + 0.12839z_{5} - 1.05588z_{6} \\
&\quad - 4.80824z_{7} - 2.60966z_{8} + 7.14850z_{9} \\
&\quad - 1.35705z_{10} - 0.23081z_{11} + 1.82279z_{12}
\end{aligned}\]

\noindent\textbf{Cutoff $R_0>7$.}
\[\begin{aligned}
\eta_{R_0>7} &= -12.25940 \\
&\quad - 0.41377z_{1} + 2.44814z_{2} - 0.10028z_{3} \\
&\quad + 2.45227z_{4} - 0.02936z_{5} - 0.86004z_{6} \\
&\quad - 4.48809z_{7} - 2.19277z_{8} + 6.78644z_{9} \\
&\quad - 1.11175z_{10} - 0.55255z_{11} + 1.72479z_{12}
\end{aligned}\]

\noindent\textbf{Cutoff $R_0>7.5$.}
\[\begin{aligned}
\eta_{R_0>7.5} &= -11.77929 \\
&\quad - 1.87693z_{1} + 2.05054z_{2} + 0.15997z_{3} \\
&\quad + 1.33839z_{4} + 0.17288z_{5} - 0.90844z_{6} \\
&\quad - 4.28131z_{7} - 2.34163z_{8} + 6.41912z_{9} \\
&\quad - 1.03298z_{10} - 0.02127z_{11} + 1.38512z_{12}
\end{aligned}\]

\noindent\textbf{Cutoff $R_0>8$.}
\[\begin{aligned}
\eta_{R_0>8} &= -11.26817 \\
&\quad - 1.31291z_{1} + 2.57897z_{2} + 0.10973z_{3} \\
&\quad + 1.64383z_{4} - 0.16602z_{5} - 0.87938z_{6} \\
&\quad - 3.92476z_{7} - 2.09698z_{8} + 6.08289z_{9} \\
&\quad - 0.74022z_{10} - 0.56576z_{11} + 1.52204z_{12}
\end{aligned}\]

\noindent\textbf{Cutoff $R_0>8.5$.}
\[\begin{aligned}
\eta_{R_0>8.5} &= -10.92068 \\
&\quad - 0.96101z_{1} + 2.77359z_{2} + 0.35495z_{3} \\
&\quad + 1.67946z_{4} - 0.29521z_{5} - 0.72469z_{6} \\
&\quad - 3.95367z_{7} - 1.52932z_{8} + 5.70355z_{9} \\
&\quad - 0.56698z_{10} - 0.80184z_{11} + 1.29713z_{12}
\end{aligned}\]

\noindent\textbf{Cutoff $R_0>9$.}
\[\begin{aligned}
\eta_{R_0>9} &= -10.75283 \\
&\quad - 1.05759z_{1} + 2.78763z_{2} + 0.53779z_{3} \\
&\quad + 1.51999z_{4} - 0.30603z_{5} - 0.55562z_{6} \\
&\quad - 3.91958z_{7} - 1.36218z_{8} + 5.34927z_{9} \\
&\quad - 0.49691z_{10} - 0.84903z_{11} + 1.05479z_{12}
\end{aligned}\]

\noindent\textbf{Cutoff $R_0>9.5$.}
\[\begin{aligned}
\eta_{R_0>9.5} &= -10.67688 \\
&\quad - 1.42355z_{1} + 3.02094z_{2} + 0.03277z_{3} \\
&\quad + 1.33326z_{4} - 0.43930z_{5} - 0.28050z_{6} \\
&\quad - 3.42222z_{7} - 2.08096z_{8} + 4.98203z_{9} \\
&\quad - 0.56787z_{10} - 1.19003z_{11} + 1.40116z_{12}
\end{aligned}\]

\noindent\textbf{Cutoff $R_0>10$.}
\[\begin{aligned}
\eta_{R_0>10} &= -10.42224 \\
&\quad - 1.06967z_{1} + 3.00898z_{2} + 0.14322z_{3} \\
&\quad + 1.54254z_{4} - 0.54871z_{5} - 0.30249z_{6} \\
&\quad - 3.10851z_{7} - 1.88171z_{8} + 4.92761z_{9} \\
&\quad - 0.38738z_{10} - 1.09448z_{11} + 1.44062z_{12}
\end{aligned}\]

\subsubsection{Open SEIR: validation and equations}

\begingroup\scriptsize\setlength{\LTcapwidth}{\linewidth}\setlength{\tabcolsep}{3pt}

\begin{longtable}{@{}lrrrrrrr@{}}

\caption{Class-specific held-out metrics for Open SEIR with log cubic $h=0.5$.}\label{tab:supp-class-log_cubic_h0p5-open-seir}\\

\toprule Bracket & Support & Recall & Spec. & Precision & F1 & NPV & OvR AUC \\ \midrule\endfirsthead

\toprule Bracket & Support & Recall & Spec. & Precision & F1 & NPV & OvR AUC \\ \midrule\endhead

 \texttt{$<$1} & 2764 & 0.991 & 0.996 & 0.993 & 0.992 & 0.995 & 1.000 \\

 \texttt{1-1.5} & 1412 & 0.969 & 0.992 & 0.963 & 0.966 & 0.993 & 0.999 \\

 \texttt{1.5-2} & 1132 & 0.956 & 0.993 & 0.957 & 0.956 & 0.993 & 0.999 \\

 \texttt{2-2.5} & 771 & 0.944 & 0.992 & 0.924 & 0.934 & 0.994 & 0.998 \\

 \texttt{2.5-3} & 503 & 0.889 & 0.993 & 0.898 & 0.893 & 0.993 & 0.998 \\

 \texttt{3-3.5} & 341 & 0.871 & 0.993 & 0.856 & 0.863 & 0.994 & 0.995 \\

 \texttt{3.5-4} & 246 & 0.841 & 0.994 & 0.812 & 0.826 & 0.995 & 0.997 \\

 \texttt{4-4.5} & 176 & 0.750 & 0.995 & 0.754 & 0.752 & 0.994 & 0.996 \\

 \texttt{4.5-5} & 148 & 0.723 & 0.994 & 0.699 & 0.711 & 0.995 & 0.995 \\

 \texttt{5-5.5} & 98 & 0.531 & 0.995 & 0.578 & 0.553 & 0.994 & 0.981 \\

 \texttt{5.5-6} & 72 & 0.486 & 0.995 & 0.486 & 0.486 & 0.995 & 0.995 \\

 \texttt{6-6.5} & 59 & 0.390 & 0.995 & 0.371 & 0.380 & 0.995 & 0.972 \\

 \texttt{6.5-7} & 46 & 0.217 & 0.996 & 0.227 & 0.222 & 0.995 & 0.981 \\

 \texttt{7-7.5} & 39 & 0.231 & 0.996 & 0.237 & 0.234 & 0.996 & 0.991 \\

 \texttt{7.5-8} & 27 & 0.148 & 0.998 & 0.200 & 0.170 & 0.997 & 0.974 \\

 \texttt{8-8.5} & 19 & 0.000 & 0.999 & 0.000 & 0.000 & 0.998 & 0.994 \\

 \texttt{8.5-9} & 19 & 0.105 & 0.999 & 0.200 & 0.138 & 0.998 & 0.996 \\

 \texttt{9-9.5} & 14 & 0.286 & 1.000 & 0.800 & 0.421 & 0.999 & 0.996 \\

 \texttt{9.5-10} & 21 & 0.095 & 0.999 & 0.250 & 0.138 & 0.998 & 0.996 \\

 \texttt{$>$10} & 93 & 0.978 & 0.996 & 0.765 & 0.858 & 1.000 & 1.000 \\

\bottomrule\end{longtable}\endgroup

\paragraph{Ordered-threshold feature map and equations.}

All $18$ fitted predictors are standardized using $z_j=(\phi_j(\boldsymbol\theta)-\bar\phi_j)/s_j$, where the values below use the complete training sample for the printed (full-data) equations. Held-out predictions separately recompute means, standard deviations, and fits within each training fold. Here $\phi_j$ is the feature shown in the table. The score for every estimable cutoff is $q_c=\operatorname{logit}^{-1}(\eta_c)$; positive scores increase the fitted exceedance probability, not the numerical reproduction number.

\begingroup\scriptsize\setlength{\LTcapwidth}{\linewidth}

\begin{longtable}{@{}r p{0.44\linewidth} r r@{}}

\caption{Standardization constants and ordered features for Open SEIR (compact log-cubic basis).}\label{tab:supp-features-open-seir-fine}\\

\toprule Index & Raw feature $\phi_j$ & Training mean & Training SD \\ \midrule\endfirsthead

\toprule Index & Raw feature $\phi_j$ & Training mean & Training SD \\ \midrule\endhead

 1 & $\log(\beta)$ & -0.477253 & 0.766434 \\

 2 & $\log(\sigma)$ & -0.842332 & 0.709136 \\

 3 & $\log(\gamma)$ & -1.353154 & 0.709127 \\

 4 & $\log(\mu)$ & -2.514853 & 0.801389 \\

 5 & $[\log(\beta)]^2$ & 0.815118 & 1.543997 \\

 6 & $[\log(\sigma)]^2$ & 1.212334 & 1.783176 \\

 7 & $[\log(\gamma)]^2$ & 2.333824 & 2.481591 \\

 8 & $[\log(\mu)]^2$ & 6.966631 & 4.857262 \\

 9 & $\log(\beta)\,\log(\sigma)$ & 0.407571 & 0.934930 \\

 10 & $\log(\beta)\,\log(\gamma)$ & 0.651759 & 1.221282 \\

 11 & $\log(\beta)\,\log(\mu)$ & 1.194157 & 2.058436 \\

 12 & $\log(\sigma)\,\log(\gamma)$ & 1.143001 & 1.240328 \\

 13 & $\log(\sigma)\,\log(\mu)$ & 2.116874 & 1.992866 \\

 14 & $\log(\gamma)\,\log(\mu)$ & 3.411148 & 2.192436 \\

 15 & $[\log(\beta)]^3$ & -1.449334 & 3.799273 \\

 16 & $[\log(\sigma)]^3$ & -2.222014 & 4.505534 \\

 17 & $[\log(\gamma)]^3$ & -4.872489 & 7.705370 \\

 18 & $[\log(\mu)]^3$ & -21.362065 & 24.127790 \\

\bottomrule\end{longtable}\endgroup

Each displayed score below is a numeric equation in the standardized variables $z_1,\ldots,z_18$. The intercept and slopes are posterior modes from the independent $N(0,2^2)$ coefficient prior under the Gaussian Laplace approximation. A cutoff with only one observed class cannot produce a fitted finite logistic score, so its constant exceedance probability is stated explicitly.

\noindent\textbf{Cutoff $R_0>1$.}
\[\begin{aligned}
\eta_{R_0>1} &= 7.61537 \\
&\quad + 13.15205z_{1} + 4.99792z_{2} - 4.81975z_{3} \\
&\quad - 6.14029z_{4} - 3.31089z_{5} - 3.53457z_{6} \\
&\quad - 1.53577z_{7} - 0.40785z_{8} + 0.32218z_{9} \\
&\quad + 1.30968z_{10} - 8.53346z_{11} - 0.92676z_{12} \\
&\quad + 4.91187z_{13} + 15.47156z_{14} - 1.71192z_{15} \\
&\quad + 0.11273z_{16} + 3.10106z_{17} + 4.50490z_{18}
\end{aligned}\]

\noindent\textbf{Cutoff $R_0>1.5$.}
\[\begin{aligned}
\eta_{R_0>1.5} &= -1.48117 \\
&\quad + 13.83850z_{1} + 5.12702z_{2} - 5.13761z_{3} \\
&\quad - 5.27807z_{4} + 1.73180z_{5} - 1.48394z_{6} \\
&\quad - 1.06686z_{7} - 0.73552z_{8} + 2.77381z_{9} \\
&\quad - 2.32016z_{10} - 11.16631z_{11} - 2.49469z_{12} \\
&\quad + 4.36476z_{13} + 18.50167z_{14} - 1.70164z_{15} \\
&\quad + 0.96761z_{16} + 2.91866z_{17} + 4.77531z_{18}
\end{aligned}\]

\noindent\textbf{Cutoff $R_0>2$.}
\[\begin{aligned}
\eta_{R_0>2} &= -8.46123 \\
&\quad + 10.82044z_{1} + 3.83327z_{2} - 4.44026z_{3} \\
&\quad - 3.85080z_{4} + 4.58944z_{5} - 2.26666z_{6} \\
&\quad - 1.51424z_{7} - 1.84726z_{8} + 2.21702z_{9} \\
&\quad - 3.60000z_{10} - 11.55027z_{11} - 2.05197z_{12} \\
&\quad + 3.52457z_{13} + 17.95519z_{14} - 1.53413z_{15} \\
&\quad + 0.27559z_{16} + 2.67192z_{17} + 3.08315z_{18}
\end{aligned}\]

\noindent\textbf{Cutoff $R_0>2.5$.}
\[\begin{aligned}
\eta_{R_0>2.5} &= -12.16089 \\
&\quad + 7.11956z_{1} + 2.40991z_{2} - 3.76540z_{3} \\
&\quad - 3.08862z_{4} + 4.98158z_{5} - 2.40321z_{6} \\
&\quad - 0.89862z_{7} - 1.94677z_{8} + 2.21725z_{9} \\
&\quad - 3.80678z_{10} - 11.37089z_{11} - 1.90173z_{12} \\
&\quad + 2.88223z_{13} + 16.14232z_{14} - 1.56863z_{15} \\
&\quad + 0.12823z_{16} + 2.65558z_{17} + 3.02464z_{18}
\end{aligned}\]

\noindent\textbf{Cutoff $R_0>3$.}
\[\begin{aligned}
\eta_{R_0>3} &= -13.23081 \\
&\quad + 4.44267z_{1} + 1.22538z_{2} - 3.24243z_{3} \\
&\quad - 2.50751z_{4} + 4.36870z_{5} - 1.97247z_{6} \\
&\quad - 0.70961z_{7} - 2.11416z_{8} + 2.30879z_{9} \\
&\quad - 4.47886z_{10} - 10.27910z_{11} - 2.53294z_{12} \\
&\quad + 2.32673z_{13} + 14.50061z_{14} - 3.81251z_{15} \\
&\quad - 0.33482z_{16} + 2.60314z_{17} + 2.82373z_{18}
\end{aligned}\]

\noindent\textbf{Cutoff $R_0>3.5$.}
\[\begin{aligned}
\eta_{R_0>3.5} &= -14.26481 \\
&\quad + 3.45941z_{1} + 1.09227z_{2} - 1.43812z_{3} \\
&\quad - 1.90854z_{4} + 5.09495z_{5} - 1.48148z_{6} \\
&\quad - 0.73096z_{7} - 2.23212z_{8} + 2.14139z_{9} \\
&\quad - 4.39933z_{10} - 8.85713z_{11} - 1.77001z_{12} \\
&\quad + 2.18923z_{13} + 13.94741z_{14} - 2.75384z_{15} \\
&\quad + 0.69227z_{16} + 1.76795z_{17} + 2.81623z_{18}
\end{aligned}\]

\noindent\textbf{Cutoff $R_0>4$.}
\[\begin{aligned}
\eta_{R_0>4} &= -14.90892 \\
&\quad + 2.52036z_{1} + 0.91588z_{2} - 0.82768z_{3} \\
&\quad - 1.86895z_{4} + 4.21874z_{5} - 0.60394z_{6} \\
&\quad - 0.07956z_{7} - 2.19795z_{8} + 2.47539z_{9} \\
&\quad - 4.60225z_{10} - 7.93921z_{11} - 1.78932z_{12} \\
&\quad + 1.80030z_{13} + 12.51278z_{14} - 3.10742z_{15} \\
&\quad + 0.73656z_{16} + 1.63208z_{17} + 2.64583z_{18}
\end{aligned}\]

\noindent\textbf{Cutoff $R_0>4.5$.}
\[\begin{aligned}
\eta_{R_0>4.5} &= -14.97941 \\
&\quad + 1.10993z_{1} + 1.11319z_{2} - 0.48294z_{3} \\
&\quad - 1.64215z_{4} + 4.30148z_{5} - 0.51106z_{6} \\
&\quad + 0.21925z_{7} - 2.35551z_{8} + 1.56731z_{9} \\
&\quad - 4.01830z_{10} - 7.74196z_{11} - 1.61890z_{12} \\
&\quad + 1.92810z_{13} + 11.33100z_{14} - 2.63897z_{15} \\
&\quad + 0.63610z_{16} + 1.67303z_{17} + 2.35924z_{18}
\end{aligned}\]

\noindent\textbf{Cutoff $R_0>5$.}
\[\begin{aligned}
\eta_{R_0>5} &= -14.92442 \\
&\quad + 0.99574z_{1} + 0.37622z_{2} + 0.27043z_{3} \\
&\quad - 1.77643z_{4} + 3.88977z_{5} - 0.14813z_{6} \\
&\quad + 0.29164z_{7} - 1.56812z_{8} + 1.53629z_{9} \\
&\quad - 4.31594z_{10} - 5.80799z_{11} - 1.42636z_{12} \\
&\quad + 1.24860z_{13} + 10.54033z_{14} - 2.25434z_{15} \\
&\quad + 0.86253z_{16} + 1.22357z_{17} + 3.14667z_{18}
\end{aligned}\]

\noindent\textbf{Cutoff $R_0>5.5$.}
\[\begin{aligned}
\eta_{R_0>5.5} &= -14.64160 \\
&\quad + 0.61322z_{1} - 0.31178z_{2} + 0.94512z_{3} \\
&\quad - 1.34666z_{4} + 3.15830z_{5} - 0.25228z_{6} \\
&\quad + 0.30154z_{7} - 1.49016z_{8} + 1.66340z_{9} \\
&\quad - 4.04039z_{10} - 5.29195z_{11} - 1.04728z_{12} \\
&\quad + 0.33847z_{13} + 9.90356z_{14} - 2.03526z_{15} \\
&\quad + 0.72028z_{16} + 0.96043z_{17} + 2.71722z_{18}
\end{aligned}\]

\noindent\textbf{Cutoff $R_0>6$.}
\[\begin{aligned}
\eta_{R_0>6} &= -14.14785 \\
&\quad + 0.42771z_{1} + 0.22700z_{2} + 1.36384z_{3} \\
&\quad - 0.70714z_{4} + 3.33586z_{5} - 0.30747z_{6} \\
&\quad + 0.07500z_{7} - 1.47089z_{8} + 1.29237z_{9} \\
&\quad - 3.93111z_{10} - 4.55304z_{11} - 1.33819z_{12} \\
&\quad + 1.01134z_{13} + 9.34741z_{14} - 1.64972z_{15} \\
&\quad + 0.24918z_{16} + 0.53418z_{17} + 2.09691z_{18}
\end{aligned}\]

\noindent\textbf{Cutoff $R_0>6.5$.}
\[\begin{aligned}
\eta_{R_0>6.5} &= -13.78885 \\
&\quad - 0.15942z_{1} + 0.01942z_{2} + 1.86441z_{3} \\
&\quad - 0.48799z_{4} + 2.20282z_{5} - 0.47425z_{6} \\
&\quad + 0.07331z_{7} - 1.48487z_{8} + 1.77983z_{9} \\
&\quad - 3.96485z_{10} - 4.54298z_{11} - 1.15372z_{12} \\
&\quad + 0.92337z_{13} + 8.57049z_{14} - 1.55815z_{15} \\
&\quad + 0.19134z_{16} + 0.15678z_{17} + 1.78066z_{18}
\end{aligned}\]

\noindent\textbf{Cutoff $R_0>7$.}
\[\begin{aligned}
\eta_{R_0>7} &= -13.34311 \\
&\quad - 0.57258z_{1} + 0.07888z_{2} + 1.83517z_{3} \\
&\quad - 0.69168z_{4} + 2.29410z_{5} + 0.02717z_{6} \\
&\quad + 0.12507z_{7} - 1.04996z_{8} + 1.06597z_{9} \\
&\quad - 3.85897z_{10} - 3.91541z_{11} - 1.16081z_{12} \\
&\quad + 0.90859z_{13} + 7.83067z_{14} - 1.51414z_{15} \\
&\quad + 0.41813z_{16} + 0.09742z_{17} + 2.16759z_{18}
\end{aligned}\]

\noindent\textbf{Cutoff $R_0>7.5$.}
\[\begin{aligned}
\eta_{R_0>7.5} &= -13.00011 \\
&\quad - 0.52544z_{1} - 0.17429z_{2} + 2.04516z_{3} \\
&\quad - 0.39999z_{4} + 2.28071z_{5} - 0.51586z_{6} \\
&\quad - 0.01578z_{7} - 0.97979z_{8} + 1.02648z_{9} \\
&\quad - 4.13945z_{10} - 3.09698z_{11} - 0.97057z_{12} \\
&\quad + 1.01137z_{13} + 7.54307z_{14} - 1.21487z_{15} \\
&\quad + 0.58735z_{16} + 0.12749z_{17} + 2.02863z_{18}
\end{aligned}\]

\noindent\textbf{Cutoff $R_0>8$.}
\[\begin{aligned}
\eta_{R_0>8} &= -12.45383 \\
&\quad - 1.32431z_{1} - 0.52920z_{2} + 2.35454z_{3} \\
&\quad - 0.27695z_{4} + 1.69794z_{5} - 0.93394z_{6} \\
&\quad - 0.07072z_{7} - 1.06200z_{8} + 0.82849z_{9} \\
&\quad - 3.97040z_{10} - 3.15638z_{11} - 1.08943z_{12} \\
&\quad + 0.77209z_{13} + 7.29516z_{14} - 1.10227z_{15} \\
&\quad + 0.04182z_{16} - 0.08497z_{17} + 1.87452z_{18}
\end{aligned}\]

\noindent\textbf{Cutoff $R_0>8.5$.}
\[\begin{aligned}
\eta_{R_0>8.5} &= -12.15751 \\
&\quad - 1.49561z_{1} - 0.40847z_{2} + 2.68150z_{3} \\
&\quad - 0.19105z_{4} + 1.42247z_{5} - 0.64972z_{6} \\
&\quad - 0.15216z_{7} - 0.87716z_{8} + 0.89753z_{9} \\
&\quad - 3.94699z_{10} - 2.79257z_{11} - 1.38331z_{12} \\
&\quad + 1.07115z_{13} + 7.07574z_{14} - 0.87976z_{15} \\
&\quad + 0.03836z_{16} - 0.44522z_{17} + 1.91429z_{18}
\end{aligned}\]

\noindent\textbf{Cutoff $R_0>9$.}
\[\begin{aligned}
\eta_{R_0>9} &= -11.87523 \\
&\quad - 2.27237z_{1} - 0.69523z_{2} + 2.79985z_{3} \\
&\quad - 0.68522z_{4} + 1.04550z_{5} - 0.34040z_{6} \\
&\quad - 0.05267z_{7} - 0.61450z_{8} + 0.30564z_{9} \\
&\quad - 3.20408z_{10} - 3.35207z_{11} - 1.30507z_{12} \\
&\quad + 0.85043z_{13} + 6.49329z_{14} - 1.07379z_{15} \\
&\quad + 0.18961z_{16} - 0.68091z_{17} + 2.25835z_{18}
\end{aligned}\]

\noindent\textbf{Cutoff $R_0>9.5$.}
\[\begin{aligned}
\eta_{R_0>9.5} &= -11.68178 \\
&\quad - 1.46147z_{1} - 0.31885z_{2} + 2.72866z_{3} \\
&\quad - 0.43095z_{4} + 1.64504z_{5} + 0.10428z_{6} \\
&\quad - 0.15411z_{7} - 0.52750z_{8} + 0.50823z_{9} \\
&\quad - 3.23699z_{10} - 2.99536z_{11} - 1.03130z_{12} \\
&\quad + 0.73267z_{13} + 6.09032z_{14} - 0.77215z_{15} \\
&\quad + 0.31167z_{16} - 0.69654z_{17} + 1.96351z_{18}
\end{aligned}\]

\noindent\textbf{Cutoff $R_0>10$.}
\[\begin{aligned}
\eta_{R_0>10} &= -11.17800 \\
&\quad - 1.06124z_{1} - 0.73008z_{2} + 2.87753z_{3} \\
&\quad - 0.20564z_{4} + 2.06172z_{5} + 0.23030z_{6} \\
&\quad - 0.45098z_{7} - 0.55088z_{8} - 0.01853z_{9} \\
&\quad - 3.46461z_{10} - 2.08935z_{11} - 0.59849z_{12} \\
&\quad - 0.13861z_{13} + 5.97095z_{14} - 0.43742z_{15} \\
&\quad + 0.39063z_{16} - 0.65807z_{17} + 1.73046z_{18}
\end{aligned}\]

\subsubsection{SEIR with vaccination: validation and equations}

\begingroup\scriptsize\setlength{\LTcapwidth}{\linewidth}\setlength{\tabcolsep}{3pt}

\begin{longtable}{@{}lrrrrrrr@{}}

\caption{Class-specific held-out metrics for SEIR with vaccination with log cubic $h=0.5$.}\label{tab:supp-class-log_cubic_h0p5-seir-with-vaccination}\\

\toprule Bracket & Support & Recall & Spec. & Precision & F1 & NPV & OvR AUC \\ \midrule\endfirsthead

\toprule Bracket & Support & Recall & Spec. & Precision & F1 & NPV & OvR AUC \\ \midrule\endhead

 \texttt{$<$1} & 2221 & 0.985 & 0.995 & 0.987 & 0.986 & 0.994 & 1.000 \\

 \texttt{1-1.5} & 1108 & 0.940 & 0.990 & 0.939 & 0.940 & 0.990 & 0.997 \\

 \texttt{1.5-2} & 918 & 0.910 & 0.989 & 0.917 & 0.913 & 0.988 & 0.995 \\

 \texttt{2-2.5} & 833 & 0.897 & 0.987 & 0.887 & 0.892 & 0.988 & 0.993 \\

 \texttt{2.5-3} & 674 & 0.878 & 0.986 & 0.853 & 0.865 & 0.989 & 0.993 \\

 \texttt{3-3.5} & 527 & 0.822 & 0.989 & 0.836 & 0.829 & 0.987 & 0.992 \\

 \texttt{3.5-4} & 408 & 0.806 & 0.986 & 0.755 & 0.780 & 0.990 & 0.990 \\

 \texttt{4-4.5} & 288 & 0.660 & 0.989 & 0.683 & 0.671 & 0.987 & 0.989 \\

 \texttt{4.5-5} & 218 & 0.638 & 0.990 & 0.638 & 0.638 & 0.990 & 0.990 \\

 \texttt{5-5.5} & 170 & 0.576 & 0.990 & 0.551 & 0.563 & 0.991 & 0.989 \\

 \texttt{5.5-6} & 121 & 0.463 & 0.993 & 0.496 & 0.479 & 0.992 & 0.988 \\

 \texttt{6-6.5} & 107 & 0.495 & 0.992 & 0.453 & 0.473 & 0.993 & 0.984 \\

 \texttt{6.5-7} & 79 & 0.316 & 0.994 & 0.329 & 0.323 & 0.993 & 0.990 \\

 \texttt{7-7.5} & 61 & 0.164 & 0.996 & 0.227 & 0.190 & 0.994 & 0.979 \\

 \texttt{7.5-8} & 49 & 0.408 & 0.995 & 0.323 & 0.360 & 0.996 & 0.991 \\

 \texttt{8-8.5} & 33 & 0.091 & 0.998 & 0.158 & 0.115 & 0.996 & 0.955 \\

 \texttt{8.5-9} & 37 & 0.189 & 0.998 & 0.292 & 0.230 & 0.996 & 0.963 \\

 \texttt{9-9.5} & 21 & 0.000 & 1.000 & -- & 0.000 & 0.997 & 0.989 \\

 \texttt{9.5-10} & 16 & 0.000 & 1.000 & 0.000 & 0.000 & 0.998 & 0.956 \\

 \texttt{$>$10} & 111 & 0.982 & 0.996 & 0.768 & 0.862 & 1.000 & 1.000 \\

\bottomrule\end{longtable}\endgroup

\paragraph{Ordered-threshold feature map and equations.}

All $25$ fitted predictors are standardized using $z_j=(\phi_j(\boldsymbol\theta)-\bar\phi_j)/s_j$, where the values below use the complete training sample for the printed (full-data) equations. Held-out predictions separately recompute means, standard deviations, and fits within each training fold. Here $\phi_j$ is the feature shown in the table. The score for every estimable cutoff is $q_c=\operatorname{logit}^{-1}(\eta_c)$; positive scores increase the fitted exceedance probability, not the numerical reproduction number.

\begingroup\scriptsize\setlength{\LTcapwidth}{\linewidth}

\begin{longtable}{@{}r p{0.44\linewidth} r r@{}}

\caption{Standardization constants and ordered features for SEIR with vaccination (compact log-cubic basis).}\label{tab:supp-features-seir-with-vaccination-fine}\\

\toprule Index & Raw feature $\phi_j$ & Training mean & Training SD \\ \midrule\endfirsthead

\toprule Index & Raw feature $\phi_j$ & Training mean & Training SD \\ \midrule\endhead

 1 & $\log(\beta)$ & 0.480882 & 0.801387 \\

 2 & $\log(\sigma)$ & -0.842327 & 0.709129 \\

 3 & $\log(\gamma)$ & -1.353152 & 0.709124 \\

 4 & $\log(\mu)$ & -2.086690 & 0.766429 \\

 5 & $\log(v)$ & -2.863486 & 0.911286 \\

 6 & $[\log(\beta)]^2$ & 0.873388 & 0.776560 \\

 7 & $[\log(\sigma)]^2$ & 1.212316 & 1.783144 \\

 8 & $[\log(\gamma)]^2$ & 2.333814 & 2.481577 \\

 9 & $[\log(\mu)]^2$ & 4.941614 & 3.911104 \\

 10 & $[\log(v)]^2$ & 9.029888 & 6.574668 \\

 11 & $\log(\beta)\,\log(\sigma)$ & -0.401420 & 0.948040 \\

 12 & $\log(\beta)\,\log(\gamma)$ & -0.644841 & 1.281607 \\

 13 & $\log(\beta)\,\log(\mu)$ & -1.013175 & 1.802206 \\

 14 & $\log(\beta)\,\log(v)$ & -1.366084 & 2.461464 \\

 15 & $\log(\sigma)\,\log(\gamma)$ & 1.140857 & 1.228180 \\

 16 & $\log(\sigma)\,\log(\mu)$ & 1.765326 & 1.714641 \\

 17 & $\log(\sigma)\,\log(v)$ & 2.417957 & 2.275169 \\

 18 & $\log(\gamma)\,\log(\mu)$ & 2.821804 & 1.902196 \\

 19 & $\log(\gamma)\,\log(v)$ & 3.879660 & 2.483768 \\

 20 & $\log(\mu)\,\log(v)$ & 5.971571 & 2.975017 \\

 21 & $[\log(\beta)]^3$ & 0.425374 & 1.668421 \\

 22 & $[\log(\sigma)]^3$ & -2.221959 & 4.505420 \\

 23 & $[\log(\gamma)]^3$ & -4.872454 & 7.705320 \\

 24 & $[\log(\mu)]^3$ & -13.262471 & 16.642864 \\

 25 & $[\log(v)]^3$ & -31.750774 & 39.592829 \\

\bottomrule\end{longtable}\endgroup

Each displayed score below is a numeric equation in the standardized variables $z_1,\ldots,z_25$. The intercept and slopes are posterior modes from the independent $N(0,2^2)$ coefficient prior under the Gaussian Laplace approximation. A cutoff with only one observed class cannot produce a fitted finite logistic score, so its constant exceedance probability is stated explicitly.

\noindent\textbf{Cutoff $R_e(0)>1$.}
\[\begin{aligned}
\eta_{R_e(0)>1} &= 10.71780 \\
&\quad + 8.27007z_{1} + 3.23438z_{2} - 2.99121z_{3} \\
&\quad + 2.20903z_{4} - 0.23532z_{5} + 0.62097z_{6} \\
&\quad - 1.12286z_{7} - 1.54830z_{8} - 7.31001z_{9} \\
&\quad - 2.36950z_{10} + 1.71786z_{11} + 0.19775z_{12} \\
&\quad - 2.43881z_{13} - 7.02943z_{14} - 0.64259z_{15} \\
&\quad + 4.97721z_{16} - 3.65898z_{17} + 8.06506z_{18} \\
&\quad + 3.31982z_{19} + 13.63568z_{20} + 0.15899z_{21} \\
&\quad + 0.59597z_{22} + 2.44175z_{23} + 7.05001z_{24} \\
&\quad + 1.92850z_{25}
\end{aligned}\]

\noindent\textbf{Cutoff $R_e(0)>1.5$.}
\[\begin{aligned}
\eta_{R_e(0)>1.5} &= 4.20789 \\
&\quad + 6.78777z_{1} + 2.58165z_{2} - 3.30046z_{3} \\
&\quad + 1.53027z_{4} + 1.14170z_{5} + 2.23271z_{6} \\
&\quad - 1.94425z_{7} - 1.54491z_{8} - 9.51747z_{9} \\
&\quad - 2.99909z_{10} + 1.98699z_{11} - 2.09143z_{12} \\
&\quad - 1.66837z_{13} - 10.00431z_{14} - 1.23648z_{15} \\
&\quad + 5.67856z_{16} - 3.72290z_{17} + 9.04283z_{18} \\
&\quad + 4.21710z_{19} + 15.04248z_{20} - 1.12488z_{21} \\
&\quad + 0.81993z_{22} + 2.43517z_{23} + 7.55041z_{24} \\
&\quad + 1.87908z_{25}
\end{aligned}\]

\noindent\textbf{Cutoff $R_e(0)>2$.}
\[\begin{aligned}
\eta_{R_e(0)>2} &= -1.79598 \\
&\quad + 5.50834z_{1} + 1.09106z_{2} - 3.31485z_{3} \\
&\quad + 0.85678z_{4} + 0.51538z_{5} + 2.92191z_{6} \\
&\quad - 1.74342z_{7} - 1.82689z_{8} - 8.71827z_{9} \\
&\quad - 3.78374z_{10} + 4.19137z_{11} - 3.78227z_{12} \\
&\quad - 1.03897z_{13} - 10.03790z_{14} - 2.68514z_{15} \\
&\quad + 6.80886z_{16} - 3.98391z_{17} + 9.46555z_{18} \\
&\quad + 4.00849z_{19} + 13.95406z_{20} + 0.23842z_{21} \\
&\quad + 0.80549z_{22} + 1.92091z_{23} + 9.13927z_{24} \\
&\quad + 1.74018z_{25}
\end{aligned}\]

\noindent\textbf{Cutoff $R_e(0)>2.5$.}
\[\begin{aligned}
\eta_{R_e(0)>2.5} &= -7.24134 \\
&\quad + 4.44683z_{1} + 0.67843z_{2} - 4.02332z_{3} \\
&\quad + 1.51440z_{4} + 1.15642z_{5} + 3.13111z_{6} \\
&\quad - 1.75673z_{7} - 2.15666z_{8} - 9.93828z_{9} \\
&\quad - 4.20062z_{10} + 3.54137z_{11} - 2.96366z_{12} \\
&\quad - 1.57444z_{13} - 9.90381z_{14} - 2.13524z_{15} \\
&\quad + 5.88470z_{16} - 3.88674z_{17} + 9.45521z_{18} \\
&\quad + 3.58148z_{19} + 15.38229z_{20} + 0.16994z_{21} \\
&\quad + 0.65218z_{22} + 2.31014z_{23} + 7.94185z_{24} \\
&\quad + 2.02597z_{25}
\end{aligned}\]

\noindent\textbf{Cutoff $R_e(0)>3$.}
\[\begin{aligned}
\eta_{R_e(0)>3} &= -10.50063 \\
&\quad + 3.11718z_{1} + 0.85032z_{2} - 2.92088z_{3} \\
&\quad + 0.64544z_{4} + 1.94444z_{5} + 2.96463z_{6} \\
&\quad - 2.55969z_{7} - 2.95393z_{8} - 9.76937z_{9} \\
&\quad - 4.47532z_{10} + 3.38237z_{11} - 3.85867z_{12} \\
&\quad - 0.66077z_{13} - 9.13473z_{14} - 2.09864z_{15} \\
&\quad + 5.67992z_{16} - 2.80464z_{17} + 8.97262z_{18} \\
&\quad + 3.94164z_{19} + 14.17886z_{20} + 0.40452z_{21} \\
&\quad - 0.07853z_{22} + 1.31800z_{23} + 7.35065z_{24} \\
&\quad + 1.30299z_{25}
\end{aligned}\]

\noindent\textbf{Cutoff $R_e(0)>3.5$.}
\[\begin{aligned}
\eta_{R_e(0)>3.5} &= -12.86653 \\
&\quad + 1.33865z_{1} + 1.49820z_{2} - 3.25051z_{3} \\
&\quad + 0.81975z_{4} + 1.43944z_{5} + 2.62064z_{6} \\
&\quad - 1.46347z_{7} - 2.64202z_{8} - 8.78261z_{9} \\
&\quad - 4.10525z_{10} + 2.88560z_{11} - 3.49788z_{12} \\
&\quad - 1.26225z_{13} - 7.96236z_{14} - 1.65527z_{15} \\
&\quad + 5.47822z_{16} - 2.63260z_{17} + 8.00012z_{18} \\
&\quad + 3.11408z_{19} + 13.21129z_{20} + 0.98490z_{21} \\
&\quad + 0.59949z_{22} + 1.38234z_{23} + 7.29942z_{24} \\
&\quad + 1.65608z_{25}
\end{aligned}\]

\noindent\textbf{Cutoff $R_e(0)>4$.}
\[\begin{aligned}
\eta_{R_e(0)>4} &= -13.56163 \\
&\quad - 0.57925z_{1} - 0.00235z_{2} - 2.45469z_{3} \\
&\quad - 0.56393z_{4} + 0.63566z_{5} + 3.35199z_{6} \\
&\quad - 1.47644z_{7} - 2.30930z_{8} - 7.27011z_{9} \\
&\quad - 4.27895z_{10} + 3.12929z_{11} - 4.07937z_{12} \\
&\quad - 0.28275z_{13} - 7.79989z_{14} - 1.98697z_{15} \\
&\quad + 4.89278z_{16} - 2.67505z_{17} + 7.70286z_{18} \\
&\quad + 3.15197z_{19} + 10.76656z_{20} + 0.51692z_{21} \\
&\quad + 0.33156z_{22} + 1.59903z_{23} + 7.51196z_{24} \\
&\quad + 1.26239z_{25}
\end{aligned}\]

\noindent\textbf{Cutoff $R_e(0)>4.5$.}
\[\begin{aligned}
\eta_{R_e(0)>4.5} &= -14.74831 \\
&\quad - 0.28313z_{1} + 0.16381z_{2} - 2.14913z_{3} \\
&\quad + 0.37243z_{4} + 0.66451z_{5} + 3.31841z_{6} \\
&\quad - 0.31063z_{7} - 1.91621z_{8} - 6.78056z_{9} \\
&\quad - 3.64987z_{10} + 3.29400z_{11} - 3.70301z_{12} \\
&\quad - 1.21171z_{13} - 6.61800z_{14} - 1.76105z_{15} \\
&\quad + 4.04452z_{16} - 1.83185z_{17} + 7.72163z_{18} \\
&\quad + 1.99952z_{19} + 10.49991z_{20} - 0.21709z_{21} \\
&\quad + 0.86229z_{22} + 1.33218z_{23} + 7.09810z_{24} \\
&\quad + 1.25706z_{25}
\end{aligned}\]

\noindent\textbf{Cutoff $R_e(0)>5$.}
\[\begin{aligned}
\eta_{R_e(0)>5} &= -15.18745 \\
&\quad - 1.17129z_{1} + 0.49532z_{2} - 0.48506z_{3} \\
&\quad + 0.20160z_{4} + 0.40682z_{5} + 2.92941z_{6} \\
&\quad - 0.78759z_{7} - 1.25560z_{8} - 6.60782z_{9} \\
&\quad - 2.83902z_{10} + 2.68959z_{11} - 4.25283z_{12} \\
&\quad - 0.93107z_{13} - 5.14266z_{14} - 1.35728z_{15} \\
&\quad + 3.74240z_{16} - 1.56665z_{17} + 7.39151z_{18} \\
&\quad + 2.25238z_{19} + 10.04362z_{20} + 0.25919z_{21} \\
&\quad + 0.77258z_{22} + 1.32197z_{23} + 6.78869z_{24} \\
&\quad + 2.37159z_{25}
\end{aligned}\]

\noindent\textbf{Cutoff $R_e(0)>5.5$.}
\[\begin{aligned}
\eta_{R_e(0)>5.5} &= -15.27603 \\
&\quad - 1.59277z_{1} - 0.38502z_{2} - 0.73856z_{3} \\
&\quad - 0.44612z_{4} - 0.10936z_{5} + 2.77604z_{6} \\
&\quad - 0.78672z_{7} - 0.95595z_{8} - 5.37781z_{9} \\
&\quad - 2.91749z_{10} + 2.11621z_{11} - 3.73074z_{12} \\
&\quad - 0.28326z_{13} - 4.92944z_{14} - 1.48228z_{15} \\
&\quad + 3.17472z_{16} - 1.53038z_{17} + 6.62710z_{18} \\
&\quad + 2.06679z_{19} + 8.80175z_{20} + 0.53972z_{21} \\
&\quad + 0.86266z_{22} + 1.44337z_{23} + 6.64923z_{24} \\
&\quad + 2.12197z_{25}
\end{aligned}\]

\noindent\textbf{Cutoff $R_e(0)>6$.}
\[\begin{aligned}
\eta_{R_e(0)>6} &= -15.01673 \\
&\quad - 1.10561z_{1} - 1.19687z_{2} - 0.06696z_{3} \\
&\quad + 0.88935z_{4} + 0.78816z_{5} + 2.81025z_{6} \\
&\quad - 0.97725z_{7} - 1.45006z_{8} - 5.50217z_{9} \\
&\quad - 3.39831z_{10} + 1.88803z_{11} - 2.95865z_{12} \\
&\quad - 1.27236z_{13} - 4.24755z_{14} - 1.49252z_{15} \\
&\quad + 2.39908z_{16} - 1.80042z_{17} + 6.72777z_{18} \\
&\quad + 2.51451z_{19} + 8.80991z_{20} - 0.24474z_{21} \\
&\quad + 0.04698z_{22} + 0.99373z_{23} + 5.76245z_{24} \\
&\quad + 1.23385z_{25}
\end{aligned}\]

\noindent\textbf{Cutoff $R_e(0)>6.5$.}
\[\begin{aligned}
\eta_{R_e(0)>6.5} &= -15.24221 \\
&\quad - 1.61229z_{1} - 0.61890z_{2} - 0.60813z_{3} \\
&\quad - 0.64946z_{4} + 0.48310z_{5} + 2.76332z_{6} \\
&\quad + 0.25623z_{7} - 0.53706z_{8} - 4.05744z_{9} \\
&\quad - 2.91762z_{10} + 1.89080z_{11} - 3.16979z_{12} \\
&\quad - 0.75209z_{13} - 3.91761z_{14} - 1.28779z_{15} \\
&\quad + 2.15510z_{16} - 1.62007z_{17} + 5.36778z_{18} \\
&\quad + 1.92806z_{19} + 7.58084z_{20} + 0.06348z_{21} \\
&\quad + 0.64519z_{22} + 1.45215z_{23} + 6.31728z_{24} \\
&\quad + 0.89936z_{25}
\end{aligned}\]

\noindent\textbf{Cutoff $R_e(0)>7$.}
\[\begin{aligned}
\eta_{R_e(0)>7} &= -14.71459 \\
&\quad - 2.50976z_{1} + 0.10407z_{2} + 0.32542z_{3} \\
&\quad + 0.00003z_{4} + 0.61310z_{5} + 2.60255z_{6} \\
&\quad - 0.39457z_{7} - 0.31629z_{8} - 4.35341z_{9} \\
&\quad - 2.68685z_{10} + 1.63331z_{11} - 3.15364z_{12} \\
&\quad - 1.45451z_{13} - 2.82406z_{14} - 0.67381z_{15} \\
&\quad + 2.17061z_{16} - 1.46092z_{17} + 5.24612z_{18} \\
&\quad + 2.00365z_{19} + 7.31661z_{20} + 0.34774z_{21} \\
&\quad + 0.15588z_{22} + 1.44895z_{23} + 5.45430z_{24} \\
&\quad + 0.87853z_{25}
\end{aligned}\]

\noindent\textbf{Cutoff $R_e(0)>7.5$.}
\[\begin{aligned}
\eta_{R_e(0)>7.5} &= -14.38753 \\
&\quad - 1.97961z_{1} - 0.03309z_{2} + 1.48535z_{3} \\
&\quad - 0.47112z_{4} + 0.30885z_{5} + 2.39944z_{6} \\
&\quad - 0.00477z_{7} - 0.65167z_{8} - 3.83994z_{9} \\
&\quad - 1.79791z_{10} + 1.46392z_{11} - 3.77257z_{12} \\
&\quad - 0.56029z_{13} - 2.20213z_{14} - 1.14601z_{15} \\
&\quad + 2.39963z_{16} - 1.19476z_{17} + 5.39005z_{18} \\
&\quad + 2.07680z_{19} + 6.36521z_{20} + 0.16532z_{21} \\
&\quad + 0.81629z_{22} + 0.86341z_{23} + 5.39641z_{24} \\
&\quad + 1.53947z_{25}
\end{aligned}\]

\noindent\textbf{Cutoff $R_e(0)>8$.}
\[\begin{aligned}
\eta_{R_e(0)>8} &= -13.78360 \\
&\quad - 2.39805z_{1} - 0.59447z_{2} + 0.97357z_{3} \\
&\quad - 0.22786z_{4} + 0.65933z_{5} + 2.14984z_{6} \\
&\quad - 0.43624z_{7} - 0.13977z_{8} - 3.53353z_{9} \\
&\quad - 2.00786z_{10} + 1.17323z_{11} - 3.23301z_{12} \\
&\quad - 0.40877z_{13} - 2.27112z_{14} - 0.99200z_{15} \\
&\quad + 1.70363z_{16} - 1.08970z_{17} + 4.53906z_{18} \\
&\quad + 1.69835z_{19} + 6.38070z_{20} + 0.43589z_{21} \\
&\quad + 0.32859z_{22} + 0.99473z_{23} + 4.63894z_{24} \\
&\quad + 1.06182z_{25}
\end{aligned}\]

\noindent\textbf{Cutoff $R_e(0)>8.5$.}
\[\begin{aligned}
\eta_{R_e(0)>8.5} &= -13.27609 \\
&\quad - 2.14104z_{1} - 0.74135z_{2} + 1.39751z_{3} \\
&\quad - 0.36470z_{4} + 0.20461z_{5} + 1.86120z_{6} \\
&\quad - 0.41875z_{7} - 0.39339z_{8} - 3.10255z_{9} \\
&\quad - 1.43055z_{10} + 0.91973z_{11} - 3.48424z_{12} \\
&\quad + 0.55084z_{13} - 2.18701z_{14} - 1.22640z_{15} \\
&\quad + 1.35165z_{16} - 0.89208z_{17} + 4.74472z_{18} \\
&\quad + 1.55807z_{19} + 5.95772z_{20} + 0.35570z_{21} \\
&\quad + 0.26157z_{22} + 0.81148z_{23} + 4.73081z_{24} \\
&\quad + 1.79706z_{25}
\end{aligned}\]

\noindent\textbf{Cutoff $R_e(0)>9$.}
\[\begin{aligned}
\eta_{R_e(0)>9} &= -12.81389 \\
&\quad - 1.81266z_{1} - 0.96050z_{2} + 1.84363z_{3} \\
&\quad - 0.42037z_{4} + 0.17952z_{5} + 1.54035z_{6} \\
&\quad - 0.33783z_{7} - 0.21738z_{8} - 2.42739z_{9} \\
&\quad - 1.40401z_{10} + 1.21740z_{11} - 2.95234z_{12} \\
&\quad - 0.50468z_{13} - 1.59391z_{14} - 1.29605z_{15} \\
&\quad + 1.70227z_{16} - 0.47418z_{17} + 3.75919z_{18} \\
&\quad + 1.77272z_{19} + 5.44834z_{20} + 0.14259z_{21} \\
&\quad + 0.96422z_{22} + 0.16852z_{23} + 4.71299z_{24} \\
&\quad + 1.85999z_{25}
\end{aligned}\]

\noindent\textbf{Cutoff $R_e(0)>9.5$.}
\[\begin{aligned}
\eta_{R_e(0)>9.5} &= -12.52887 \\
&\quad - 1.86347z_{1} - 1.09717z_{2} + 2.21625z_{3} \\
&\quad - 0.42185z_{4} - 0.06114z_{5} + 1.58304z_{6} \\
&\quad + 0.01193z_{7} - 0.47595z_{8} - 2.18939z_{9} \\
&\quad - 1.17764z_{10} + 0.70105z_{11} - 3.06771z_{12} \\
&\quad - 0.25615z_{13} - 1.45287z_{14} - 1.17270z_{15} \\
&\quad + 1.61601z_{16} - 1.07085z_{17} + 3.62130z_{18} \\
&\quad + 1.70202z_{19} + 4.91091z_{20} - 0.08654z_{21} \\
&\quad + 1.19240z_{22} - 0.19859z_{23} + 4.31668z_{24} \\
&\quad + 1.73179z_{25}
\end{aligned}\]

\noindent\textbf{Cutoff $R_e(0)>10$.}
\[\begin{aligned}
\eta_{R_e(0)>10} &= -12.21405 \\
&\quad - 1.90160z_{1} - 0.45407z_{2} + 2.65710z_{3} \\
&\quad - 0.41694z_{4} - 0.08765z_{5} + 1.59209z_{6} \\
&\quad + 0.30459z_{7} - 0.52230z_{8} - 2.04028z_{9} \\
&\quad - 1.48870z_{10} + 0.64787z_{11} - 2.46858z_{12} \\
&\quad - 0.59162z_{13} - 1.58207z_{14} - 1.39528z_{15} \\
&\quad + 1.49692z_{16} - 0.45323z_{17} + 3.78997z_{18} \\
&\quad + 2.24520z_{19} + 4.25487z_{20} - 0.02739z_{21} \\
&\quad + 0.91962z_{22} - 0.22102z_{23} + 4.21636z_{24} \\
&\quad + 1.84043z_{25}
\end{aligned}\]

\subsubsection{Age-structured SIR: validation and equations}

\begingroup\scriptsize\setlength{\LTcapwidth}{\linewidth}\setlength{\tabcolsep}{3pt}

\begin{longtable}{@{}lrrrrrrr@{}}

\caption{Class-specific held-out metrics for Age-structured SIR with log cubic $h=0.5$.}\label{tab:supp-class-log_cubic_h0p5-age-structured-sir}\\

\toprule Bracket & Support & Recall & Spec. & Precision & F1 & NPV & OvR AUC \\ \midrule\endfirsthead

\toprule Bracket & Support & Recall & Spec. & Precision & F1 & NPV & OvR AUC \\ \midrule\endhead

 \texttt{$<$1} & 1729 & 0.973 & 0.993 & 0.973 & 0.973 & 0.993 & 0.999 \\

 \texttt{1-1.5} & 977 & 0.868 & 0.985 & 0.887 & 0.877 & 0.982 & 0.992 \\

 \texttt{1.5-2} & 938 & 0.868 & 0.979 & 0.844 & 0.855 & 0.982 & 0.988 \\

 \texttt{2-2.5} & 816 & 0.838 & 0.981 & 0.834 & 0.836 & 0.982 & 0.986 \\

 \texttt{2.5-3} & 655 & 0.783 & 0.980 & 0.775 & 0.779 & 0.981 & 0.984 \\

 \texttt{3-3.5} & 536 & 0.733 & 0.979 & 0.715 & 0.724 & 0.981 & 0.980 \\

 \texttt{3.5-4} & 437 & 0.696 & 0.982 & 0.686 & 0.691 & 0.982 & 0.979 \\

 \texttt{4-4.5} & 338 & 0.624 & 0.985 & 0.645 & 0.635 & 0.983 & 0.976 \\

 \texttt{4.5-5} & 275 & 0.542 & 0.983 & 0.527 & 0.534 & 0.984 & 0.973 \\

 \texttt{5-5.5} & 225 & 0.507 & 0.984 & 0.473 & 0.489 & 0.986 & 0.963 \\

 \texttt{5.5-6} & 169 & 0.349 & 0.989 & 0.396 & 0.371 & 0.986 & 0.973 \\

 \texttt{6-6.5} & 147 & 0.469 & 0.987 & 0.411 & 0.438 & 0.990 & 0.957 \\

 \texttt{6.5-7} & 102 & 0.255 & 0.992 & 0.286 & 0.269 & 0.990 & 0.954 \\

 \texttt{7-7.5} & 89 & 0.236 & 0.994 & 0.296 & 0.263 & 0.991 & 0.945 \\

 \texttt{7.5-8} & 88 & 0.398 & 0.993 & 0.389 & 0.393 & 0.993 & 0.953 \\

 \texttt{8-8.5} & 83 & 0.277 & 0.993 & 0.295 & 0.286 & 0.992 & 0.951 \\

 \texttt{8.5-9} & 52 & 0.173 & 0.998 & 0.346 & 0.231 & 0.995 & 0.936 \\

 \texttt{9-9.5} & 50 & 0.180 & 0.997 & 0.265 & 0.214 & 0.995 & 0.959 \\

 \texttt{9.5-10} & 43 & 0.023 & 0.997 & 0.042 & 0.030 & 0.995 & 0.928 \\

 \texttt{$>$10} & 251 & 0.944 & 0.993 & 0.809 & 0.871 & 0.998 & 0.999 \\

\bottomrule\end{longtable}\endgroup

\paragraph{Ordered-threshold feature map and equations.}

All $42$ fitted predictors are standardized using $z_j=(\phi_j(\boldsymbol\theta)-\bar\phi_j)/s_j$, where the values below use the complete training sample for the printed (full-data) equations. Held-out predictions separately recompute means, standard deviations, and fits within each training fold. Here $\phi_j$ is the feature shown in the table. The score for every estimable cutoff is $q_c=\operatorname{logit}^{-1}(\eta_c)$; positive scores increase the fitted exceedance probability, not the numerical reproduction number.

\begingroup\scriptsize\setlength{\LTcapwidth}{\linewidth}

\begin{longtable}{@{}r p{0.44\linewidth} r r@{}}

\caption{Standardization constants and ordered features for Age-structured SIR (compact log-cubic basis).}\label{tab:supp-features-age-structured-sir-fine}\\

\toprule Index & Raw feature $\phi_j$ & Training mean & Training SD \\ \midrule\endfirsthead

\toprule Index & Raw feature $\phi_j$ & Training mean & Training SD \\ \midrule\endhead

 1 & $\log(\beta)$ & -1.095221 & 0.750786 \\

 2 & $\log(\gamma_1)$ & -1.437305 & 0.587783 \\

 3 & $\log(\gamma_2)$ & -1.437305 & 0.587782 \\

 4 & $\log(c_{11})$ & -0.401106 & 0.662438 \\

 5 & $\log(c_{12})$ & -0.842331 & 0.709130 \\

 6 & $\log(c_{21})$ & -0.842331 & 0.709132 \\

 7 & $\log(c_{22})$ & -0.401102 & 0.662432 \\

 8 & $[\log(\beta)]^2$ & 1.763119 & 2.342718 \\

 9 & $[\log(\gamma_1)]^2$ & 2.411291 & 1.990102 \\

 10 & $[\log(\gamma_2)]^2$ & 2.411292 & 1.990096 \\

 11 & $[\log(c_{11})]^2$ & 0.599655 & 1.041028 \\

 12 & $[\log(c_{12})]^2$ & 1.212324 & 1.783126 \\

 13 & $[\log(c_{21})]^2$ & 1.212327 & 1.783148 \\

 14 & $[\log(c_{22})]^2$ & 0.599644 & 1.040997 \\

 15 & $\log(\beta)\,\log(\gamma_1)$ & 1.574936 & 1.343856 \\

 16 & $\log(\beta)\,\log(\gamma_2)$ & 1.570945 & 1.320221 \\

 17 & $\log(\beta)\,\log(c_{11})$ & 0.443235 & 0.928937 \\

 18 & $\log(\beta)\,\log(c_{12})$ & 0.931904 & 1.148798 \\

 19 & $\log(\beta)\,\log(c_{21})$ & 0.910706 & 1.125175 \\

 20 & $\log(\beta)\,\log(c_{22})$ & 0.442628 & 0.935096 \\

 21 & $\log(\gamma_1)\,\log(\gamma_2)$ & 2.070274 & 1.255602 \\

 22 & $\log(\gamma_1)\,\log(c_{11})$ & 0.582804 & 1.067473 \\

 23 & $\log(\gamma_1)\,\log(c_{12})$ & 1.211305 & 1.205154 \\

 24 & $\log(\gamma_1)\,\log(c_{21})$ & 1.213620 & 1.215890 \\

 25 & $\log(\gamma_1)\,\log(c_{22})$ & 0.572965 & 1.055041 \\

 26 & $\log(\gamma_2)\,\log(c_{11})$ & 0.572415 & 1.049115 \\

 27 & $\log(\gamma_2)\,\log(c_{12})$ & 1.212491 & 1.214611 \\

 28 & $\log(\gamma_2)\,\log(c_{21})$ & 1.215244 & 1.229895 \\

 29 & $\log(\gamma_2)\,\log(c_{22})$ & 0.583598 & 1.054823 \\

 30 & $\log(c_{11})\,\log(c_{12})$ & 0.334612 & 0.780339 \\

 31 & $\log(c_{11})\,\log(c_{21})$ & 0.328326 & 0.756712 \\

 32 & $\log(c_{11})\,\log(c_{22})$ & 0.157559 & 0.573840 \\

 33 & $\log(c_{12})\,\log(c_{21})$ & 0.708052 & 0.976793 \\

 34 & $\log(c_{12})\,\log(c_{22})$ & 0.327051 & 0.744933 \\

 35 & $\log(c_{21})\,\log(c_{22})$ & 0.347835 & 0.795570 \\

 36 & $[\log(\beta)]^3$ & -3.621050 & 6.958366 \\

 37 & $[\log(\gamma_1)]^3$ & -4.616689 & 5.652195 \\

 38 & $[\log(\gamma_2)]^3$ & -4.616686 & 5.652165 \\

 39 & $[\log(c_{11})]^3$ & -0.855536 & 2.027677 \\

 40 & $[\log(c_{12})]^3$ & -2.221963 & 4.505271 \\

 41 & $[\log(c_{21})]^3$ & -2.221983 & 4.505398 \\

 42 & $[\log(c_{22})]^3$ & -0.855504 & 2.027583 \\

\bottomrule\end{longtable}\endgroup

Each displayed score below is a numeric equation in the standardized variables $z_1,\ldots,z_42$. The intercept and slopes are posterior modes from the independent $N(0,2^2)$ coefficient prior under the Gaussian Laplace approximation. A cutoff with only one observed class cannot produce a fitted finite logistic score, so its constant exceedance probability is stated explicitly.

\noindent\textbf{Cutoff $R_0>1$.}
\[\begin{aligned}
\eta_{R_0>1} &= 10.44514 \\
&\quad + 8.44757z_{1} - 4.01733z_{2} - 4.45424z_{3} \\
&\quad + 0.29657z_{4} + 2.56731z_{5} + 2.76346z_{6} \\
&\quad - 0.21046z_{7} - 5.04798z_{8} + 3.01612z_{9} \\
&\quad + 3.05344z_{10} + 2.32689z_{11} + 0.51504z_{12} \\
&\quad + 1.47482z_{13} + 1.74452z_{14} + 1.56951z_{15} \\
&\quad + 1.74126z_{16} - 3.82530z_{17} + 0.80018z_{18} \\
&\quad + 0.94933z_{19} - 5.11917z_{20} - 5.00308z_{21} \\
&\quad - 4.34912z_{22} - 0.60779z_{23} - 0.41675z_{24} \\
&\quad + 7.18824z_{25} + 7.12928z_{26} - 0.74837z_{27} \\
&\quad - 1.18436z_{28} - 3.60027z_{29} - 1.07797z_{30} \\
&\quad - 0.97877z_{31} - 3.46162z_{32} + 1.09192z_{33} \\
&\quad - 0.72685z_{34} - 0.82324z_{35} + 1.63642z_{36} \\
&\quad - 1.84515z_{37} - 1.38299z_{38} + 0.25004z_{39} \\
&\quad - 0.64191z_{40} + 0.24141z_{41} - 0.11020z_{42}
\end{aligned}\]

\noindent\textbf{Cutoff $R_0>1.5$.}
\[\begin{aligned}
\eta_{R_0>1.5} &= 4.94485 \\
&\quad + 9.83767z_{1} - 4.83535z_{2} - 4.77582z_{3} \\
&\quad + 1.22797z_{4} + 2.08950z_{5} + 2.37947z_{6} \\
&\quad + 1.74866z_{7} - 5.35363z_{8} + 4.89956z_{9} \\
&\quad + 3.53328z_{10} + 1.43072z_{11} + 0.22050z_{12} \\
&\quad + 0.04446z_{13} + 2.93402z_{14} - 0.78814z_{15} \\
&\quad + 0.15422z_{16} - 5.37290z_{17} + 1.10321z_{18} \\
&\quad + 1.21692z_{19} - 4.93209z_{20} - 4.45144z_{21} \\
&\quad - 4.20686z_{22} - 0.86273z_{23} - 0.67365z_{24} \\
&\quad + 7.21502z_{25} + 7.89113z_{26} - 1.26330z_{27} \\
&\quad - 0.94795z_{28} - 3.75942z_{29} - 1.20095z_{30} \\
&\quad - 1.14540z_{31} - 3.81128z_{32} + 0.97971z_{33} \\
&\quad - 0.87282z_{34} - 0.80764z_{35} + 1.06708z_{36} \\
&\quad - 1.87243z_{37} - 2.76389z_{38} - 0.63671z_{39} \\
&\quad - 0.75029z_{40} - 0.68493z_{41} + 0.95234z_{42}
\end{aligned}\]

\noindent\textbf{Cutoff $R_0>2$.}
\[\begin{aligned}
\eta_{R_0>2} &= -0.11788 \\
&\quad + 12.68453z_{1} - 6.01658z_{2} - 6.34297z_{3} \\
&\quad + 3.40039z_{4} + 3.24265z_{5} + 2.71372z_{6} \\
&\quad + 2.54397z_{7} - 3.22000z_{8} + 4.05219z_{9} \\
&\quad + 4.19490z_{10} + 2.35003z_{11} + 2.43701z_{12} \\
&\quad + 0.57219z_{13} + 2.65930z_{14} - 2.72183z_{15} \\
&\quad - 1.24018z_{16} - 4.09549z_{17} + 2.70868z_{18} \\
&\quad + 2.08461z_{19} - 3.72989z_{20} - 3.94513z_{21} \\
&\quad - 4.63997z_{22} - 1.65857z_{23} - 1.22782z_{24} \\
&\quad + 6.88608z_{25} + 7.68685z_{26} - 1.64726z_{27} \\
&\quad - 1.34749z_{28} - 5.06132z_{29} - 0.65355z_{30} \\
&\quad - 0.63919z_{31} - 3.52039z_{32} + 1.57760z_{33} \\
&\quad - 0.71000z_{34} - 0.86732z_{35} - 0.54457z_{36} \\
&\quad - 2.95392z_{37} - 1.51741z_{38} + 0.12523z_{39} \\
&\quad + 0.61411z_{40} - 0.50665z_{41} + 0.22418z_{42}
\end{aligned}\]

\noindent\textbf{Cutoff $R_0>2.5$.}
\[\begin{aligned}
\eta_{R_0>2.5} &= -4.65505 \\
&\quad + 12.50568z_{1} - 6.33800z_{2} - 7.19840z_{3} \\
&\quad + 3.98976z_{4} + 2.54935z_{5} + 2.84696z_{6} \\
&\quad + 4.10756z_{7} - 1.72990z_{8} + 4.87121z_{9} \\
&\quad + 4.22336z_{10} + 2.49674z_{11} + 1.78934z_{12} \\
&\quad + 1.51010z_{13} + 3.00860z_{14} - 3.10748z_{15} \\
&\quad - 4.00433z_{16} - 2.65440z_{17} + 2.91933z_{18} \\
&\quad + 2.62618z_{19} - 4.21882z_{20} - 4.44820z_{21} \\
&\quad - 5.13862z_{22} - 2.10318z_{23} - 1.30781z_{24} \\
&\quad + 7.95861z_{25} + 6.74295z_{26} - 2.08014z_{27} \\
&\quad - 1.68049z_{28} - 5.06588z_{29} - 0.40082z_{30} \\
&\quad - 0.76123z_{31} - 3.46274z_{32} + 1.74145z_{33} \\
&\quad - 0.46472z_{34} - 0.80110z_{35} - 1.28884z_{36} \\
&\quad - 2.23354z_{37} - 2.92758z_{38} + 0.04985z_{39} \\
&\quad + 0.15074z_{40} + 0.17368z_{41} + 0.86562z_{42}
\end{aligned}\]

\noindent\textbf{Cutoff $R_0>3$.}
\[\begin{aligned}
\eta_{R_0>3} &= -9.17416 \\
&\quad + 11.54394z_{1} - 5.88213z_{2} - 5.58771z_{3} \\
&\quad + 4.32133z_{4} + 2.54019z_{5} + 2.21579z_{6} \\
&\quad + 3.57435z_{7} - 1.13181z_{8} + 4.76040z_{9} \\
&\quad + 4.76072z_{10} + 3.34099z_{11} + 1.49651z_{12} \\
&\quad + 1.05244z_{13} + 1.98646z_{14} - 4.12695z_{15} \\
&\quad - 4.09199z_{16} - 4.30581z_{17} + 2.60359z_{18} \\
&\quad + 3.20700z_{19} - 5.03172z_{20} - 3.86802z_{21} \\
&\quad - 4.82794z_{22} - 1.16632z_{23} - 1.91202z_{24} \\
&\quad + 7.66036z_{25} + 7.39022z_{26} - 1.35237z_{27} \\
&\quad - 1.99477z_{28} - 4.63211z_{29} - 0.63206z_{30} \\
&\quad - 0.37355z_{31} - 4.28597z_{32} + 1.48240z_{33} \\
&\quad - 0.59171z_{34} - 0.67127z_{35} - 1.54112z_{36} \\
&\quad - 2.27508z_{37} - 2.77849z_{38} + 1.26328z_{39} \\
&\quad + 0.31569z_{40} - 0.59410z_{41} + 0.34347z_{42}
\end{aligned}\]

\noindent\textbf{Cutoff $R_0>3.5$.}
\[\begin{aligned}
\eta_{R_0>3.5} &= -11.84464 \\
&\quad + 10.15756z_{1} - 5.18383z_{2} - 4.88051z_{3} \\
&\quad + 3.51356z_{4} + 1.79817z_{5} + 0.98914z_{6} \\
&\quad + 3.92209z_{7} - 1.60927z_{8} + 4.47177z_{9} \\
&\quad + 5.41439z_{10} + 1.19383z_{11} + 1.67493z_{12} \\
&\quad + 0.73988z_{13} + 2.83366z_{14} - 4.10805z_{15} \\
&\quad - 4.79697z_{16} - 4.83846z_{17} + 3.40320z_{18} \\
&\quad + 3.22937z_{19} - 3.58693z_{20} - 2.55306z_{21} \\
&\quad - 4.73067z_{22} - 1.89549z_{23} - 2.35233z_{24} \\
&\quad + 5.99724z_{25} + 6.69220z_{26} - 2.25981z_{27} \\
&\quad - 2.45919z_{28} - 4.91594z_{29} - 0.44214z_{30} \\
&\quad - 0.37527z_{31} - 3.82264z_{32} + 1.52718z_{33} \\
&\quad - 0.26357z_{34} - 0.20753z_{35} - 1.37022z_{36} \\
&\quad - 2.41158z_{37} - 2.28521z_{38} - 0.39418z_{39} \\
&\quad + 0.01522z_{40} - 0.43988z_{41} + 0.55930z_{42}
\end{aligned}\]

\noindent\textbf{Cutoff $R_0>4$.}
\[\begin{aligned}
\eta_{R_0>4} &= -13.69569 \\
&\quad + 8.74854z_{1} - 5.05688z_{2} - 5.15910z_{3} \\
&\quad + 4.59894z_{4} + 0.96515z_{5} + 0.51503z_{6} \\
&\quad + 3.71080z_{7} - 1.27580z_{8} + 4.41408z_{9} \\
&\quad + 4.04955z_{10} + 1.49526z_{11} + 0.65758z_{12} \\
&\quad + 0.60289z_{13} + 2.08487z_{14} - 4.12358z_{15} \\
&\quad - 5.74645z_{16} - 3.36879z_{17} + 3.10485z_{18} \\
&\quad + 3.10037z_{19} - 3.00585z_{20} - 2.93092z_{21} \\
&\quad - 4.22175z_{22} - 1.77074z_{23} - 2.12382z_{24} \\
&\quad + 6.04259z_{25} + 6.38551z_{26} - 2.02737z_{27} \\
&\quad - 2.15282z_{28} - 4.99214z_{29} - 0.05517z_{30} \\
&\quad - 0.25618z_{31} - 3.47603z_{32} + 1.37005z_{33} \\
&\quad - 0.38415z_{34} - 0.55396z_{35} - 1.44491z_{36} \\
&\quad - 2.13745z_{37} - 3.07159z_{38} + 0.20432z_{39} \\
&\quad - 0.26994z_{40} - 0.43227z_{41} + 0.24223z_{42}
\end{aligned}\]

\noindent\textbf{Cutoff $R_0>4.5$.}
\[\begin{aligned}
\eta_{R_0>4.5} &= -14.76835 \\
&\quad + 7.37321z_{1} - 5.64858z_{2} - 5.14804z_{3} \\
&\quad + 3.89965z_{4} + 1.10930z_{5} + 0.05782z_{6} \\
&\quad + 3.23885z_{7} - 1.45850z_{8} + 3.84472z_{9} \\
&\quad + 5.23671z_{10} + 1.37970z_{11} + 1.35330z_{12} \\
&\quad + 1.22050z_{13} + 0.87972z_{14} - 3.94614z_{15} \\
&\quad - 5.42572z_{16} - 3.10436z_{17} + 2.41821z_{18} \\
&\quad + 2.73579z_{19} - 2.12226z_{20} - 3.73875z_{21} \\
&\quad - 3.78235z_{22} - 1.49372z_{23} - 2.07239z_{24} \\
&\quad + 5.52322z_{25} + 5.30175z_{26} - 1.48409z_{27} \\
&\quad - 2.14505z_{28} - 4.96853z_{29} - 0.58658z_{30} \\
&\quad - 0.49241z_{31} - 2.87755z_{32} + 1.05783z_{33} \\
&\quad - 0.74190z_{34} - 0.58233z_{35} - 1.66704z_{36} \\
&\quad - 1.81610z_{37} - 1.60644z_{38} - 0.07809z_{39} \\
&\quad + 0.07892z_{40} + 0.03115z_{41} - 1.09646z_{42}
\end{aligned}\]

\noindent\textbf{Cutoff $R_0>5$.}
\[\begin{aligned}
\eta_{R_0>5} &= -15.31216 \\
&\quad + 6.01744z_{1} - 3.99120z_{2} - 3.69159z_{3} \\
&\quad + 3.88062z_{4} + 0.92936z_{5} + 0.68754z_{6} \\
&\quad + 3.97069z_{7} - 0.08683z_{8} + 4.08285z_{9} \\
&\quad + 4.28510z_{10} + 0.77566z_{11} + 1.10190z_{12} \\
&\quad + 1.21556z_{13} + 0.73647z_{14} - 4.95861z_{15} \\
&\quad - 4.78835z_{16} - 3.43650z_{17} + 1.04639z_{18} \\
&\quad + 2.74919z_{19} - 2.52250z_{20} - 2.27185z_{21} \\
&\quad - 3.59057z_{22} - 1.49160z_{23} - 1.53972z_{24} \\
&\quad + 5.24571z_{25} + 5.62405z_{26} - 0.60748z_{27} \\
&\quad - 1.76842z_{28} - 3.37523z_{29} - 0.73438z_{30} \\
&\quad - 0.48466z_{31} - 3.12188z_{32} + 1.19170z_{33} \\
&\quad - 0.98040z_{34} - 0.51611z_{35} - 2.64274z_{36} \\
&\quad - 1.56299z_{37} - 1.42544z_{38} - 0.46700z_{39} \\
&\quad + 0.17527z_{40} + 0.05378z_{41} - 0.53981z_{42}
\end{aligned}\]

\noindent\textbf{Cutoff $R_0>5.5$.}
\[\begin{aligned}
\eta_{R_0>5.5} &= -15.66640 \\
&\quad + 4.67447z_{1} - 2.82089z_{2} - 3.56673z_{3} \\
&\quad + 4.34271z_{4} + 0.73026z_{5} + 0.72815z_{6} \\
&\quad + 4.60067z_{7} + 0.48233z_{8} + 3.62814z_{9} \\
&\quad + 2.65492z_{10} + 1.19979z_{11} + 0.77898z_{12} \\
&\quad + 0.89309z_{13} + 1.11309z_{14} - 5.30166z_{15} \\
&\quad - 5.66412z_{16} - 2.81486z_{17} + 1.56417z_{18} \\
&\quad + 2.48214z_{19} - 2.55163z_{20} - 1.92619z_{21} \\
&\quad - 3.59664z_{22} - 1.28068z_{23} - 1.23465z_{24} \\
&\quad + 5.58981z_{25} + 5.28646z_{26} - 0.85841z_{27} \\
&\quad - 1.33623z_{28} - 3.25641z_{29} - 0.19483z_{30} \\
&\quad - 0.54522z_{31} - 2.93283z_{32} + 1.05819z_{33} \\
&\quad - 1.00957z_{34} - 0.50513z_{35} - 3.15991z_{36} \\
&\quad - 2.43095z_{37} - 2.76026z_{38} - 0.07096z_{39} \\
&\quad - 0.00850z_{40} - 0.00644z_{41} + 0.14511z_{42}
\end{aligned}\]

\noindent\textbf{Cutoff $R_0>6$.}
\[\begin{aligned}
\eta_{R_0>6} &= -15.90365 \\
&\quad + 4.48624z_{1} - 3.47777z_{2} - 3.27489z_{3} \\
&\quad + 4.25175z_{4} - 0.26133z_{5} - 0.70625z_{6} \\
&\quad + 3.71169z_{7} + 0.36323z_{8} + 2.95068z_{9} \\
&\quad + 3.58753z_{10} + 2.21317z_{11} + 0.32932z_{12} \\
&\quad + 1.25440z_{13} + 1.70499z_{14} - 5.14840z_{15} \\
&\quad - 4.96453z_{16} - 2.26727z_{17} + 1.59463z_{18} \\
&\quad + 2.03108z_{19} - 2.61999z_{20} - 1.91161z_{21} \\
&\quad - 3.50706z_{22} - 1.37828z_{23} - 1.74459z_{24} \\
&\quad + 4.52748z_{25} + 4.83737z_{26} - 1.28381z_{27} \\
&\quad - 1.94537z_{28} - 3.45027z_{29} - 0.14400z_{30} \\
&\quad - 0.43284z_{31} - 2.59127z_{32} + 0.98222z_{33} \\
&\quad - 0.51727z_{34} - 0.40417z_{35} - 3.09605z_{36} \\
&\quad - 2.26662z_{37} - 1.87952z_{38} + 1.15009z_{39} \\
&\quad - 0.05171z_{40} + 0.41639z_{41} + 0.52837z_{42}
\end{aligned}\]

\noindent\textbf{Cutoff $R_0>6.5$.}
\[\begin{aligned}
\eta_{R_0>6.5} &= -15.80113 \\
&\quad + 3.90281z_{1} - 2.71853z_{2} - 2.70751z_{3} \\
&\quad + 3.93697z_{4} - 0.65844z_{5} - 0.21176z_{6} \\
&\quad + 4.03812z_{7} + 1.50870z_{8} + 3.08721z_{9} \\
&\quad + 3.11559z_{10} + 1.78970z_{11} + 0.42834z_{12} \\
&\quad + 1.12444z_{13} + 1.65660z_{14} - 4.61988z_{15} \\
&\quad - 5.64562z_{16} - 1.14652z_{17} + 1.45696z_{18} \\
&\quad + 1.24606z_{19} - 3.20616z_{20} - 1.52291z_{21} \\
&\quad - 3.45212z_{22} - 1.22319z_{23} - 1.37283z_{24} \\
&\quad + 4.85806z_{25} + 3.88374z_{26} - 1.30102z_{27} \\
&\quad - 1.22917z_{28} - 3.27182z_{29} - 0.28220z_{30} \\
&\quad - 0.26695z_{31} - 2.53977z_{32} + 0.84615z_{33} \\
&\quad - 0.35755z_{34} - 0.58751z_{35} - 3.67668z_{36} \\
&\quad - 1.94502z_{37} - 2.23564z_{38} + 0.52400z_{39} \\
&\quad + 0.13036z_{40} + 0.27080z_{41} + 0.50421z_{42}
\end{aligned}\]

\noindent\textbf{Cutoff $R_0>7$.}
\[\begin{aligned}
\eta_{R_0>7} &= -15.84230 \\
&\quad + 3.48889z_{1} - 1.78608z_{2} - 1.60256z_{3} \\
&\quad + 4.27613z_{4} + 0.09948z_{5} - 0.94263z_{6} \\
&\quad + 5.03073z_{7} + 0.92395z_{8} + 2.86250z_{9} \\
&\quad + 2.81682z_{10} + 2.38172z_{11} + 0.56816z_{12} \\
&\quad + 0.55796z_{13} + 0.75576z_{14} - 4.46518z_{15} \\
&\quad - 4.97193z_{16} - 1.72152z_{17} + 1.68936z_{18} \\
&\quad + 1.37679z_{19} - 2.92609z_{20} - 1.25841z_{21} \\
&\quad - 2.92508z_{22} - 0.96672z_{23} - 1.43261z_{24} \\
&\quad + 4.97393z_{25} + 3.88909z_{26} - 1.19147z_{27} \\
&\quad - 1.40709z_{28} - 2.64698z_{29} + 0.08538z_{30} \\
&\quad - 0.24608z_{31} - 2.54586z_{32} + 0.69338z_{33} \\
&\quad - 0.22593z_{34} - 0.14884z_{35} - 3.09316z_{36} \\
&\quad - 2.16317z_{37} - 2.65921z_{38} + 1.32601z_{39} \\
&\quad - 0.10996z_{40} + 0.08509z_{41} - 0.03676z_{42}
\end{aligned}\]

\noindent\textbf{Cutoff $R_0>7.5$.}
\[\begin{aligned}
\eta_{R_0>7.5} &= -15.53341 \\
&\quad + 2.57859z_{1} - 2.08729z_{2} - 2.11452z_{3} \\
&\quad + 2.93477z_{4} + 0.37793z_{5} - 0.14658z_{6} \\
&\quad + 4.06607z_{7} + 0.87963z_{8} + 2.52362z_{9} \\
&\quad + 2.46714z_{10} + 1.49763z_{11} + 1.23610z_{12} \\
&\quad + 0.73874z_{13} + 0.99527z_{14} - 5.32963z_{15} \\
&\quad - 4.49473z_{16} - 1.85290z_{17} + 1.32765z_{18} \\
&\quad + 1.75865z_{19} - 1.83505z_{20} - 1.36975z_{21} \\
&\quad - 3.27294z_{22} - 0.94207z_{23} - 1.04902z_{24} \\
&\quad + 3.89518z_{25} + 3.45561z_{26} - 0.97769z_{27} \\
&\quad - 1.36644z_{28} - 2.96509z_{29} - 0.22279z_{30} \\
&\quad - 0.27463z_{31} - 2.09590z_{32} + 0.75338z_{33} \\
&\quad - 0.14055z_{34} - 0.17584z_{35} - 3.00695z_{36} \\
&\quad - 2.27273z_{37} - 2.33105z_{38} + 0.59665z_{39} \\
&\quad + 0.42797z_{40} + 0.05679z_{41} - 0.25523z_{42}
\end{aligned}\]

\noindent\textbf{Cutoff $R_0>8$.}
\[\begin{aligned}
\eta_{R_0>8} &= -15.63194 \\
&\quad + 1.70062z_{1} - 2.39743z_{2} - 2.23231z_{3} \\
&\quad + 2.40464z_{4} + 1.41253z_{5} + 0.68103z_{6} \\
&\quad + 3.08414z_{7} + 1.06419z_{8} + 2.14635z_{9} \\
&\quad + 1.54496z_{10} + 1.12289z_{11} + 1.15306z_{12} \\
&\quad + 0.73655z_{13} + 1.14963z_{14} - 4.86549z_{15} \\
&\quad - 5.50696z_{16} - 1.99523z_{17} + 1.28563z_{18} \\
&\quad + 1.54471z_{19} - 2.10816z_{20} - 1.56371z_{21} \\
&\quad - 3.20393z_{22} - 0.34292z_{23} - 0.88971z_{24} \\
&\quad + 3.59818z_{25} + 3.33518z_{26} - 0.48418z_{27} \\
&\quad - 0.79018z_{28} - 3.16707z_{29} - 0.44698z_{30} \\
&\quad - 0.28732z_{31} - 1.95887z_{32} + 0.88556z_{33} \\
&\quad - 0.23112z_{34} - 0.09142z_{35} - 3.07411z_{36} \\
&\quad - 1.96113z_{37} - 2.76055z_{38} + 0.19348z_{39} \\
&\quad + 0.32222z_{40} - 0.01585z_{41} + 0.04183z_{42}
\end{aligned}\]

\noindent\textbf{Cutoff $R_0>8.5$.}
\[\begin{aligned}
\eta_{R_0>8.5} &= -15.25771 \\
&\quad + 3.04197z_{1} - 2.22216z_{2} - 1.17590z_{3} \\
&\quad + 2.77612z_{4} + 0.60271z_{5} + 0.31047z_{6} \\
&\quad + 2.98268z_{7} + 1.76143z_{8} + 1.93513z_{9} \\
&\quad + 1.75004z_{10} + 0.46809z_{11} + 1.35062z_{12} \\
&\quad + 0.14664z_{13} + 1.56421z_{14} - 4.93588z_{15} \\
&\quad - 4.24065z_{16} - 1.61671z_{17} + 0.92333z_{18} \\
&\quad + 1.64643z_{19} - 1.02959z_{20} - 1.22177z_{21} \\
&\quad - 3.03701z_{22} - 0.76366z_{23} - 0.91768z_{24} \\
&\quad + 3.06782z_{25} + 3.16495z_{26} - 0.54748z_{27} \\
&\quad - 0.75723z_{28} - 3.06052z_{29} - 0.29673z_{30} \\
&\quad - 0.04115z_{31} - 1.53148z_{32} + 0.80870z_{33} \\
&\quad - 0.12691z_{34} - 0.18349z_{35} - 3.32347z_{36} \\
&\quad - 1.96716z_{37} - 2.73967z_{38} - 0.30903z_{39} \\
&\quad + 0.57024z_{40} - 0.36192z_{41} + 0.09760z_{42}
\end{aligned}\]

\noindent\textbf{Cutoff $R_0>9$.}
\[\begin{aligned}
\eta_{R_0>9} &= -15.11630 \\
&\quad + 1.91934z_{1} - 1.13605z_{2} - 0.58472z_{3} \\
&\quad + 2.96726z_{4} - 0.46532z_{5} - 0.87247z_{6} \\
&\quad + 3.39520z_{7} + 2.13689z_{8} + 2.04741z_{9} \\
&\quad + 1.31991z_{10} + 1.24961z_{11} + 0.56588z_{12} \\
&\quad - 0.16925z_{13} + 1.40096z_{14} - 5.01784z_{15} \\
&\quad - 4.53633z_{16} - 1.19602z_{17} + 0.44049z_{18} \\
&\quad + 1.34947z_{19} - 0.51521z_{20} - 0.48836z_{21} \\
&\quad - 2.94063z_{22} - 0.83480z_{23} - 1.05247z_{24} \\
&\quad + 3.55298z_{25} + 2.97990z_{26} - 0.59065z_{27} \\
&\quad - 1.01552z_{28} - 3.07164z_{29} - 0.49774z_{30} \\
&\quad - 0.12792z_{31} - 1.70310z_{32} + 0.61140z_{33} \\
&\quad - 0.32430z_{34} - 0.05694z_{35} - 2.95262z_{36} \\
&\quad - 2.21883z_{37} - 3.20066z_{38} + 0.18282z_{39} \\
&\quad + 0.25023z_{40} - 0.29548z_{41} + 0.72588z_{42}
\end{aligned}\]

\noindent\textbf{Cutoff $R_0>9.5$.}
\[\begin{aligned}
\eta_{R_0>9.5} &= -14.90360 \\
&\quad + 1.93144z_{1} - 0.57039z_{2} - 1.08536z_{3} \\
&\quad + 3.56919z_{4} + 0.15741z_{5} - 0.75277z_{6} \\
&\quad + 1.95920z_{7} + 2.12535z_{8} + 1.93775z_{9} \\
&\quad + 1.30476z_{10} + 0.34643z_{11} + 1.55126z_{12} \\
&\quad + 0.14589z_{13} + 1.72665z_{14} - 5.09716z_{15} \\
&\quad - 4.00535z_{16} - 0.29006z_{17} + 0.51655z_{18} \\
&\quad + 1.07472z_{19} - 0.48601z_{20} - 0.59525z_{21} \\
&\quad - 2.33841z_{22} - 0.84337z_{23} - 0.80763z_{24} \\
&\quad + 2.89003z_{25} + 2.90386z_{26} - 0.78892z_{27} \\
&\quad - 1.13430z_{28} - 3.25522z_{29} - 0.32925z_{30} \\
&\quad + 0.17638z_{31} - 2.38840z_{32} + 0.55010z_{33} \\
&\quad - 0.09172z_{34} - 0.36019z_{35} - 2.91016z_{36} \\
&\quad - 2.57909z_{37} - 2.55206z_{38} - 0.38010z_{39} \\
&\quad + 0.35612z_{40} - 0.06306z_{41} + 1.81130z_{42}
\end{aligned}\]

\noindent\textbf{Cutoff $R_0>10$.}
\[\begin{aligned}
\eta_{R_0>10} &= -14.30425 \\
&\quad + 1.50753z_{1} - 0.56550z_{2} - 1.12201z_{3} \\
&\quad + 4.00676z_{4} - 0.36753z_{5} - 0.15066z_{6} \\
&\quad + 2.78118z_{7} + 2.07750z_{8} + 1.49171z_{9} \\
&\quad + 1.03568z_{10} + 1.42563z_{11} + 0.41303z_{12} \\
&\quad + 0.44255z_{13} + 0.95635z_{14} - 4.75061z_{15} \\
&\quad - 3.95850z_{16} + 0.32633z_{17} + 0.19301z_{18} \\
&\quad + 0.64598z_{19} + 0.42021z_{20} - 0.66549z_{21} \\
&\quad - 1.93118z_{22} - 0.65990z_{23} - 0.76568z_{24} \\
&\quad + 2.95914z_{25} + 2.92203z_{26} - 0.81768z_{27} \\
&\quad - 0.88065z_{28} - 3.04952z_{29} - 0.61288z_{30} \\
&\quad + 0.17801z_{31} - 1.81765z_{32} + 0.33178z_{33} \\
&\quad + 0.06814z_{34} - 0.16709z_{35} - 3.02667z_{36} \\
&\quad - 2.67522z_{37} - 2.54307z_{38} + 0.69079z_{39} \\
&\quad - 0.16420z_{40} - 0.33495z_{41} + 1.01365z_{42}
\end{aligned}\]

\subsubsection{15-compartment five-group SIR: validation and equations}

\begingroup\scriptsize\setlength{\LTcapwidth}{\linewidth}\setlength{\tabcolsep}{3pt}

\begin{longtable}{@{}lrrrrrrr@{}}

\caption{Class-specific held-out metrics for 15-compartment five-group SIR with log cubic $h=0.5$.}\label{tab:supp-class-log_cubic_h0p5-15-compartment-five-group-sir}\\

\toprule Bracket & Support & Recall & Spec. & Precision & F1 & NPV & OvR AUC \\ \midrule\endfirsthead

\toprule Bracket & Support & Recall & Spec. & Precision & F1 & NPV & OvR AUC \\ \midrule\endhead

 \texttt{$<$1} & 776 & 0.964 & 0.998 & 0.977 & 0.970 & 0.996 & 1.000 \\

 \texttt{1-1.5} & 495 & 0.917 & 0.993 & 0.899 & 0.908 & 0.995 & 0.997 \\

 \texttt{1.5-2} & 435 & 0.860 & 0.993 & 0.876 & 0.868 & 0.992 & 0.995 \\

 \texttt{2-2.5} & 452 & 0.796 & 0.989 & 0.809 & 0.803 & 0.988 & 0.990 \\

 \texttt{2.5-3} & 466 & 0.773 & 0.985 & 0.756 & 0.764 & 0.986 & 0.984 \\

 \texttt{3-3.5} & 513 & 0.772 & 0.983 & 0.757 & 0.764 & 0.984 & 0.983 \\

 \texttt{3.5-4} & 430 & 0.667 & 0.983 & 0.688 & 0.678 & 0.981 & 0.974 \\

 \texttt{4-4.5} & 444 & 0.664 & 0.979 & 0.650 & 0.657 & 0.980 & 0.973 \\

 \texttt{4.5-5} & 431 & 0.668 & 0.980 & 0.652 & 0.660 & 0.981 & 0.975 \\

 \texttt{5-5.5} & 394 & 0.584 & 0.979 & 0.585 & 0.584 & 0.978 & 0.968 \\

 \texttt{5.5-6} & 365 & 0.570 & 0.977 & 0.545 & 0.557 & 0.979 & 0.969 \\

 \texttt{6-6.5} & 298 & 0.507 & 0.981 & 0.512 & 0.509 & 0.981 & 0.968 \\

 \texttt{6.5-7} & 299 & 0.515 & 0.977 & 0.462 & 0.487 & 0.981 & 0.963 \\

 \texttt{7-7.5} & 260 & 0.288 & 0.979 & 0.315 & 0.301 & 0.976 & 0.959 \\

 \texttt{7.5-8} & 237 & 0.367 & 0.979 & 0.347 & 0.357 & 0.981 & 0.960 \\

 \texttt{8-8.5} & 215 & 0.312 & 0.982 & 0.322 & 0.317 & 0.981 & 0.957 \\

 \texttt{8.5-9} & 216 & 0.375 & 0.980 & 0.343 & 0.358 & 0.983 & 0.962 \\

 \texttt{9-9.5} & 170 & 0.153 & 0.992 & 0.302 & 0.203 & 0.982 & 0.953 \\

 \texttt{9.5-10} & 129 & 0.047 & 0.997 & 0.188 & 0.075 & 0.985 & 0.961 \\

 \texttt{$>$10} & 975 & 0.962 & 0.978 & 0.860 & 0.908 & 0.995 & 0.997 \\

\bottomrule\end{longtable}\endgroup

\paragraph{Ordered-threshold feature map and equations.}

All $33$ fitted predictors are standardized using $z_j=(\phi_j(\boldsymbol\theta)-\bar\phi_j)/s_j$, where the values below use the complete training sample for the printed (full-data) equations. Held-out predictions separately recompute means, standard deviations, and fits within each training fold. Here $\phi_j$ is the feature shown in the table. The score for every estimable cutoff is $q_c=\operatorname{logit}^{-1}(\eta_c)$; positive scores increase the fitted exceedance probability, not the numerical reproduction number.

\begingroup\scriptsize\setlength{\LTcapwidth}{\linewidth}

\begin{longtable}{@{}r p{0.44\linewidth} r r@{}}

\caption{Standardization constants and ordered features for 15-compartment five-group SIR (compact log-cubic basis).}\label{tab:supp-features-15-compartment-five-group-sir-fine}\\

\toprule Index & Raw feature $\phi_j$ & Training mean & Training SD \\ \midrule\endfirsthead

\toprule Index & Raw feature $\phi_j$ & Training mean & Training SD \\ \midrule\endhead

 1 & $\log(\beta)$ & -2.134574 & 0.843287 \\

 2 & $\log(\gamma_1)$ & -1.437303 & 0.587779 \\

 3 & $\log(\gamma_2)$ & -1.437305 & 0.587782 \\

 4 & $\log(\gamma_3)$ & -1.437305 & 0.587784 \\

 5 & $\log(\gamma_4)$ & -1.437306 & 0.587785 \\

 6 & $\log(\gamma_5)$ & -1.437304 & 0.587784 \\

 7 & $[\log(\beta)]^2$ & 5.267448 & 4.610981 \\

 8 & $[\log(\gamma_1)]^2$ & 2.411280 & 1.990078 \\

 9 & $[\log(\gamma_2)]^2$ & 2.411290 & 1.990094 \\

 10 & $[\log(\gamma_3)]^2$ & 2.411292 & 1.990107 \\

 11 & $[\log(\gamma_4)]^2$ & 2.411298 & 1.990113 \\

 12 & $[\log(\gamma_5)]^2$ & 2.411290 & 1.990108 \\

 13 & $\log(\beta)\,\log(\gamma_1)$ & 3.074725 & 1.834870 \\

 14 & $\log(\beta)\,\log(\gamma_2)$ & 3.059064 & 1.784483 \\

 15 & $\log(\beta)\,\log(\gamma_3)$ & 3.064852 & 1.803930 \\

 16 & $\log(\beta)\,\log(\gamma_4)$ & 3.072759 & 1.837645 \\

 17 & $\log(\beta)\,\log(\gamma_5)$ & 3.070575 & 1.829424 \\

 18 & $\log(\gamma_1)\,\log(\gamma_2)$ & 2.076183 & 1.265559 \\

 19 & $\log(\gamma_1)\,\log(\gamma_3)$ & 2.067733 & 1.248415 \\

 20 & $\log(\gamma_1)\,\log(\gamma_4)$ & 2.059412 & 1.223590 \\

 21 & $\log(\gamma_1)\,\log(\gamma_5)$ & 2.065621 & 1.245217 \\

 22 & $\log(\gamma_2)\,\log(\gamma_3)$ & 2.061123 & 1.225565 \\

 23 & $\log(\gamma_2)\,\log(\gamma_4)$ & 2.061075 & 1.235237 \\

 24 & $\log(\gamma_2)\,\log(\gamma_5)$ & 2.075254 & 1.274534 \\

 25 & $\log(\gamma_3)\,\log(\gamma_4)$ & 2.060658 & 1.224150 \\

 26 & $\log(\gamma_3)\,\log(\gamma_5)$ & 2.069393 & 1.253160 \\

 27 & $\log(\gamma_4)\,\log(\gamma_5)$ & 2.067508 & 1.242058 \\

 28 & $[\log(\beta)]^3$ & -15.056142 & 21.505418 \\

 29 & $[\log(\gamma_1)]^3$ & -4.616645 & 5.652102 \\

 30 & $[\log(\gamma_2)]^3$ & -4.616682 & 5.652166 \\

 31 & $[\log(\gamma_3)]^3$ & -4.616695 & 5.652225 \\

 32 & $[\log(\gamma_4)]^3$ & -4.616712 & 5.652248 \\

 33 & $[\log(\gamma_5)]^3$ & -4.616690 & 5.652230 \\

\bottomrule\end{longtable}\endgroup

Each displayed score below is a numeric equation in the standardized variables $z_1,\ldots,z_33$. The intercept and slopes are posterior modes from the independent $N(0,2^2)$ coefficient prior under the Gaussian Laplace approximation. A cutoff with only one observed class cannot produce a fitted finite logistic score, so its constant exceedance probability is stated explicitly.

\noindent\textbf{Cutoff $R_0>1$.}
\[\begin{aligned}
\eta_{R_0>1} &= 12.51521 \\
&\quad + 0.64228z_{1} - 1.11768z_{2} - 1.21598z_{3} \\
&\quad - 2.00079z_{4} - 1.13246z_{5} - 1.27749z_{6} \\
&\quad - 4.72631z_{7} + 1.77743z_{8} + 1.79304z_{9} \\
&\quad + 2.16055z_{10} + 1.55014z_{11} + 0.61652z_{12} \\
&\quad + 1.25946z_{13} + 1.89142z_{14} + 0.50183z_{15} \\
&\quad + 0.35597z_{16} + 1.39300z_{17} - 0.84626z_{18} \\
&\quad - 1.91145z_{19} - 1.62251z_{20} - 1.72862z_{21} \\
&\quad - 1.08751z_{22} - 3.05765z_{23} - 1.98146z_{24} \\
&\quad - 0.92871z_{25} - 2.37432z_{26} - 0.35457z_{27} \\
&\quad + 7.96377z_{28} - 2.93444z_{29} - 3.13763z_{30} \\
&\quad - 2.74948z_{31} - 3.53574z_{32} - 2.67912z_{33}
\end{aligned}\]

\noindent\textbf{Cutoff $R_0>1.5$.}
\[\begin{aligned}
\eta_{R_0>1.5} &= 13.15850 \\
&\quad + 2.27106z_{1} - 2.50741z_{2} - 3.15992z_{3} \\
&\quad - 2.33303z_{4} - 3.28865z_{5} - 3.06234z_{6} \\
&\quad - 5.27479z_{7} + 2.34483z_{8} + 2.26454z_{9} \\
&\quad + 3.06413z_{10} + 2.52472z_{11} + 2.25178z_{12} \\
&\quad + 0.13440z_{13} + 0.32851z_{14} + 0.45818z_{15} \\
&\quad - 1.60475z_{16} - 0.75356z_{17} - 0.80390z_{18} \\
&\quad - 2.50591z_{19} - 2.09967z_{20} - 2.14343z_{21} \\
&\quad - 1.56685z_{22} - 2.82994z_{23} - 2.19180z_{24} \\
&\quad - 1.09932z_{25} - 2.56167z_{26} - 0.57282z_{27} \\
&\quad + 7.31406z_{28} - 3.74916z_{29} - 3.44468z_{30} \\
&\quad - 3.32164z_{31} - 3.30707z_{32} - 2.08511z_{33}
\end{aligned}\]

\noindent\textbf{Cutoff $R_0>2$.}
\[\begin{aligned}
\eta_{R_0>2} &= 11.38801 \\
&\quad + 2.02373z_{1} - 2.51199z_{2} - 3.89814z_{3} \\
&\quad - 2.85787z_{4} - 2.97169z_{5} - 1.48234z_{6} \\
&\quad - 4.91494z_{7} + 3.12022z_{8} + 2.71723z_{9} \\
&\quad + 2.72460z_{10} + 3.31877z_{11} + 1.67249z_{12} \\
&\quad - 2.15424z_{13} - 0.52294z_{14} - 2.60632z_{15} \\
&\quad - 2.48034z_{16} - 0.79363z_{17} - 1.25392z_{18} \\
&\quad - 1.28140z_{19} - 2.06243z_{20} - 1.26790z_{21} \\
&\quad - 1.77366z_{22} - 2.56754z_{23} - 2.18114z_{24} \\
&\quad - 0.29037z_{25} - 1.53690z_{26} - 0.03179z_{27} \\
&\quad + 7.03847z_{28} - 4.50094z_{29} - 4.00095z_{30} \\
&\quad - 4.39083z_{31} - 2.72349z_{32} - 3.36551z_{33}
\end{aligned}\]

\noindent\textbf{Cutoff $R_0>2.5$.}
\[\begin{aligned}
\eta_{R_0>2.5} &= 8.48223 \\
&\quad + 2.89205z_{1} - 3.88419z_{2} - 4.75489z_{3} \\
&\quad - 3.69577z_{4} - 3.69108z_{5} - 3.57534z_{6} \\
&\quad - 3.93972z_{7} + 2.17293z_{8} + 3.76184z_{9} \\
&\quad + 2.99210z_{10} + 1.60146z_{11} + 2.06948z_{12} \\
&\quad - 1.88615z_{13} - 4.21328z_{14} - 2.76488z_{15} \\
&\quad - 2.18406z_{16} - 2.81337z_{17} - 1.00205z_{18} \\
&\quad - 2.01508z_{19} - 3.32995z_{20} - 2.06693z_{21} \\
&\quad - 1.60222z_{22} - 2.01589z_{23} - 2.26424z_{24} \\
&\quad - 0.98328z_{25} - 1.74473z_{26} - 0.55217z_{27} \\
&\quad + 4.79881z_{28} - 5.25037z_{29} - 4.68783z_{30} \\
&\quad - 4.27971z_{31} - 4.68831z_{32} - 3.45110z_{33}
\end{aligned}\]

\noindent\textbf{Cutoff $R_0>3$.}
\[\begin{aligned}
\eta_{R_0>3} &= 5.61181 \\
&\quad + 3.87065z_{1} - 4.66031z_{2} - 4.74500z_{3} \\
&\quad - 4.04285z_{4} - 4.47133z_{5} - 3.59069z_{6} \\
&\quad - 3.76742z_{7} + 3.40818z_{8} + 5.02530z_{9} \\
&\quad + 2.25164z_{10} + 2.53116z_{11} + 1.30341z_{12} \\
&\quad - 3.96450z_{13} - 4.82610z_{14} - 3.26612z_{15} \\
&\quad - 3.03542z_{16} - 1.87901z_{17} - 0.91963z_{18} \\
&\quad - 2.13528z_{19} - 3.23964z_{20} - 2.77611z_{21} \\
&\quad - 1.69870z_{22} - 2.71828z_{23} - 2.33472z_{24} \\
&\quad - 1.07598z_{25} - 1.05468z_{26} - 1.63499z_{27} \\
&\quad + 4.56552z_{28} - 5.44957z_{29} - 5.04605z_{30} \\
&\quad - 5.30484z_{31} - 5.29359z_{32} - 4.53346z_{33}
\end{aligned}\]

\noindent\textbf{Cutoff $R_0>3.5$.}
\[\begin{aligned}
\eta_{R_0>3.5} &= 1.75199 \\
&\quad + 2.72999z_{1} - 4.92553z_{2} - 5.32883z_{3} \\
&\quad - 5.02019z_{4} - 4.75795z_{5} - 2.87610z_{6} \\
&\quad - 3.05213z_{7} + 3.22837z_{8} + 4.19011z_{9} \\
&\quad + 3.47399z_{10} + 2.30064z_{11} + 1.75369z_{12} \\
&\quad - 2.49088z_{13} - 5.94119z_{14} - 5.08196z_{15} \\
&\quad - 5.51481z_{16} - 1.90578z_{17} - 1.30798z_{18} \\
&\quad - 2.79519z_{19} - 2.53446z_{20} - 3.22712z_{21} \\
&\quad - 1.85234z_{22} - 2.24967z_{23} - 1.68155z_{24} \\
&\quad - 0.18614z_{25} - 1.77008z_{26} - 0.40054z_{27} \\
&\quad + 3.66843z_{28} - 4.51964z_{29} - 5.36998z_{30} \\
&\quad - 4.43514z_{31} - 4.31291z_{32} - 4.05087z_{33}
\end{aligned}\]

\noindent\textbf{Cutoff $R_0>4$.}
\[\begin{aligned}
\eta_{R_0>4} &= -1.70037 \\
&\quad + 1.31802z_{1} - 4.64957z_{2} - 4.57716z_{3} \\
&\quad - 4.44406z_{4} - 4.33498z_{5} - 3.78844z_{6} \\
&\quad - 3.20611z_{7} + 3.11333z_{8} + 4.82557z_{9} \\
&\quad + 3.51099z_{10} + 2.32861z_{11} + 0.84142z_{12} \\
&\quad - 5.32277z_{13} - 6.34672z_{14} - 6.01778z_{15} \\
&\quad - 5.63741z_{16} - 1.35272z_{17} - 0.75100z_{18} \\
&\quad - 2.12581z_{19} - 2.37941z_{20} - 2.33421z_{21} \\
&\quad - 0.91673z_{22} - 2.38294z_{23} - 2.31501z_{24} \\
&\quad - 0.39018z_{25} - 2.78547z_{26} - 0.83118z_{27} \\
&\quad + 4.28574z_{28} - 5.51999z_{29} - 4.95221z_{30} \\
&\quad - 5.36548z_{31} - 5.25320z_{32} - 4.42441z_{33}
\end{aligned}\]

\noindent\textbf{Cutoff $R_0>4.5$.}
\[\begin{aligned}
\eta_{R_0>4.5} &= -4.30806 \\
&\quad + 1.60554z_{1} - 4.84990z_{2} - 5.92296z_{3} \\
&\quad - 4.76623z_{4} - 4.01239z_{5} - 3.76835z_{6} \\
&\quad - 2.56639z_{7} + 2.77383z_{8} + 3.71887z_{9} \\
&\quad + 3.08292z_{10} + 0.97013z_{11} + 1.73982z_{12} \\
&\quad - 5.59146z_{13} - 6.21601z_{14} - 6.08861z_{15} \\
&\quad - 3.33100z_{16} - 3.90565z_{17} - 1.41419z_{18} \\
&\quad - 2.24946z_{19} - 2.82522z_{20} - 1.82817z_{21} \\
&\quad - 0.99397z_{22} - 2.48369z_{23} - 3.01681z_{24} \\
&\quad - 1.06189z_{25} - 1.47621z_{26} - 0.27295z_{27} \\
&\quad + 3.05367z_{28} - 6.07521z_{29} - 5.05930z_{30} \\
&\quad - 4.86591z_{31} - 5.63884z_{32} - 3.97435z_{33}
\end{aligned}\]

\noindent\textbf{Cutoff $R_0>5$.}
\[\begin{aligned}
\eta_{R_0>5} &= -7.23438 \\
&\quad - 0.22627z_{1} - 4.52487z_{2} - 5.21027z_{3} \\
&\quad - 4.09335z_{4} - 3.60171z_{5} - 3.00327z_{6} \\
&\quad - 2.85271z_{7} + 3.08168z_{8} + 3.89569z_{9} \\
&\quad + 4.18834z_{10} + 2.32770z_{11} + 0.65824z_{12} \\
&\quad - 4.98818z_{13} - 8.35480z_{14} - 7.38828z_{15} \\
&\quad - 5.20616z_{16} - 2.71832z_{17} - 0.80883z_{18} \\
&\quad - 2.27660z_{19} - 2.63021z_{20} - 2.32667z_{21} \\
&\quad - 0.71861z_{22} - 2.33123z_{23} - 2.24135z_{24} \\
&\quad - 0.45002z_{25} - 1.19584z_{26} + 0.19789z_{27} \\
&\quad + 3.55933z_{28} - 5.35507z_{29} - 6.00406z_{30} \\
&\quad - 4.44781z_{31} - 4.78268z_{32} - 4.61721z_{33}
\end{aligned}\]

\noindent\textbf{Cutoff $R_0>5.5$.}
\[\begin{aligned}
\eta_{R_0>5.5} &= -9.43565 \\
&\quad - 0.96492z_{1} - 4.13178z_{2} - 4.34966z_{3} \\
&\quad - 3.77583z_{4} - 3.74310z_{5} - 2.99356z_{6} \\
&\quad - 2.89122z_{7} + 3.84829z_{8} + 4.10573z_{9} \\
&\quad + 2.81063z_{10} + 1.51374z_{11} + 1.02415z_{12} \\
&\quad - 4.57503z_{13} - 7.73942z_{14} - 6.95656z_{15} \\
&\quad - 7.03814z_{16} - 2.43164z_{17} - 1.60994z_{18} \\
&\quad - 2.15770z_{19} - 2.54742z_{20} - 2.31319z_{21} \\
&\quad - 0.35246z_{22} - 1.45194z_{23} - 2.26288z_{24} \\
&\quad - 0.06726z_{25} - 1.42528z_{26} + 0.44961z_{27} \\
&\quad + 3.43677z_{28} - 4.68135z_{29} - 5.65010z_{30} \\
&\quad - 5.01172z_{31} - 5.04245z_{32} - 3.88027z_{33}
\end{aligned}\]

\noindent\textbf{Cutoff $R_0>6$.}
\[\begin{aligned}
\eta_{R_0>6} &= -11.26236 \\
&\quad - 1.83036z_{1} - 3.38953z_{2} - 4.54989z_{3} \\
&\quad - 3.66939z_{4} - 3.67896z_{5} - 2.51515z_{6} \\
&\quad - 2.42313z_{7} + 2.18337z_{8} + 3.63002z_{9} \\
&\quad + 2.91229z_{10} + 0.81345z_{11} + 1.52214z_{12} \\
&\quad - 4.84355z_{13} - 8.05937z_{14} - 6.77186z_{15} \\
&\quad - 6.90685z_{16} - 3.68868z_{17} - 0.67987z_{18} \\
&\quad - 1.68382z_{19} - 1.83455z_{20} - 2.08749z_{21} \\
&\quad - 0.85677z_{22} - 1.97404z_{23} - 1.98636z_{24} \\
&\quad - 0.24078z_{25} - 1.53921z_{26} + 0.15966z_{27} \\
&\quad + 2.96825z_{28} - 5.61182z_{29} - 5.35340z_{30} \\
&\quad - 4.66616z_{31} - 5.64984z_{32} - 4.32231z_{33}
\end{aligned}\]

\noindent\textbf{Cutoff $R_0>6.5$.}
\[\begin{aligned}
\eta_{R_0>6.5} &= -12.48354 \\
&\quad - 1.26218z_{1} - 3.73856z_{2} - 4.12251z_{3} \\
&\quad - 3.48193z_{4} - 3.06853z_{5} - 2.31534z_{6} \\
&\quad - 1.64363z_{7} + 2.11665z_{8} + 2.97681z_{9} \\
&\quad + 2.36277z_{10} + 0.87006z_{11} + 0.78605z_{12} \\
&\quad - 5.03324z_{13} - 7.09725z_{14} - 7.19586z_{15} \\
&\quad - 8.28784z_{16} - 3.19878z_{17} - 1.29931z_{18} \\
&\quad - 1.77325z_{19} - 1.31115z_{20} - 2.41294z_{21} \\
&\quad - 1.32664z_{22} - 1.68288z_{23} - 1.60588z_{24} \\
&\quad + 0.17781z_{25} - 1.03465z_{26} + 0.32070z_{27} \\
&\quad + 1.61334z_{28} - 5.66842z_{29} - 5.92518z_{30} \\
&\quad - 5.28090z_{31} - 6.10143z_{32} - 4.48358z_{33}
\end{aligned}\]

\noindent\textbf{Cutoff $R_0>7$.}
\[\begin{aligned}
\eta_{R_0>7} &= -12.49967 \\
&\quad - 1.56455z_{1} - 2.96235z_{2} - 3.58202z_{3} \\
&\quad - 3.09278z_{4} - 2.74936z_{5} - 3.49985z_{6} \\
&\quad - 1.16544z_{7} + 1.07228z_{8} + 2.18073z_{9} \\
&\quad + 1.19454z_{10} + 0.69160z_{11} - 1.01287z_{12} \\
&\quad - 4.31542z_{13} - 8.10729z_{14} - 7.15109z_{15} \\
&\quad - 5.59402z_{16} - 3.29468z_{17} - 0.89627z_{18} \\
&\quad - 1.12222z_{19} - 1.45470z_{20} - 1.62564z_{21} \\
&\quad - 0.66379z_{22} - 1.57785z_{23} - 1.53730z_{24} \\
&\quad + 0.02538z_{25} - 1.55673z_{26} - 0.42502z_{27} \\
&\quad + 0.99404z_{28} - 5.31267z_{29} - 6.25334z_{30} \\
&\quad - 5.51646z_{31} - 4.99601z_{32} - 4.91321z_{33}
\end{aligned}\]

\noindent\textbf{Cutoff $R_0>7.5$.}
\[\begin{aligned}
\eta_{R_0>7.5} &= -13.43555 \\
&\quad - 1.15783z_{1} - 2.92710z_{2} - 2.87357z_{3} \\
&\quad - 2.22633z_{4} - 2.88854z_{5} - 2.94824z_{6} \\
&\quad - 0.74054z_{7} + 0.98207z_{8} + 1.93267z_{9} \\
&\quad + 1.15970z_{10} - 0.68245z_{11} - 0.76420z_{12} \\
&\quad - 4.82821z_{13} - 7.79255z_{14} - 6.32347z_{15} \\
&\quad - 5.48503z_{16} - 3.96752z_{17} - 0.48870z_{18} \\
&\quad - 1.40802z_{19} - 1.51210z_{20} - 1.39273z_{21} \\
&\quad + 0.11214z_{22} - 1.38876z_{23} - 1.38064z_{24} \\
&\quad + 0.01467z_{25} - 1.00083z_{26} - 0.21081z_{27} \\
&\quad + 0.16821z_{28} - 5.12220z_{29} - 5.85024z_{30} \\
&\quad - 5.00174z_{31} - 5.60074z_{32} - 4.52811z_{33}
\end{aligned}\]

\noindent\textbf{Cutoff $R_0>8$.}
\[\begin{aligned}
\eta_{R_0>8} &= -14.37453 \\
&\quad - 1.85934z_{1} - 2.83295z_{2} - 2.14814z_{3} \\
&\quad - 1.90673z_{4} - 2.33445z_{5} - 3.02280z_{6} \\
&\quad - 0.45270z_{7} + 0.59991z_{8} + 2.48920z_{9} \\
&\quad + 0.15076z_{10} - 0.72304z_{11} - 0.39343z_{12} \\
&\quad - 5.56767z_{13} - 7.68054z_{14} - 6.27136z_{15} \\
&\quad - 4.88675z_{16} - 4.98964z_{17} - 0.27517z_{18} \\
&\quad - 0.91655z_{19} - 1.68109z_{20} - 1.19454z_{21} \\
&\quad + 0.03588z_{22} - 0.96141z_{23} - 1.37284z_{24} \\
&\quad - 0.07146z_{25} - 0.82120z_{26} + 0.10244z_{27} \\
&\quad - 0.04403z_{28} - 5.45115z_{29} - 5.30055z_{30} \\
&\quad - 5.80328z_{31} - 5.38951z_{32} - 4.03796z_{33}
\end{aligned}\]

\noindent\textbf{Cutoff $R_0>8.5$.}
\[\begin{aligned}
\eta_{R_0>8.5} &= -14.34156 \\
&\quad - 0.26742z_{1} - 2.55961z_{2} - 2.33778z_{3} \\
&\quad - 1.24202z_{4} - 2.22860z_{5} - 3.50679z_{6} \\
&\quad - 0.31646z_{7} - 0.54615z_{8} + 0.49575z_{9} \\
&\quad + 0.30054z_{10} - 0.36087z_{11} - 1.47062z_{12} \\
&\quad - 4.80626z_{13} - 7.29328z_{14} - 5.40763z_{15} \\
&\quad - 4.43293z_{16} - 3.74980z_{17} + 0.10059z_{18} \\
&\quad - 0.81071z_{19} - 1.43717z_{20} - 1.28839z_{21} \\
&\quad + 0.11543z_{22} - 0.97070z_{23} - 1.27657z_{24} \\
&\quad + 0.14194z_{25} - 1.10461z_{26} - 0.23419z_{27} \\
&\quad - 0.75432z_{28} - 5.65294z_{29} - 6.07908z_{30} \\
&\quad - 5.37726z_{31} - 4.53671z_{32} - 4.19009z_{33}
\end{aligned}\]

\noindent\textbf{Cutoff $R_0>9$.}
\[\begin{aligned}
\eta_{R_0>9} &= -14.14523 \\
&\quad - 1.41480z_{1} - 2.52626z_{2} - 2.50660z_{3} \\
&\quad - 2.73001z_{4} - 2.05615z_{5} - 2.89268z_{6} \\
&\quad - 0.01202z_{7} - 0.92531z_{8} + 0.18149z_{9} \\
&\quad - 0.53874z_{10} - 0.98028z_{11} - 1.90383z_{12} \\
&\quad - 4.59287z_{13} - 7.20015z_{14} - 6.83156z_{15} \\
&\quad - 4.12268z_{16} - 2.89400z_{17} - 0.35731z_{18} \\
&\quad - 1.15312z_{19} - 1.14549z_{20} - 0.68511z_{21} \\
&\quad + 0.40394z_{22} - 0.85939z_{23} - 1.25510z_{24} \\
&\quad + 0.23110z_{25} - 1.02752z_{26} + 0.15313z_{27} \\
&\quad - 0.80685z_{28} - 5.46651z_{29} - 5.68838z_{30} \\
&\quad - 4.95505z_{31} - 4.35143z_{32} - 3.92027z_{33}
\end{aligned}\]

\noindent\textbf{Cutoff $R_0>9.5$.}
\[\begin{aligned}
\eta_{R_0>9.5} &= -15.27348 \\
&\quad - 0.65196z_{1} - 2.16938z_{2} - 1.96413z_{3} \\
&\quad - 2.01021z_{4} - 1.93512z_{5} - 1.97319z_{6} \\
&\quad + 0.35354z_{7} - 1.00743z_{8} - 0.01639z_{9} \\
&\quad - 0.48547z_{10} - 0.85409z_{11} - 1.91836z_{12} \\
&\quad - 5.29715z_{13} - 7.24066z_{14} - 6.69925z_{15} \\
&\quad - 3.50544z_{16} - 3.31924z_{17} + 0.03534z_{18} \\
&\quad - 0.87279z_{19} - 1.29444z_{20} - 0.34201z_{21} \\
&\quad + 0.50765z_{22} - 1.10053z_{23} - 1.15857z_{24} \\
&\quad + 0.26210z_{25} - 0.69111z_{26} + 0.21030z_{27} \\
&\quad - 1.53550z_{28} - 5.61612z_{29} - 6.12680z_{30} \\
&\quad - 4.96990z_{31} - 4.23817z_{32} - 4.39709z_{33}
\end{aligned}\]

\noindent\textbf{Cutoff $R_0>10$.}
\[\begin{aligned}
\eta_{R_0>10} &= -14.91433 \\
&\quad - 0.83540z_{1} - 2.56475z_{2} - 1.64606z_{3} \\
&\quad - 1.97317z_{4} - 1.59399z_{5} - 2.06815z_{6} \\
&\quad + 0.79202z_{7} - 1.04031z_{8} - 0.55983z_{9} \\
&\quad - 1.13908z_{10} - 1.23892z_{11} - 1.20850z_{12} \\
&\quad - 5.75083z_{13} - 6.51096z_{14} - 6.91887z_{15} \\
&\quad - 2.99475z_{16} - 3.47356z_{17} + 0.15006z_{18} \\
&\quad - 0.90222z_{19} - 1.44254z_{20} - 0.61219z_{21} \\
&\quad + 0.40405z_{22} - 0.90885z_{23} - 0.90997z_{24} \\
&\quad + 0.54959z_{25} - 0.74273z_{26} - 0.05348z_{27} \\
&\quad - 1.92937z_{28} - 5.37500z_{29} - 6.00028z_{30} \\
&\quad - 5.44799z_{31} - 4.41956z_{32} - 3.61612z_{33}
\end{aligned}\]

\subsubsection{Vector-borne: validation and equations}

\begingroup\scriptsize\setlength{\LTcapwidth}{\linewidth}\setlength{\tabcolsep}{3pt}

\begin{longtable}{@{}lrrrrrrr@{}}

\caption{Class-specific held-out metrics for Vector-borne with log cubic $h=0.5$.}\label{tab:supp-class-log_cubic_h0p5-vector-borne}\\

\toprule Bracket & Support & Recall & Spec. & Precision & F1 & NPV & OvR AUC \\ \midrule\endfirsthead

\toprule Bracket & Support & Recall & Spec. & Precision & F1 & NPV & OvR AUC \\ \midrule\endhead

 \texttt{$<$1} & 380 & 0.961 & 1.000 & 0.995 & 0.977 & 0.998 & 1.000 \\

 \texttt{1-1.5} & 585 & 0.976 & 0.997 & 0.966 & 0.971 & 0.998 & 1.000 \\

 \texttt{1.5-2} & 629 & 0.952 & 0.997 & 0.965 & 0.958 & 0.996 & 1.000 \\

 \texttt{2-2.5} & 722 & 0.978 & 0.994 & 0.944 & 0.961 & 0.998 & 1.000 \\

 \texttt{2.5-3} & 658 & 0.968 & 0.997 & 0.962 & 0.965 & 0.997 & 1.000 \\

 \texttt{3-3.5} & 569 & 0.940 & 0.997 & 0.966 & 0.953 & 0.995 & 0.999 \\

 \texttt{3.5-4} & 612 & 0.959 & 0.996 & 0.950 & 0.954 & 0.997 & 0.999 \\

 \texttt{4-4.5} & 532 & 0.955 & 0.996 & 0.948 & 0.951 & 0.997 & 0.999 \\

 \texttt{4.5-5} & 402 & 0.918 & 0.997 & 0.951 & 0.934 & 0.996 & 0.999 \\

 \texttt{5-5.5} & 436 & 0.959 & 0.996 & 0.933 & 0.946 & 0.998 & 0.999 \\

 \texttt{5.5-6} & 331 & 0.931 & 0.997 & 0.939 & 0.935 & 0.997 & 0.999 \\

 \texttt{6-6.5} & 278 & 0.917 & 0.996 & 0.882 & 0.899 & 0.997 & 0.999 \\

 \texttt{6.5-7} & 226 & 0.827 & 0.995 & 0.839 & 0.833 & 0.995 & 0.997 \\

 \texttt{7-7.5} & 193 & 0.813 & 0.996 & 0.818 & 0.816 & 0.995 & 0.997 \\

 \texttt{7.5-8} & 170 & 0.806 & 0.997 & 0.846 & 0.825 & 0.996 & 0.997 \\

 \texttt{8-8.5} & 152 & 0.849 & 0.996 & 0.787 & 0.816 & 0.997 & 0.997 \\

 \texttt{8.5-9} & 141 & 0.695 & 0.995 & 0.726 & 0.710 & 0.995 & 0.995 \\

 \texttt{9-9.5} & 136 & 0.721 & 0.995 & 0.695 & 0.708 & 0.995 & 0.990 \\

 \texttt{9.5-10} & 101 & 0.347 & 0.998 & 0.729 & 0.470 & 0.992 & 0.980 \\

 \texttt{$>$10} & 747 & 0.996 & 0.994 & 0.948 & 0.971 & 1.000 & 1.000 \\

\bottomrule\end{longtable}\endgroup

\paragraph{Ordered-threshold feature map and equations.}

All $25$ fitted predictors are standardized using $z_j=(\phi_j(\boldsymbol\theta)-\bar\phi_j)/s_j$, where the values below use the complete training sample for the printed (full-data) equations. Held-out predictions separately recompute means, standard deviations, and fits within each training fold. Here $\phi_j$ is the feature shown in the table. The score for every estimable cutoff is $q_c=\operatorname{logit}^{-1}(\eta_c)$; positive scores increase the fitted exceedance probability, not the numerical reproduction number.

\begingroup\scriptsize\setlength{\LTcapwidth}{\linewidth}

\begin{longtable}{@{}r p{0.44\linewidth} r r@{}}

\caption{Standardization constants and ordered features for Vector-borne (compact log-cubic basis).}\label{tab:supp-features-vector-borne-fine}\\

\toprule Index & Raw feature $\phi_j$ & Training mean & Training SD \\ \midrule\endfirsthead

\toprule Index & Raw feature $\phi_j$ & Training mean & Training SD \\ \midrule\endhead

 1 & $\log(\beta_{hv})$ & -0.477252 & 0.766431 \\

 2 & $\log(\beta_{vh})$ & -0.477251 & 0.766429 \\

 3 & $\log(\gamma_h)$ & -1.284924 & 0.622634 \\

 4 & $\log(\mu_v)$ & -1.284926 & 0.622639 \\

 5 & $\log(m)$ & 1.017660 & 0.622636 \\

 6 & $[\log(\beta_{hv})]^2$ & 0.815113 & 1.543980 \\

 7 & $[\log(\beta_{vh})]^2$ & 0.815108 & 1.543964 \\

 8 & $[\log(\gamma_h)]^2$ & 2.038655 & 1.966211 \\

 9 & $[\log(\mu_v)]^2$ & 2.038666 & 1.966240 \\

 10 & $[\log(m)]^2$ & 1.423258 & 1.026810 \\

 11 & $\log(\beta_{hv})\,\log(\beta_{vh})$ & 0.233862 & 0.815595 \\

 12 & $\log(\beta_{hv})\,\log(\gamma_h)$ & 0.618415 & 1.138219 \\

 13 & $\log(\beta_{hv})\,\log(\mu_v)$ & 0.609387 & 1.125993 \\

 14 & $\log(\beta_{hv})\,\log(m)$ & -0.485875 & 0.957396 \\

 15 & $\log(\beta_{vh})\,\log(\gamma_h)$ & 0.615623 & 1.133774 \\

 16 & $\log(\beta_{vh})\,\log(\mu_v)$ & 0.608023 & 1.116597 \\

 17 & $\log(\beta_{vh})\,\log(m)$ & -0.491223 & 0.970204 \\

 18 & $\log(\gamma_h)\,\log(\mu_v)$ & 1.654114 & 1.201854 \\

 19 & $\log(\gamma_h)\,\log(m)$ & -1.306969 & 1.088962 \\

 20 & $\log(\mu_v)\,\log(m)$ & -1.303662 & 1.091837 \\

 21 & $[\log(\beta_{hv})]^3$ & -1.449315 & 3.799209 \\

 22 & $[\log(\beta_{vh})]^3$ & -1.449298 & 3.799143 \\

 23 & $[\log(\gamma_h)]^3$ & -3.817289 & 5.386489 \\

 24 & $[\log(\mu_v)]^3$ & -3.817333 & 5.386617 \\

 25 & $[\log(m)]^3$ & 2.035700 & 1.837440 \\

\bottomrule\end{longtable}\endgroup

Each displayed score below is a numeric equation in the standardized variables $z_1,\ldots,z_25$. The intercept and slopes are posterior modes from the independent $N(0,2^2)$ coefficient prior under the Gaussian Laplace approximation. A cutoff with only one observed class cannot produce a fitted finite logistic score, so its constant exceedance probability is stated explicitly.

\noindent\textbf{Cutoff $R_0>1$.}
\[\begin{aligned}
\eta_{R_0>1} &= 12.58180 \\
&\quad + 3.76674z_{1} + 2.95946z_{2} - 2.48592z_{3} \\
&\quad - 2.44337z_{4} + 1.98039z_{5} - 3.17523z_{6} \\
&\quad - 4.37511z_{7} + 0.72020z_{8} + 0.53916z_{9} \\
&\quad - 0.22135z_{10} - 4.64952z_{11} + 3.31329z_{12} \\
&\quad + 3.18691z_{13} - 3.66019z_{14} + 2.80152z_{15} \\
&\quad + 3.28911z_{16} - 3.54709z_{17} - 0.31832z_{18} \\
&\quad + 0.72519z_{19} + 0.61941z_{20} + 0.95890z_{21} \\
&\quad + 0.18897z_{22} + 0.83136z_{23} + 0.77037z_{24} \\
&\quad - 0.04318z_{25}
\end{aligned}\]

\noindent\textbf{Cutoff $R_0>1.5$.}
\[\begin{aligned}
\eta_{R_0>1.5} &= 15.27889 \\
&\quad + 6.25165z_{1} + 6.18270z_{2} - 2.82233z_{3} \\
&\quad - 2.77603z_{4} + 3.63616z_{5} - 5.47959z_{6} \\
&\quad - 5.78509z_{7} + 1.54887z_{8} + 1.50449z_{9} \\
&\quad - 0.59649z_{10} - 5.72458z_{11} + 2.83877z_{12} \\
&\quad + 3.01848z_{13} - 4.13840z_{14} + 2.91747z_{15} \\
&\quad + 2.99915z_{16} - 3.88349z_{17} + 1.98993z_{18} \\
&\quad - 0.57864z_{19} - 0.04950z_{20} + 0.02994z_{21} \\
&\quad - 0.33946z_{22} - 0.42331z_{23} - 0.50959z_{24} \\
&\quad - 0.06392z_{25}
\end{aligned}\]

\noindent\textbf{Cutoff $R_0>2$.}
\[\begin{aligned}
\eta_{R_0>2} &= 17.00614 \\
&\quad + 8.67935z_{1} + 8.76509z_{2} - 3.05290z_{3} \\
&\quad - 3.76122z_{4} + 4.08005z_{5} - 4.12764z_{6} \\
&\quad - 3.52422z_{7} + 2.24736z_{8} + 2.25165z_{9} \\
&\quad + 0.01992z_{10} - 4.65834z_{11} + 1.67003z_{12} \\
&\quad + 1.20209z_{13} - 2.15806z_{14} + 1.39975z_{15} \\
&\quad + 0.94855z_{16} - 2.30814z_{17} + 4.03130z_{18} \\
&\quad - 3.05326z_{19} - 2.38547z_{20} + 0.48629z_{21} \\
&\quad + 0.91328z_{22} - 0.53997z_{23} - 0.78527z_{24} \\
&\quad + 0.28541z_{25}
\end{aligned}\]

\noindent\textbf{Cutoff $R_0>2.5$.}
\[\begin{aligned}
\eta_{R_0>2.5} &= 15.68741 \\
&\quad + 9.90765z_{1} + 9.90896z_{2} - 3.84995z_{3} \\
&\quad - 4.12919z_{4} + 3.99169z_{5} - 2.79052z_{6} \\
&\quad - 2.68775z_{7} + 2.91041z_{8} + 2.82918z_{9} \\
&\quad + 1.41956z_{10} - 3.15846z_{11} - 0.50293z_{12} \\
&\quad - 0.61349z_{13} - 0.24181z_{14} - 0.91136z_{15} \\
&\quad - 0.66772z_{16} - 0.36007z_{17} + 5.60346z_{18} \\
&\quad - 4.56298z_{19} - 4.31544z_{20} + 0.36137z_{21} \\
&\quad + 0.10659z_{22} - 1.62050z_{23} - 1.24802z_{24} \\
&\quad + 0.44421z_{25}
\end{aligned}\]

\noindent\textbf{Cutoff $R_0>3$.}
\[\begin{aligned}
\eta_{R_0>3} &= 11.60394 \\
&\quad + 10.06829z_{1} + 10.14600z_{2} - 4.09994z_{3} \\
&\quad - 4.23973z_{4} + 3.66850z_{5} - 2.18898z_{6} \\
&\quad - 2.31998z_{7} + 3.72098z_{8} + 3.29322z_{9} \\
&\quad + 1.90831z_{10} - 1.82450z_{11} - 2.02551z_{12} \\
&\quad - 1.84828z_{13} + 1.82710z_{14} - 2.04640z_{15} \\
&\quad - 1.82625z_{16} + 1.81054z_{17} + 6.38966z_{18} \\
&\quad - 6.11004z_{19} - 5.85917z_{20} - 0.40496z_{21} \\
&\quad - 0.31529z_{22} - 1.24724z_{23} - 1.53952z_{24} \\
&\quad + 0.77412z_{25}
\end{aligned}\]

\noindent\textbf{Cutoff $R_0>3.5$.}
\[\begin{aligned}
\eta_{R_0>3.5} &= 6.52853 \\
&\quad + 10.39218z_{1} + 10.63516z_{2} - 4.40742z_{3} \\
&\quad - 4.35649z_{4} + 3.04706z_{5} - 1.04907z_{6} \\
&\quad - 0.80627z_{7} + 3.38052z_{8} + 3.77094z_{9} \\
&\quad + 2.95186z_{10} - 0.64619z_{11} - 3.19373z_{12} \\
&\quad - 3.36098z_{13} + 2.82258z_{14} - 2.90593z_{15} \\
&\quad - 3.31754z_{16} + 2.83028z_{17} + 7.63324z_{18} \\
&\quad - 6.95510z_{19} - 7.28711z_{20} - 0.75729z_{21} \\
&\quad + 0.14168z_{22} - 1.62862z_{23} - 1.41959z_{24} \\
&\quad + 0.20665z_{25}
\end{aligned}\]

\noindent\textbf{Cutoff $R_0>4$.}
\[\begin{aligned}
\eta_{R_0>4} &= 0.72651 \\
&\quad + 9.35875z_{1} + 10.30398z_{2} - 4.88403z_{3} \\
&\quad - 4.55395z_{4} + 2.80223z_{5} - 0.75019z_{6} \\
&\quad + 0.39495z_{7} + 3.50150z_{8} + 4.23818z_{9} \\
&\quad + 3.76875z_{10} + 0.50222z_{11} - 4.31774z_{12} \\
&\quad - 4.28655z_{13} + 4.21435z_{14} - 4.35555z_{15} \\
&\quad - 4.07792z_{16} + 3.82932z_{17} + 7.47302z_{18} \\
&\quad - 7.46248z_{19} - 7.57760z_{20} - 1.28964z_{21} \\
&\quad - 0.30021z_{22} - 1.77212z_{23} - 1.36856z_{24} \\
&\quad + 0.11029z_{25}
\end{aligned}\]

\noindent\textbf{Cutoff $R_0>4.5$.}
\[\begin{aligned}
\eta_{R_0>4.5} &= -4.55077 \\
&\quad + 8.13805z_{1} + 8.63301z_{2} - 4.24822z_{3} \\
&\quad - 4.24134z_{4} + 2.61597z_{5} + 0.59553z_{6} \\
&\quad + 0.78848z_{7} + 3.79319z_{8} + 3.53863z_{9} \\
&\quad + 3.43943z_{10} + 1.63250z_{11} - 4.80259z_{12} \\
&\quad - 4.64382z_{13} + 4.65808z_{14} - 4.21835z_{15} \\
&\quad - 4.43947z_{16} + 4.66673z_{17} + 7.23096z_{18} \\
&\quad - 7.15784z_{19} - 7.11453z_{20} - 0.88622z_{21} \\
&\quad - 0.44448z_{22} - 1.39946z_{23} - 1.65832z_{24} \\
&\quad + 0.14447z_{25}
\end{aligned}\]

\noindent\textbf{Cutoff $R_0>5$.}
\[\begin{aligned}
\eta_{R_0>5} &= -8.71551 \\
&\quad + 7.09572z_{1} + 7.00706z_{2} - 4.08638z_{3} \\
&\quad - 4.04331z_{4} + 2.24710z_{5} + 1.56299z_{6} \\
&\quad + 1.06802z_{7} + 2.88494z_{8} + 3.39587z_{9} \\
&\quad + 2.51527z_{10} + 2.30389z_{11} - 5.10401z_{12} \\
&\quad - 4.80154z_{13} + 4.68765z_{14} - 4.90796z_{15} \\
&\quad - 4.76251z_{16} + 4.99581z_{17} + 7.47645z_{18} \\
&\quad - 7.29582z_{19} - 6.82012z_{20} - 0.22644z_{21} \\
&\quad - 0.69390z_{22} - 1.47954z_{23} - 1.19946z_{24} \\
&\quad + 0.80324z_{25}
\end{aligned}\]

\noindent\textbf{Cutoff $R_0>5.5$.}
\[\begin{aligned}
\eta_{R_0>5.5} &= -12.28136 \\
&\quad + 5.62889z_{1} + 5.37265z_{2} - 3.92778z_{3} \\
&\quad - 3.85579z_{4} + 1.89579z_{5} + 2.12870z_{6} \\
&\quad + 1.88428z_{7} + 2.39573z_{8} + 2.87979z_{9} \\
&\quad + 2.84328z_{10} + 2.69298z_{11} - 5.30039z_{12} \\
&\quad - 4.93264z_{13} + 5.00603z_{14} - 5.55028z_{15} \\
&\quad - 4.83380z_{16} + 4.72827z_{17} + 6.95539z_{18} \\
&\quad - 6.49361z_{19} - 6.59118z_{20} - 0.09692z_{21} \\
&\quad - 0.07930z_{22} - 1.96167z_{23} - 1.17219z_{24} \\
&\quad + 0.41417z_{25}
\end{aligned}\]

\noindent\textbf{Cutoff $R_0>6$.}
\[\begin{aligned}
\eta_{R_0>6} &= -13.59959 \\
&\quad + 3.82989z_{1} + 4.06960z_{2} - 3.44591z_{3} \\
&\quad - 3.51223z_{4} + 1.62609z_{5} + 1.33979z_{6} \\
&\quad + 1.94796z_{7} + 2.74087z_{8} + 2.81439z_{9} \\
&\quad + 2.58040z_{10} + 2.93364z_{11} - 5.09045z_{12} \\
&\quad - 4.87283z_{13} + 4.71757z_{14} - 4.96498z_{15} \\
&\quad - 5.07826z_{16} + 4.67680z_{17} + 5.67008z_{18} \\
&\quad - 5.82149z_{19} - 5.64395z_{20} - 1.82316z_{21} \\
&\quad - 1.24788z_{22} - 1.27913z_{23} - 1.35359z_{24} \\
&\quad + 0.37821z_{25}
\end{aligned}\]

\noindent\textbf{Cutoff $R_0>6.5$.}
\[\begin{aligned}
\eta_{R_0>6.5} &= -15.00736 \\
&\quad + 2.76260z_{1} + 3.01001z_{2} - 3.01606z_{3} \\
&\quad - 2.82681z_{4} + 1.13160z_{5} + 2.51828z_{6} \\
&\quad + 2.18244z_{7} + 2.36277z_{8} + 2.52883z_{9} \\
&\quad + 1.04583z_{10} + 2.59038z_{11} - 4.58342z_{12} \\
&\quad - 4.41120z_{13} + 5.11685z_{14} - 4.52759z_{15} \\
&\quad - 4.92150z_{16} + 4.72311z_{17} + 5.14182z_{18} \\
&\quad - 5.47117z_{19} - 5.42696z_{20} + 0.02130z_{21} \\
&\quad - 1.92210z_{22} - 0.93898z_{23} - 1.11313z_{24} \\
&\quad + 1.35254z_{25}
\end{aligned}\]

\noindent\textbf{Cutoff $R_0>7$.}
\[\begin{aligned}
\eta_{R_0>7} &= -15.95586 \\
&\quad + 2.31135z_{1} + 1.70059z_{2} - 2.60545z_{3} \\
&\quad - 2.25383z_{4} + 0.97758z_{5} + 3.19711z_{6} \\
&\quad + 2.20819z_{7} + 2.36342z_{8} + 1.89075z_{9} \\
&\quad + 2.02618z_{10} + 2.94423z_{11} - 4.22624z_{12} \\
&\quad - 4.44497z_{13} + 4.52456z_{14} - 4.51124z_{15} \\
&\quad - 4.69668z_{16} + 4.25336z_{17} + 4.92576z_{18} \\
&\quad - 4.79710z_{19} - 5.12492z_{20} + 0.15311z_{21} \\
&\quad - 0.49920z_{22} - 0.69981z_{23} - 1.51620z_{24} \\
&\quad + 0.13797z_{25}
\end{aligned}\]

\noindent\textbf{Cutoff $R_0>7.5$.}
\[\begin{aligned}
\eta_{R_0>7.5} &= -16.67762 \\
&\quad + 1.09427z_{1} + 0.71126z_{2} - 2.27695z_{3} \\
&\quad - 2.13740z_{4} + 0.84337z_{5} + 2.50574z_{6} \\
&\quad + 1.48709z_{7} + 2.10975z_{8} + 2.32529z_{9} \\
&\quad + 1.80949z_{10} + 2.91007z_{11} - 4.42637z_{12} \\
&\quad - 4.19987z_{13} + 4.61296z_{14} - 4.48366z_{15} \\
&\quad - 4.25838z_{16} + 4.63093z_{17} + 4.45977z_{18} \\
&\quad - 4.35161z_{19} - 4.48737z_{20} - 0.43454z_{21} \\
&\quad - 1.72218z_{22} - 1.12403z_{23} - 0.78580z_{24} \\
&\quad + 0.57562z_{25}
\end{aligned}\]

\noindent\textbf{Cutoff $R_0>8$.}
\[\begin{aligned}
\eta_{R_0>8} &= -17.03230 \\
&\quad + 0.30425z_{1} + 0.57243z_{2} - 1.91663z_{3} \\
&\quad - 2.05740z_{4} + 0.72434z_{5} + 1.19596z_{6} \\
&\quad + 1.80072z_{7} + 2.26204z_{8} + 2.09642z_{9} \\
&\quad + 1.66203z_{10} + 3.50711z_{11} - 4.19175z_{12} \\
&\quad - 4.26720z_{13} + 4.45971z_{14} - 4.38427z_{15} \\
&\quad - 4.73326z_{16} + 3.60257z_{17} + 4.07803z_{18} \\
&\quad - 4.16958z_{19} - 4.24462z_{20} - 1.88479z_{21} \\
&\quad - 1.77178z_{22} - 0.80441z_{23} - 0.82323z_{24} \\
&\quad + 0.27710z_{25}
\end{aligned}\]

\noindent\textbf{Cutoff $R_0>8.5$.}
\[\begin{aligned}
\eta_{R_0>8.5} &= -17.46332 \\
&\quad + 0.58275z_{1} + 0.74379z_{2} - 1.46686z_{3} \\
&\quad - 1.39764z_{4} + 0.41310z_{5} + 2.11253z_{6} \\
&\quad + 2.50561z_{7} + 1.97100z_{8} + 1.57056z_{9} \\
&\quad + 1.38951z_{10} + 3.27393z_{11} - 4.01901z_{12} \\
&\quad - 4.72283z_{13} + 3.37198z_{14} - 4.26463z_{15} \\
&\quad - 4.55176z_{16} + 3.25617z_{17} + 4.41219z_{18} \\
&\quad - 4.12698z_{19} - 4.16084z_{20} - 1.61020z_{21} \\
&\quad - 1.79279z_{22} - 0.74456z_{23} - 1.34186z_{24} \\
&\quad + 0.24905z_{25}
\end{aligned}\]

\noindent\textbf{Cutoff $R_0>9$.}
\[\begin{aligned}
\eta_{R_0>9} &= -17.94417 \\
&\quad - 0.12396z_{1} - 0.61072z_{2} - 1.35744z_{3} \\
&\quad - 0.97609z_{4} + 0.63294z_{5} + 2.28245z_{6} \\
&\quad + 1.61557z_{7} + 1.53221z_{8} + 1.63305z_{9} \\
&\quad + 1.08912z_{10} + 3.02232z_{11} - 4.69128z_{12} \\
&\quad - 3.61167z_{13} + 3.35702z_{14} - 4.17016z_{15} \\
&\quad - 3.99977z_{16} + 3.90641z_{17} + 4.25475z_{18} \\
&\quad - 3.41639z_{19} - 3.80761z_{20} - 0.89754z_{21} \\
&\quad - 1.26041z_{22} - 1.21868z_{23} - 1.14719z_{24} \\
&\quad + 0.78086z_{25}
\end{aligned}\]

\noindent\textbf{Cutoff $R_0>9.5$.}
\[\begin{aligned}
\eta_{R_0>9.5} &= -18.12105 \\
&\quad - 0.91466z_{1} - 0.53873z_{2} - 0.70747z_{3} \\
&\quad - 0.83324z_{4} + 0.50301z_{5} + 1.68710z_{6} \\
&\quad + 2.53503z_{7} + 0.76397z_{8} + 1.97325z_{9} \\
&\quad + 1.43028z_{10} + 2.77299z_{11} - 4.09665z_{12} \\
&\quad - 3.92236z_{13} + 3.23646z_{14} - 4.47736z_{15} \\
&\quad - 3.85262z_{16} + 3.19528z_{17} + 4.34650z_{18} \\
&\quad - 4.07710z_{19} - 3.22384z_{20} - 0.78387z_{21} \\
&\quad - 1.02109z_{22} - 1.42438z_{23} - 0.66097z_{24} \\
&\quad + 0.02923z_{25}
\end{aligned}\]

\noindent\textbf{Cutoff $R_0>10$.}
\[\begin{aligned}
\eta_{R_0>10} &= -18.25117 \\
&\quad - 0.36383z_{1} - 0.38071z_{2} - 0.23800z_{3} \\
&\quad - 0.24304z_{4} + 0.29985z_{5} + 2.49099z_{6} \\
&\quad + 2.12711z_{7} + 1.33973z_{8} + 1.34714z_{9} \\
&\quad + 0.52678z_{10} + 3.07155z_{11} - 3.99309z_{12} \\
&\quad - 3.99357z_{13} + 2.91135z_{14} - 3.50006z_{15} \\
&\quad - 4.00640z_{16} + 2.90362z_{17} + 4.34379z_{18} \\
&\quad - 3.66913z_{19} - 3.63895z_{20} - 0.43970z_{21} \\
&\quad - 0.92267z_{22} - 1.05875z_{23} - 1.06672z_{24} \\
&\quad + 0.57019z_{25}
\end{aligned}\]

\section{Methodological scope and robustness considerations}\label{subsec:reviewer-questions}

The following table answers several methodological questions that are important for interpreting REEBLRA correctly.

\begin{longtable}{@{}p{0.22\linewidth}p{0.39\linewidth}p{0.32\linewidth}@{}} 
\toprule
\textbf{Question} & \textbf{REEBLRA response} & \textbf{Residual limitation / recommended check} \\
\midrule
\endfirsthead
\toprule
\textbf{Question} & \textbf{REEBLRA response} & \textbf{Residual limitation / recommended check} \\
\midrule
\endhead
Why not fit $R$ directly? & The method targets the decision surface $R=c$, so it need not reproduce the full response magnitude. Ordered cutoffs recover coarse magnitude. & Direct regression remains a legitimate comparator and may be preferable when accurate numerical $R$ itself is the scientific endpoint; REEBLRA makes no universal superiority claim. \\
Why is $c=1$ emphasized? & Under standard invasion conditions, one is the canonical epidemic threshold. & Other cutoffs are scientifically legitimate; conclusions are cutoff-dependent. The app supports user-defined cutoffs and ordered brackets. \\
Why Bayesian? & Proper priors provide regularization, finite posterior coefficients under separation, and posterior uncertainty. & Results may be prior-sensitive when classes are sparse or separation is extreme. Use MCMC and prior-sensitivity analysis for inferential claims. \\
What about perfect separation? & Bayesian coefficients remain finite under proper priors, unlike ordinary unpenalized MLE. & Report class support and coefficient uncertainty; Firth/bias-reduced logistic regression is a useful sensitivity analysis. \\
What about multicollinearity? & VIF and correlation diagnostics are reported, and shrinkage reduces numerical instability. & Priors do not restore identifiability. Reparameterize, aggregate, redesign ranges, or use structured shrinkage when coefficient interpretation is unstable. \\
Could the feature basis leak mechanistic knowledge? & The implemented bases are generic transformations (linear, log-linear, log-quadratic, compact log-cubic) and do not encode the analytic NGM formula. & If basis choice is tuned using performance, it must be selected within a nested validation procedure; pre-specification is preferable. \\
Why logistic regression rather than a more flexible classifier? & Logistic equations are transparent, differentiable, exportable, and permit direct sign/magnitude comparison with mechanistic gradients. & Trees, kernels, neural networks, or Gaussian processes may fit more complex boundaries but generally sacrifice the compact symbolic equation that is central to REEBLRA. \\
How many simulations are required? & There is no universal number; adequacy depends on parameter dimension, threshold curvature, class support, and feature complexity. LHS improves space filling relative to naive random sampling. & Report class counts and held-out stability; increase simulations when folds are sparse or metrics/coefficients are unstable. Learning-curve analysis is a useful extension. \\
How is overfitting controlled? & Feature construction and standardization are rebuilt within each training fold, and held-out predictions are pooled for cross-validation metrics. & Small samples, many features, or adaptive model selection require stronger regularization and nested resampling. \\
Why use successive binary thresholds rather than a single ordinal model? & Separate threshold equations retain an explicit symbolic rule for every cutoff; cumulative probabilities are then projected to a non-increasing sequence before bracket probabilities are formed. & This is an established threshold-logit strategy, not a new ordinal-statistical family. The present fits are not jointly constrained; compare with proportional/partial proportional odds and report ordering violations.\\
Are high AUC values sufficient? & No. REEBLRA also reports balanced accuracy, sensitivity, specificity, precision, F1, NPV, Cohen's $\kappa$, Brier score, confusion matrices, and class support. & Discrimination can remain high when threshold calibration is imperfect; probability calibration should be examined when probabilities themselves are used for decisions. \\
Do threshold-spanning simulation ranges represent epidemiological priors? & No. They are experimental-design ranges chosen to populate both sides of the decision boundary. & Performance is conditional on the sampled domain and should not be interpreted as population prevalence or real-world parameter frequency. \\
What if analytic $R_0$ is unknown? & The target may be a growth-based numerical \texttt{R0estimate} or another externally computed reproduction-number summary. & Different estimation methods and their underlying assumptions may produce different reproduction-number estimates, potentially affecting threshold classifications and fitted REEBLRA equations. The estimation method should therefore be clearly documented, and robustness across alternative estimators should be assessed when feasible. \\
Can coefficients be interpreted causally? & No. Coefficients describe the fitted threshold surface and may be compared with mechanistic derivatives. & They are not causal effects, elasticities, or proof that a parameter mechanism is identifiable. \\
Does the method extrapolate? & It is intended for interpolation and threshold screening within the sampled domain. & No guarantee is made outside the simulated parameter region, under structural model change, or across different data-generating mechanisms. \\
What about stochastic simulators? & The framework can conceptually accept simulation-derived labels or probabilities. & The current built-in examples are deterministic ODEs; stochastic models require repeated simulations and explicit treatment of label uncertainty. \\
Does REEBLRA replace the NGM? & No. It provides a simulation-calibrated, interpretable surrogate and a structural cross-check. & When a valid analytic next-generation derivation is tractable, it remains the mechanistic reference standard. \\
\bottomrule
\end{longtable}

\section{Recommended reporting checklist}\label{app:checklist}
Studies using REEBLRA should report the data mode; built-in equations or all custom state and auxiliary equations; selected nonnegative states; parameter ranges; LHS size and seed; initial conditions; time horizon and output step; target infectious or derived output; growth-window and conversion rule; handling of non-positive growth; threshold source and cutoff; bracket step; Bayesian priors and inference algorithm; validation scheme; class counts; declared predictor feature basis; standardized and unstandardized fitted-feature equations; raw and row-normalized coefficient summaries; collinearity diagnostics; accuracy, balanced accuracy, AUC, sensitivity/recall, specificity, precision, F1, NPV, Cohen's $\kappa$, Brier score where applicable, one-vs-rest multiclass metrics, and class support; and the domain and limitations of the simulation design. 

Also report whether threshold labels are analytic or estimated from simulated trajectories; the exact estimator and regularization used; pre-projection ordering violations; uncertainty or calibration diagnostics; threshold-neighborhood error; feature count relative to support; and pointwise gradient-cosine statistics when computed.

\end{appendices}

\bibliographystyle{unsrtnat}
\bibliography{REEBLRA_references}

\end{document}